\newtheorem{definition}{Definition}[section]
\newtheorem{proposition}[definition]{Proposition}
\newtheorem{theorem}[definition]{Theorem}
\newtheorem{corollary}[definition]{Corollary}
\newtheorem{remark}[definition]{Remark}
\def\squareforqed{\hbox{\rlap{$\sqcap$}$\sqcup$}}
\def\qed{\ifmmode\squareforqed\else{\unskip\nobreak\hfil
\penalty50\hskip1em\null\nobreak\hfil\squareforqed
\parfillskip=0pt\finalhyphendemerits=0\endgraf}\fi}
\def\endenv{\ifmmode\;\else{\unskip\nobreak\hfil
\penalty50\hskip1em\null\nobreak\hfil\;
\parfillskip=0pt\finalhyphendemerits=0\endgraf}\fi}
\newenvironment{proof}{\noindent \textbf{{Proof~} }}{\qed}
\mathchardef\ordinarycolon\mathcode`\:
\def\vcentcolon{\mathrel{\mathop\ordinarycolon}}
\newcommand{\nc}{\newcommand}
\nc{\rnc}{\renewcommand} 
\nc{\beq}{\begin{equation}}
\nc{\eeq}{{\end{equation}}} 
\nc{\bea}{\begin{eqnarray}}
\nc{\eea}{\end{eqnarray}} 
\nc{\beqa}{\begin{eqnarray}}
\nc{\eeqa}{\end{eqnarray}} 
\nc{\lbar}[1]{\overline{#1}}
\nc{\bra}[1]{\langle#1|} 
\nc{\ket}[1]{|#1\rangle}
\nc{\ketbra}[2]{|#1\rangle\!\langle#2|}
\nc{\braket}[2]{\langle#1|#2\rangle} 
\nc{\proj}[1]{|#1\rangle\!\langle #1 |} 
\nc{\avg}[1]{\langle#1\rangle}
\rnc{\max}{\operatorname{max}} 
\nc{\rank}{\operatorname{rank}}
\nc{\conv}{\operatorname{conv}}
\nc{\smfrac}[2]{\mbox{$\frac{#1}{#2}$}}
\nc{\Tr}{\operatorname{Tr}}
\nc{\ox}{\otimes}
\nc{\dg}{\dagger}
\nc{\dn}{\downarrow} \nc{\cA}{{\cal A}} \nc{\cB}{{\cal B}}
\nc{\cC}{{\cal C}} \nc{\cD}{{\cal D}} \nc{\cE}{{\cal E}}
\nc{\cF}{{\cal F}} \nc{\cG}{{\cal G}} \nc{\cH}{{\cal H}}
\nc{\cI}{{\cal I}} \nc{\cJ}{{\cal J}} \nc{\cK}{{\cal K}}
\nc{\cL}{{\cal L}} \nc{\cM}{{\cal M}} \nc{\cN}{{\cal N}}
\nc{\cO}{{\cal O}} \nc{\cP}{{\cal P}} \nc{\cR}{{\cal R}}
\nc{\cS}{{\cal S}} \nc{\cT}{{\cal T}} \nc{\cU}{{\cal U}}
\nc{\cX}{{\cal X}} \nc{\cW}{{\cal W}} \nc{\cZ}{{\cal Z}}
\nc{\csupp}{{\operatorname{csupp}}}
\nc{\qsupp}{{\operatorname{qsupp}}} \nc{\var}{\operatorname{var}}
\nc{\rar}{\rightarrow} \nc{\lrar}{\longrightarrow}
\nc{\poly}{\operatorname{poly}}
\nc{\polylog}{\operatorname{polylog}}
\nc{\Lip}{\operatorname{Lip}} \nc{\1}{\openone}
\nc{\diag}{\operatorname{diag}}
\def\>{\rangle}
\def\<{\langle}
\def\a{\alpha}
\def\b{\beta}
\def\d{\delta}
\nc{\glneq}{{\raisebox{0.6ex}{$>$}  \hspace*{-1.8ex} \raisebox{-0.6ex}{$<$}}}
\nc{\gleq}{{\raisebox{0.6ex}{$\geq$}\hspace*{-1.8ex} \raisebox{-0.6ex}{$\leq$}}}
\nc{\RR}{{{\mathbb R}}}
\nc{\CC}{{{\mathbb C}}}
\nc{\FF}{{{\mathbb F}}}
\nc{\HH}{{{\mathbb H}}}
\nc{\NN}{{{\mathbb N}}}
\nc{\ZZ}{{{\mathbb Z}}}
\nc{\PP}{{{\mathbb P}}}
\nc{\QQ}{{{\mathbb Q}}}
\nc{\UU}{{{\mathbb U}}}
\nc{\WW}{{{\mathbb W}}}
\nc{\EE}{{{\mathbb E}}}
\rnc{\SS}{{{\mathbb S}}}
\nc{\id}{{\operatorname{id}}}
\nc{\vholder}[1]{\rule{0pt}{#1}}
\nc{\ob}[1]{#1}
\def\beq{\begin {equation}}
\def\eeq{\end {equation}}
\nc{\eq}[1]{Eq.~(\ref{eq:#1})} \nc{\eqs}[2]{Eqs.~(\ref{eq:#1}) and
(\ref{eq:#2})}
\nc{\eqn}[1]{Eq.~(\ref{eqn:#1})}
\nc{\eqns}[2]{Eqs.~(\ref{eqn:#1}) and (\ref{eqn:#2})}
\nc{\region}{\cS\cW}
\newcommand{\zed}{\mathbb Z}
\newcommand{\cee}{\mathbb C}
\newcommand{\Ab}{\text{Ab}}
\begin{document}

\title{An approach to $\boldsymbol{p}$-adic qubits from irreducible representations of $\boldsymbol{SO(3)_p}$}

\author{Ilaria Svampa\,\orcidlink{0000-0002-1389-0319}}
 \email{ilaria.svampa@unicam.it}
  \affiliation{School of Science and Technology, University of Camerino,
              Via Madonna delle Carceri 9, I-62032 Camerino, Italy}
  \affiliation{INFN--Sezione Perugia, Via A. Pascoli, I-06123 Perugia, Italy}
  \affiliation{Departament de F\'isica: Grup d'Informaci\'o Qu\`antica,
              Universitat Aut\`onoma de Barcelona, ES-08193 Bellaterra (Barcelona), Spain}
 
\author{Stefano Mancini\,\orcidlink{0000-0002-3797-3987}}
 \email{stefano.mancini@unicam.it}
 \affiliation{School of Science and Technology, University of Camerino,
              Via Madonna delle Carceri 9, I-62032 Camerino, Italy}
 \affiliation{INFN--Sezione Perugia, Via A. Pascoli, I-06123 Perugia, Italy}
  
\author{Andreas Winter\,\orcidlink{0000-0001-6344-4870}}
 \email{andreas.winter@uab.cat}
 \affiliation{Departament de F\'isica: Grup d'Informaci\'o Qu\`antica,
              Universitat Aut\`onoma de Barcelona, ES-08193 Bellaterra (Barcelona), Spain}
 \affiliation{ICREA---Instituci\'o Catalana de la Recerca i Estudis Avan\c{c}ats, 
              Pg. Llu\'is Companys, 23, ES-08001 Barcelona, Spain}

\date{13 July 2022}

\begin{abstract}
 We introduce the notion of $p$-adic quantum bit ($p$-qubit) 
 in the context of the $p$-adic quantum mechanics initiated and 
 developed after the seminal paper of Volovich [Theor. Math. Phys. \textbf{71}, 574 (1987)]. In this approach, physics takes place in 
 three-dimensional $p$-adic space rather than Euclidean space. Based on 
 our prior work describing the $p$-adic special orthogonal group [Di Martino {\it et al.}, arXiv:2104.06228 [math.NT] (2021)], 
 we outline a program to classify its continuous unitary projective representations, 
 which can be interpreted as a theory of $p$-adic angular momentum. 
 The $p$-adic quantum bit arises from the irreducible representations of 
 minimal nontrivial dimension two, of which we construct examples for all 
 primes $p$.
\end{abstract}

\maketitle


\section{Introduction}
\label{sec:intro}
A qubit at its most basic is the state space of a two-level quantum system, i.e., one with two-dimensional complex Hilbert space $\cH$. This abstracts from all 
physical characteristics such a system might have, be it an electron spin, 
or that of a neutron, the polarization of a photon, a quantum dot, etc. 
Its characteristic is that the symmetry group $SU(2)$ acts transitively on 
all pure states
\[
  \cP(\cH) = \{\rho=\proj{\psi}\,:\,\rho\geq 0,\Tr\rho=1,\,\rho \text{ rank-one}\}, 
\]
which geometrically is the surface of the Bloch sphere, $\RR^3$, and the
conjugation action on the density matrices is equivalent to the action 
of $SO(3)$ on the Bloch sphere. Furthermore, for a spin-$\frac12$ particle, 
this action coincides with the one induced by the spatial rotations, 
and as an irreducible representation (irrep) gives rise to all other irreps 
by tensor powers $U^{\otimes n}$ and their Clebsch-Gordan decomposition 
into irreps. This has been long noticed and has even be taken as a possible 
foundation of quantum mechanics \cite{Weizsaecker}. 

However, this link of the most basic quantum system to the geometry 
and, indeed, symmetry of physical space \cite{Klein} opens up other 
possibilities. For example, why should we accept {\it a priori} that physical space 
is described by the Euclidean $\RR^3$? Indeed, the real field is not the only possible metric completion of the rationals: by Ostrowski's Theorem \cite{Serre}, 
there is a countable family of other completions, those of the $p$-adic 
numbers $\QQ_p$. The fractal geometry of the $p$-adic numbers made them 
interesting for attempts at ``new'' quantum physics with a space-time 
that is decidedly non-Euclidean at very short or very large distances, 
as seems possible in quantum gravity \cite{Vol1st, volovich10}. In these approaches, 
the focus is on the translation symmetry of phase space, which in ordinary 
quantum mechanics leads to the framework of the Heisenberg-Weyl group of 
position and momentum displacements, parametrized by real numbers, but 
which here are displacements by $p$-adic amounts \cite{volovich89}.
 
Here, we consider the rotation symmetry of $\mathbb{Q}^3_p$, encoded 
in the special orthogonal group $SO(3)_p$ in dimension three. 
Recently, we have characterized the geometry of this group \cite{our1st} based on the crucial observation that, up to equivalence, there is a 
unique quadratic form on $\QQ_p^3$ with a compact symmetry group.
Motivated by the analogy with the real case, our program is to find 
its unitary complex projective representations, which should constitute a 
$p$-adic theory of quantum angular momentum. In particular, the irreps of 
dimension two are crucial as $p$-adic analogs of spin-$\frac12$ particles,
which we interpret as $p$-adic qubits. In this spirit, we call any pair 
$(\cH,\psi)$, where $\cH$ is a two-dimensional Hilbert space and $\psi$ 
is a projective irrep on $\cH$, a qubit. In the present paper, we do 
not carry out this entire program, leaving the bulk of it for future work,
but contend ourselves with the construction of the simplest possible 
two-dimensional irreps of $SO(3)_p$ for every prime $p$. 
We emphasize that the angular momentum representations that we are looking for 
map the $p$-adic group $SO(3)_p$ to groups of linear transformations on complex 
vector spaces. Taking this route, moving from $p$-adic numbers to complex numbers, 
is different and alternative to the ones that can be considered more conventional, such as (i) defining $p$-adic Hilbert spaces and working on them 
\cite{Hilbert1,Hilbert2,Hilbert3,Hilbert4,Hilbert5} and (ii) defining $p$-adic Lie groups \cite{Lie}.


\medskip
The structure of the rest of the paper is as follows. 
In Sec. \ref{sec:prelim}, we recall the basic definitions and 
facts about the three-dimensional special orthogonal groups $SO(3)_p$, 
among them the crucial insight that the coefficients of special 
orthogonal matrices are $p$-adic integers. 
In Sec. \ref{sec:quot}, 
we exploit the latter property to define a hierarchy of projected groups
by reducing the matrix entries modulo $p^k$, $k\geq 1$; each of 
them is a finite group, offering an easy way to constructing 
irreps.
In Sec. \ref{pi1SO3}, we turn our attention, in particular, to the 
smallest reduction, $SO(3)_p \mod p$, for odd primes $p>2$; 
we show that each has a homomorphism to some dihedral group that is onto. 
This implies, in particular, that they all have two-dimensional irreps. 
In the subsequent sections, we explicitly compute the groups $SO(3)_3\mod 3$
(Sec. \ref{sec:3-mod-3}) and $SO(3)_5\mod 5$ (Sec. \ref{sec:5-mod-5})
and their representation theory in both cases exhibiting qubits, i.e., 
two-dimensional irreps. Finally, in Sec. \ref{sec:2-mod-2},
we do the same for $SO(3)_2\mod 2$, which turns out to be isomorphic to
$\SS_3$, the permutation group of three elements. We conclude in 
Sec. \ref{sec:conclusion} with an outlook to future work and open 
questions.


\section{Preliminaries on the compact group $\boldmath{SO(3)_p}$}
\label{sec:prelim}
We recall here some basic notions and results drawn from Ref. \cite{our1st}.

\begin{theorem}
For every prime $p$,
there is a unique quadratic form on $\QQ_p^3$, up to 
linear equivalence and scaling, which does not represent zero, 
\beq\label{eq:so3formedef}
Q(\boldsymbol{x})
 =\begin{cases}
    x_1^2-vx_2^2+px_3^2 &\text{if } p>2, \\
    x_1^2+x_2^2+x_3^2   &\text{if } p=2,
  \end{cases}
\eeq 
where
\begin{equation}\label{defv}
v\coloneqq
\begin{cases}
 -1 &\text{if } p\equiv3 \mod4,\\
 -u &\text{if } p\equiv1 \mod4,
\end{cases}
\end{equation}
where $u\in\mathbb{U}$ (the group of $p$-adic units, i.e., the group of 
invertible elements of $\mathbb{Z}_p$) is not a square.
\end{theorem}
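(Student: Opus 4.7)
The plan is to invoke the Hasse--Witt classification of non-degenerate quadratic forms over $\QQ_p$ (as in Serre, \emph{A Course in Arithmetic}, Ch.~IV). Every such form in $n$ variables is classified, up to $\QQ_p$-linear equivalence, by two invariants: the discriminant $d(Q)\in\QQ_p^{\ast}/(\QQ_p^{\ast})^2$ and the Hasse--Witt invariant $\epsilon(Q)\in\{\pm 1\}$. For a ternary form, the isotropy criterion reads: $Q$ represents zero nontrivially if and only if $\epsilon(Q)=(-1,-d(Q))_p$, where $(\cdot,\cdot)_p$ denotes the Hilbert symbol. Hence anisotropic ternary forms are exactly those with $\epsilon(Q)=-(-1,-d(Q))_p$.

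The first step is a uniqueness-up-to-scaling argument. The group $\QQ_p^{\ast}/(\QQ_p^{\ast})^2$ has order $4$ for odd $p$ (with representatives $\{1,u,p,up\}$, $u$ a non-square unit) and order $8$ for $p=2$, so \emph{a priori} there are several anisotropic classes. Tracking how $(d,\epsilon)$ transforms under a scaling $Q\mapsto\lambda Q$, the identity $\lambda^3\equiv\lambda$ modulo squares together with a short computation using bilinearity of the Hilbert symbol yields
\[
 d(\lambda Q)=\lambda\,d(Q), \qquad \epsilon(\lambda Q)=\epsilon(Q)\,(\lambda,-1)_p .
\]
Since the first relation realizes every discriminant class as $\lambda$ varies in $\QQ_p^{\ast}$, and since the isotropy criterion is (as it must be) preserved under scaling, the $\QQ_p^{\ast}$-action is transitive on the anisotropic ternary classes. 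This gives uniqueness up to equivalence and scaling.

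The second step is to exhibit a representative of the stated normalized shape and verify anisotropy directly. For odd $p$, suppose $x_1^2-vx_2^2+px_3^2=0$ with $\boldsymbol{x}\in\ZZ_p^3$ not all zero and normalized so that $\min_i v_p(x_i)=0$. Reducing modulo $p$ gives $x_1^2\equiv v\,x_2^2\pmod p$; the case split defining $v$ is precisely engineered so that $v$ reduces to a non-square in $\mathbb{F}_p^{\ast}$ (using that $-1$ is a square mod $p$ iff $p\equiv 1\pmod 4$), forcing $x_1,x_2\in p\ZZ_p$. Substituting and dividing by $p$ then yields $p\mid x_3$, contradicting the normalization. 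For $p=2$ the descent works modulo $8$: using that a $2$-adic unit squared is $\equiv 1\pmod 8$, a short case analysis shows that $x_1^2+x_2^2+x_3^2\equiv 0\pmod 8$ is impossible once some $x_i$ is a unit, giving anisotropy.

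The main obstacle I anticipate is not the anisotropy verification, which is a standard Hensel-type reduction, but the bookkeeping for the transitivity of scaling on $(d,\epsilon)$, which uses the full arithmetic of the Hilbert symbol. The $p=2$ case is additionally more delicate because $\QQ_2^{\ast}/(\QQ_2^{\ast})^2$ has order $8$ (doubling the number of classes to consider), the descent runs modulo $8$ rather than $p$, and one must separately verify that $x_1^2+x_2^2+x_3^2$ is the genuine normal form rather than something of the $p>2$ shape.
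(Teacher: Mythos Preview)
The paper does not actually prove this theorem: Section~\ref{sec:prelim} opens with ``We recall here some basic notions and results drawn from Ref.~\cite{our1st},'' and the statement is quoted without argument, deferring to that reference and ultimately to Serre~\cite{Serre}. Your proposal is correct and is precisely the standard route one finds in Serre, Ch.~IV: classify ternary forms by $(d,\epsilon)$, use the isotropy criterion $\epsilon=(-1,-d)_p$ to see that anisotropic classes are in bijection with $\QQ_p^{\ast}/(\QQ_p^{\ast})^2$ via the discriminant, and then observe that scaling acts simply transitively on that set because $d(\lambda Q)=\lambda\,d(Q)$ for $n=3$. Your scaling formula $\epsilon(\lambda Q)=\epsilon(Q)(\lambda,-1)_p$ is correct (it follows from $(\lambda,\lambda)_p=(\lambda,-1)_p$ and the fact that the three cross-terms contribute $(\lambda,d(Q)^2)_p=1$), and the direct anisotropy verifications by reduction mod $p$ (odd $p$) and mod $8$ ($p=2$) are the expected ones. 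There is nothing to compare against in the present paper beyond noting that your argument is the one implicitly invoked by the citation.
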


Recalling that every quadratic form $Q$ induces a special orthogonal group as the set of determinant 1 matrices that preserve $Q$, we have the following result.
\begin{theorem} 
There is a unique (up to isomorphisms) compact special orthogonal group on $\QQ_p^3$ for every prime $p$,
\beq \label{SO3p}
SO(3)_p=\{L\in\mathcal{M}_{3\times3}(\QQ_p)\,:\, A=L^TAL,\,\det L=1\}
\eeq 
endowed with the common matrix product, where $A$ is the matrix representation 
in the canonical basis of $\QQ_p^3$ of the preserved quadratic form $Q$,
\beq \label{eq:matAdefSO3p}
A=\begin{cases}
    \diag(1,-v,p) &\text{if } p>2,\\
    \1            &\text{if } p=2.
  \end{cases}
\eeq
\end{theorem}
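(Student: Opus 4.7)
The plan is to establish two facts: (a) the set $SO(3)_p$ defined by Eq.~(\ref{SO3p}) is compact, and (b) any other nondegenerate ternary quadratic form on $\QQ_p^3$ yields a non-compact $SO$ group, so the one above is unique up to isomorphism.

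For (a), $SO(3)_p$ is cut out from $M_{3\times 3}(\QQ_p)$ by the polynomial equations $L^T A L = A$ and $\det L = 1$, hence is closed in the $p$-adic topology. The only nontrivial point is to bound the matrix entries uniformly. The key step is to derive from anisotropy of $Q$ a quantitative estimate
\[
  |Q(x)|_p \;\geq\; c_Q\, \|x\|_p^2
  \qquad \forall\, x \in \QQ_p^3,
\]
where $\|x\|_p := \max_i |x_i|_p$ and $c_Q > 0$. Both sides scale as $|\lambda|_p^2$ under $x \mapsto \lambda x$, so it suffices to verify this on the ``unit sphere'' $\{x : \|x\|_p = 1\} = \ZZ_p^3 \setminus (p\ZZ_p)^3$, which is compact; $|Q(x)|_p$ is continuous on it and, by the previous theorem, nowhere zero, so it attains a positive minimum. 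Applying the estimate to the columns $L e_j$ of $L \in SO(3)_p$, whose $Q$-values are fixed by $Q(L e_j) = A_{jj}$, bounds $\max_{i,j}|L_{ij}|_p$ by a constant independent of $L$, proving compactness.

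For (b), linearly equivalent or scalar-multiple forms produce conjugate, hence isomorphic, $SO$ groups, so by the previous theorem it suffices to show that any isotropic nondegenerate form $Q'$ has a non-compact $SO(Q')$. An isotropic vector spans a hyperbolic plane $H \subset \QQ_p^3$ on which $Q'$ is equivalent to $xy$; in coordinates adapted to the splitting $H \oplus H^\perp$ the one-parameter subgroup $\operatorname{diag}(t, t^{-1}, 1)$ with $t \in \QQ_p^\times$ lies in $SO(Q')$, and $\QQ_p^\times$ is non-compact.

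The main obstacle is the estimate $|Q(x)|_p \geq c_Q \|x\|_p^2$, which is where anisotropy enters quantitatively; everything else is essentially a repackaging of the previous theorem with standard $p$-adic topology. Conceptually this is just the statement that the anisotropic projective quadric is compact, but a clean argument must accommodate the discrete range of $|\cdot|_p$ and handle the cases $p = 2$ and odd $p$ uniformly through their common anisotropy property rather than via case-by-case computations on the explicit forms.
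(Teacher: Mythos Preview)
Your argument is correct in both parts. However, this theorem is not proved in the present paper: it is one of several results recalled without proof from Ref.~\cite{our1st} in the preliminaries section, so there is no proof here against which to compare.

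It is worth noting that your compactness argument in (a), if pushed slightly further with the explicit forms, actually recovers the sharper Theorem~\ref{theo:SOinteri} that the paper states next (also without proof), namely $SO(3)_p \subset SL(3,\ZZ_p)$. For odd $p$ the anisotropy of $x_1^2 - vx_2^2$ modulo $p$ forces $c_Q = p^{-1}$ on the unit sphere, and since $|A_{jj}|_p \in \{1,1,p^{-1}\}$ your column bound $\|Le_j\|_p^2 \leq |A_{jj}|_p/c_Q$ yields $\|Le_j\|_p \leq 1$ once one remembers that $\|\cdot\|_p$ takes values in $p^{\ZZ}$; the case $p=2$ is similar. Your part (b), reducing non-compactness for isotropic $Q'$ to the hyperbolic-plane splitting and the unbounded torus $\diag(t,t^{-1},1)$, is the standard route and is fine.
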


Since $SO(3)_p$  is a topological group, it is natural that the representations we look for are continuous [with respect to the $p$-adic metric on the domain and the Lie group metric of the complex $SU(d)$]. All of the representations we find will turn out to be continuous for trivial reasons.
\\ \par Another crucial result proved in Ref. \cite{our1st} is the following.
\begin{theorem}
\label{theo:SOinteri}
All special orthogonal matrices have $p$-adic integer entries,
\[
  SO(3)_p\subset SL(3,\zed_p).
\]
\end{theorem}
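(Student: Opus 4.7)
The plan is to exploit the anisotropy of the quadratic form $Q$ preserved by $SO(3)_p$: anisotropy forces a tight relation between $v_p(Q(x))$ and the $p$-adic valuations of the coordinates of $x$, strong enough to prevent any entry of a matrix $L \in SO(3)_p$ from having negative valuation.

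First I would unpack the defining relation $L^T A L = A$ column by column. Writing $L = (c_1 \mid c_2 \mid c_3)$, one gets $c_i^T A c_j = A_{ij}$, and in particular each column satisfies $Q(c_j) = A_{jj}$. For $p>2$ this reads $Q(c_1)=1$, $Q(c_2)=-v$, $Q(c_3)=p$; for $p=2$ all three equal $1$. Hence $v_p(Q(c_j)) \in \{0,1\}$ in every case.

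The main technical step is a valuation identity for $Q$. For $p>2$, the three monomials $x_1^2$, $-v x_2^2$, $p x_3^2$ have $p$-adic valuations $2v_p(x_1)$, $2v_p(x_2)$, $2v_p(x_3)+1$; the first two are even and the third is odd, so if one of them is strictly smaller than the others it equals $v_p(Q(x))$ outright. The only delicate case is $v_p(x_1)=v_p(x_2)=k$: writing $x_i=p^k u_i$ with $u_i$ units, the first two monomials sum to $p^{2k}(u_1^2-v u_2^2)$, and since $v$ is by construction a non-square modulo $p$ (this is exactly the role of the definition in Eq.~(\ref{defv})), $u_1^2-v u_2^2$ is itself a unit, so the combined valuation is still $2k$. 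Collecting cases yields
\[
  v_p(Q(x)) = \min\bigl(2v_p(x_1),\, 2v_p(x_2),\, 2v_p(x_3)+1\bigr),
\]
whence $v_p(x_1), v_p(x_2) \geq \tfrac12 v_p(Q(x))$ and $v_p(x_3) \geq \tfrac12(v_p(Q(x))-1)$. Applied to the columns of $L$ this finishes $p>2$: for $c_1,c_2$ the bounds are nonnegative, and for $c_3$ they give $v_p(c_{3,1}), v_p(c_{3,2}) \geq \tfrac12$, forcing these integer valuations to be $\geq 1$, while $v_p(c_{3,3}) \geq 0$. Together with $\det L = 1$ this gives $L \in SL(3,\zed_p)$.

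The $p=2$ case runs on the same plan: setting $k=\min_i v_2(x_i)$ and $x_i=2^k y_i$ with at least one $y_i$ a $2$-adic unit, one checks that $v_2(y_1^2+y_2^2+y_3^2) \in \{0,1\}$ by a short case analysis on how many $y_i$ are units, using that any $2$-adic unit squares to something in $1+8\zed_2$. This gives $v_2(Q(x)) \leq 2k+1$, hence $v_2(c_{j,i}) \geq 0$ for every column. The main obstacle is the case analysis behind this identity: it relies essentially on the non-squareness of $v$ modulo $p$ (respectively on the arithmetic of $2$-adic unit squares modulo $8$), which is precisely what makes $Q$ anisotropic in the first place; without it, one could exhibit vectors $x$ with $v_p(Q(x))$ much larger than $2\min_i v_p(x_i)$ and the bound would collapse.
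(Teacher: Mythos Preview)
The paper does not actually prove this theorem here; it is recalled from Ref.~\cite{our1st} in the preliminaries section, so there is no in-paper argument to compare against. Your proof is correct. The valuation identity $v_p(Q(x)) = \min\bigl(2v_p(x_1),\,2v_p(x_2),\,2v_p(x_3)+1\bigr)$ for $p>2$ holds for exactly the reason you give (parity separates the third term from the first two, and non-squareness of $v$ modulo $p$ blocks cancellation between the first two), and applying it to each column with $Q(c_j)\in\{1,-v,p\}$ pins all nine entries into $\ZZ_p$. Your $p=2$ case analysis is likewise sound: with at least one $y_i$ a $2$-adic unit, $y_1^2+y_2^2+y_3^2$ is congruent to $1$, $2$, or $3$ modulo $4$ according to how many $y_i$ are units, so its $2$-adic valuation is at most $1$, and hence $\min_i v_2(x_i)\geq 0$ whenever $Q(x)$ is a unit.
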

 
Next we state a fundamental fact about the geometry of $SO(3)_p$.
\begin{theorem}
The elements of $SO(3)_p$ are rotations, i.e., they always have an eigenvalue 
$1$, and the corresponding eigenspace is the rotation axis.
\end{theorem}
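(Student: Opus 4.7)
The plan is to reduce the entire statement to showing that $\det(L-I)=0$ for every $L\in SO(3)_p$. Once this is established, 1 is a root of the characteristic polynomial and hence an eigenvalue; the associated eigenspace is then, by definition of a rotation, the rotation axis. The statement that the eigenspace ``is the rotation axis'' is really a labeling: for $L\neq I$ the eigenspace is a line (generically), and identifying it as the axis is immediate from the fact that $L$ fixes it pointwise.

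The first and main step is the determinant computation. Starting from the defining relation $L^T A L=A$, one rewrites $AL=(L^T)^{-1}A$ and chains
\[
 \det(A)\det(L-I)=\det(AL-A)=\det\!\bigl((L^T)^{-1}A-A\bigr)=\det(L^T)^{-1}\det\!\bigl((I-L^T)A\bigr)=\det(I-L)\det(A),
\]
using $\det L=1$. Since we are in dimension three (odd), $\det(I-L)=-\det(L-I)$, so dividing by $\det(A)\neq 0$ yields $\det(L-I)=-\det(L-I)$. Because $\QQ_p$ has characteristic zero, this forces $\det(L-I)=0$. The argument is uniform in $p$: nothing about the particular shape of $A$ is used beyond invertibility, so the cases $p=2$ (with $A=\1$) and $p>2$ (with $A=\diag(1,-v,p)$) are handled at once.

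An equivalent, perhaps cleaner, route is to read off the coefficients of the characteristic polynomial $p(t)=t^3-\mathrm{tr}(L)\,t^2+\mathrm{tr}(L^{-1})\,t-\det(L)$ and note that $L^{-1}=A^{-1}L^T A$, which is conjugate to $L^T$ and hence has the same trace as $L$. Together with $\det L=1$, this gives $p(1)=1-\mathrm{tr}(L)+\mathrm{tr}(L)-1=0$. Either route establishes the existence of the eigenvalue 1.

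For the second half of the statement, I would simply remark that the set of vectors $x\in\QQ_p^3$ with $Lx=x$ is, by definition, what one calls the rotation axis: $L$ acts as the identity on this subspace. The only thing worth checking is that for $L\neq I$ this eigenspace is truly one-dimensional, so that ``axis'' is geometrically justified. This can be argued by passing to the $Q$-orthogonal complement of a fixed vector (which is non-degenerate generically) and observing that the restriction of $L$ to the complement is a $Q$-preserving map of determinant 1 on a two-dimensional space; if it had another fixed direction, a short dimension count forces $L=I$. I do not expect any serious obstacle: the whole argument is a transcription of the classical real proof, with the only mildly delicate point being to keep track that $\det A\neq 0$ and that $\mathrm{char}(\QQ_p)=0$, so that dividing by $2$ and by $\det A$ is legitimate regardless of the prime $p$.
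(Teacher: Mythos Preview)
The paper does not actually prove this theorem: it is one of several results in Section~\ref{sec:prelim} recalled without proof from the companion paper \cite{our1st}. Your argument is the classical Euler argument transplanted to $\QQ_p$, and it is correct. The determinant manipulation (or the equivalent trace argument via $L^{-1}=A^{-1}L^TA$) goes through verbatim because $\det A\neq 0$ and $\operatorname{char}\QQ_p=0$.

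One small tightening: you say the $Q$-orthogonal complement of a fixed vector is ``non-degenerate generically.'' In fact no genericity is needed here. The form $Q$ was chosen precisely so that it does not represent zero (Theorem in Eq.~\eqref{eq:so3formedef}), so every nonzero vector $n$ is anisotropic, $Q(n)\neq 0$, and hence $\QQ_p^3=\langle n\rangle\oplus n^\perp$ with $Q|_{n^\perp}$ non-degenerate, always. With this, your dimension argument is clean: if the $1$-eigenspace were at least two-dimensional, its $Q$-orthogonal complement would be a line on which $L$ acts by a scalar, and $\det L=1$ forces that scalar to be $1$, hence $L=I$. So for $L\neq I$ the fixed space is exactly a line, justifying the word ``axis.''
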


We recall the parameterization of matrices of $SO(3)_p$. 
\begin{theorem}
\label{thm:rot-param}
A rotation of $SO(3)_p$ around $\boldsymbol{n}\in\QQ_p^3\backslash\{\boldsymbol{0}\}$ takes the following matrix form with respect to an 
orthogonal basis $(\boldsymbol{g},\boldsymbol{h},\boldsymbol{n})$ of $\QQ_p^3$:
\beq\label{eq:rotazgeneric}
\mathcal{R}_{\boldsymbol{n}}(\sigma)=\begin{pmatrix}
\frac{1-\alpha\sigma^2}{1+\alpha\sigma^2} & -\frac{2\alpha\sigma}{1+\alpha\sigma^2}&0\\ \frac{2\sigma}{1+\alpha\sigma^2} & \frac{1-\alpha\sigma^2}{1+\alpha\sigma^2}&0\\0&0&1
\end{pmatrix}
\eeq 
with $\sigma\in\QQ_p\cup\{\infty\}$ and $\alpha=Q(\boldsymbol{h})/Q(\boldsymbol{g})$.
The set $SO(3)_{p,\boldsymbol{n}}$ of rotations around a given 
$\boldsymbol{n}\in\QQ_p^3\backslash\{\boldsymbol{0}\}$ forms an Abelian subgroup of $SO(3)_p$
\end{theorem}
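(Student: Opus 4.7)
The plan is to exploit the preceding theorem, which guarantees that every rotation $R\in SO(3)_p$ has $\boldsymbol{n}$ as an eigenvector with eigenvalue $1$, to reduce the $3\times 3$ problem to a $2\times 2$ one on the orthogonal complement $\boldsymbol{n}^\perp = \mathrm{span}(\boldsymbol{g},\boldsymbol{h})$. In the basis $(\boldsymbol{g},\boldsymbol{h},\boldsymbol{n})$, the matrix $A$ of the quadratic form becomes diagonal, $\diag(Q(\boldsymbol{g}),Q(\boldsymbol{h}),Q(\boldsymbol{n}))$, and $R$ acquires the block form
\[
R=\begin{pmatrix} M & 0\\ 0 & 1\end{pmatrix},\qquad M\in\mathcal{M}_{2\times 2}(\QQ_p),
\]
where $M$ preserves $\diag(Q(\boldsymbol{g}),Q(\boldsymbol{h}))$ and has $\det M=1$. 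Dividing through by $Q(\boldsymbol{g})$, writing $\alpha=Q(\boldsymbol{h})/Q(\boldsymbol{g})$ and setting $M=\begin{pmatrix} a & b\\ c & d\end{pmatrix}$, the equations $M^T\diag(1,\alpha)M=\diag(1,\alpha)$ and $\det M=1$ force (by a short algebraic manipulation) $d=a$, $b=-\alpha c$, and $a^2+\alpha c^2=1$. So the task reduces to parameterizing the $p$-adic conic $a^2+\alpha c^2=1$.

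Next I would introduce the Cayley/stereographic parameterization
\[
a=\frac{1-\alpha\sigma^2}{1+\alpha\sigma^2},\qquad c=\frac{2\sigma}{1+\alpha\sigma^2},
\]
and verify by direct substitution that $a^2+\alpha c^2=1$; plugging into the $(a,-\alpha c;c,a)$ form recovers \eqref{eq:rotazgeneric}. The crucial technical point here is that the denominator $1+\alpha\sigma^2$ is nowhere zero on $\QQ_p$: since $Q$ is anisotropic on $\QQ_p^3$ (Theorem on uniqueness of $Q$), its restriction to the two-dimensional subspace $\boldsymbol{n}^\perp$ is anisotropic too, hence $-\alpha=-Q(\boldsymbol{h})/Q(\boldsymbol{g})$ is not a square in $\QQ_p$. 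This ensures the map $\sigma\mapsto (a,c)$ from $\QQ_p$ into the conic is well-defined; extending it to $\sigma=\infty$ with the value $(a,c)=(-1,0)$ closes up the image, and a standard inverse (solve $\sigma=c/(1+a)$ from the formulas) gives a bijection between $\QQ_p\cup\{\infty\}$ and the solution set of $a^2+\alpha c^2=1$. Thus every rotation about $\boldsymbol{n}$ is captured exactly once.

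Finally, for the abelian subgroup claim, the cleanest route is to observe that the matrices of the form $(a,-\alpha c;c,a)$ with $a^2+\alpha c^2=1$ are closed under multiplication: composing two of them yields
\[
\begin{pmatrix} a_1 & -\alpha c_1\\ c_1 & a_1\end{pmatrix}\begin{pmatrix} a_2 & -\alpha c_2\\ c_2 & a_2\end{pmatrix}=\begin{pmatrix} a_1a_2-\alpha c_1c_2 & -\alpha(a_1c_2+a_2c_1)\\ a_1c_2+a_2c_1 & a_1a_2-\alpha c_1c_2\end{pmatrix},
\]
which is manifestly symmetric in the two factors. Equivalently, in the $\sigma$-coordinate, the composition law becomes $\sigma_1\ast\sigma_2=(\sigma_1+\sigma_2)/(1-\alpha\sigma_1\sigma_2)$, again symmetric. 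This gives closure under the group operation and commutativity simultaneously; inverses correspond to $\sigma\mapsto -\sigma$, so $SO(3)_{p,\boldsymbol{n}}$ is an abelian subgroup.

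The main obstacle is really just the non-vanishing of $1+\alpha\sigma^2$, which without the anisotropy input could fail over a generic field; everything else is linear algebra and a single algebraic identity. A secondary subtlety is the $\sigma=\infty$ point, which corresponds to the $180^\circ$ rotation and must be added by hand to make the parameterization exhaust the whole subgroup.
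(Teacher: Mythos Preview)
The paper does not actually prove this theorem: it is stated in Section~\ref{sec:prelim} as one of several results ``drawn from Ref.~\cite{our1st}'' and is quoted without proof. So there is no proof in the present paper to compare against.

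Your argument is correct and is precisely the standard one. The reduction to the $2\times 2$ block via the eigenvalue-$1$ eigenspace, the derivation $d=a$, $b=-\alpha c$, $a^2+\alpha c^2=1$ from $M^T\diag(1,\alpha)M=\diag(1,\alpha)$ together with $\det M=1$, and the stereographic parameterization of the resulting conic are all sound. Your identification of the key technical point---that $1+\alpha\sigma^2\neq 0$ because the restriction of the anisotropic form $Q$ to $\boldsymbol{n}^\perp$ remains anisotropic, hence $-\alpha$ is a non-square---is exactly the right observation, and the explicit multiplication of two $(a,-\alpha c;c,a)$ blocks to exhibit commutativity and closure is clean. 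This is essentially how the result is established in the cited reference~\cite{our1st} as well.
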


Last, in parallel to the real orthogonal case, we found that only certain main angle decompositions of rotations around the reference axes hold for $SO(3)_p$.
\begin{theorem}\label{theo:Cardano}
For an odd prime $p>2$, every $M\in SO(3)_p$ can be written as any of the 
Cardano type compositions,
\beq \notag 
\mathcal{R}_x\mathcal{R}_y\mathcal{R}_z,\ \
\mathcal{R}_z\mathcal{R}_y\mathcal{R}_x,\ \ 
\mathcal{R}_z\mathcal{R}_x\mathcal{R}_y,\ \  
\mathcal{R}_y\mathcal{R}_x\mathcal{R}_z,
\eeq 
respectively, with certain rotation parameters $\sigma,\tau,\omega\in\QQ_p\cup\{\infty\}$.

For a prime $p\equiv1\mod4$, every $M\in SO(3)_p$ can also be written as any 
of the Cardano-type compositions,
\beq\notag 
\mathcal{R}_x\mathcal{R}_z\mathcal{R}_y,\ \ 
\mathcal{R}_y\mathcal{R}_z\mathcal{R}_x,
\eeq 
respectively, with certain rotation parameters $\sigma,\tau,\omega\in\QQ_p\cup\{\infty\}$.
\end{theorem}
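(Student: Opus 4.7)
The approach is a direct, ordering-by-ordering verification. For each of the six possible Cardano orderings $M = \mathcal{R}_{i_1}(\sigma)\mathcal{R}_{i_2}(\tau)\mathcal{R}_{i_3}(\omega)$, I would expand the product of the three Cayley-parameterized rotations via Thm \ref{thm:rot-param}, equate the result entry-by-entry with a generic $M \in SO(3)_p$, and solve the resulting rational system for $(\sigma,\tau,\omega) \in (\QQ_p \cup \{\infty\})^3$. The natural order of the solve is to first exploit the identity $M\hat{e}_{i_3} = \mathcal{R}_{i_1}(\sigma)\mathcal{R}_{i_2}(\tau)\hat{e}_{i_3}$, valid because the innermost rotation fixes its own axis, and extract $\tau$ and $\sigma$ from the image of a single coordinate vector. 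The three resulting scalar equations are subject to the compatibility condition $Q(M\hat{e}_{i_3}) = Q(\hat{e}_{i_3})$, which is automatic from $M \in SO(3)_p$.

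Concretely, one component reduces to a quadratic of the form $\alpha_{i_2}\tau^2 = \gamma(M)$, with $\alpha_{i_2} \in \{-v,\,-p/v,\,p\}$ the Cayley constant for the middle axis, read off from the ratios of $Q$-values of the adapted orthogonal basis via Eq.~(\ref{eq:so3formedef}), and $\gamma$ an explicit rational function of the entries of $M$. Once $\tau$ is known, $\sigma$ is determined by an analogous quadratic, and finally $\omega$ is recovered from the image of a second basis vector. Thm \ref{theo:SOinteri} together with the anisotropy of $Q$ guarantees that no denominator vanishes along the way; infinite parameter values correspond to the standard half-turn limits of the Cayley parameterization and are handled automatically by continuity of the rational maps.

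The main obstacle is showing that these two quadratics always have solutions in $\QQ_p$, equivalently that $\gamma(M)/\alpha_{i_2}$ and the analogous expression for $\sigma$ lie in $(\QQ_p^*)^2$. Using the defining relation $L^T A L = A$ from Eq.~(\ref{SO3p}), these expressions can be rewritten as ratios of $Q$-values and $2 \times 2$ minors of $M$, reducing the square-class computation to a short finite list of Hilbert-symbol evaluations for each ordering. For the four orderings available at every odd $p$, anisotropy of $Q$ forces the answer into $(\QQ_p^*)^2$ unconditionally; for the two additional orderings whose middle axis is $\hat{z}$, the residual factor in the square class matches that of $-1$, which is a $p$-adic square exactly when $p \equiv 1 \mod 4$. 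This is where the dichotomy in the statement emerges, and it is the one place in the computation where the arithmetic of $p$ intervenes nontrivially.
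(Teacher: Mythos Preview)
The paper does not prove this theorem: Theorem~\ref{theo:Cardano} is stated in the preliminaries section (Sec.~\ref{sec:prelim}) as a result ``drawn from Ref.~\cite{our1st}'' and is invoked without argument. There is therefore no proof in this paper to compare your proposal against.

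As for the proposal itself, the overall strategy is the standard one and is reasonable: fix the innermost axis, read off the middle parameter from a quadratic in the entries of $M\hat e_{i_3}$, then back-solve the remaining two parameters. Your identification of the crux---that solvability reduces to a square-class question for the middle-axis Cayley constant, and that the $z$-middle case picks up the square class of $-1$, whence the $p\equiv 1\bmod 4$ dichotomy---is the right diagnosis. However, what you have written is an outline, not a proof: the two assertions that carry all the weight (``anisotropy of $Q$ forces the answer into $(\QQ_p^*)^2$ unconditionally'' for four orderings, and ``the residual factor matches the square class of $-1$'' for the other two) are stated without any computation. Likewise, the claim that ``Theorem~\ref{theo:SOinteri} together with anisotropy of $Q$ guarantees that no denominator vanishes'' is too quick---integrality of the entries and anisotropy of $Q$ do not by themselves rule out degenerate configurations in the intermediate rational expressions, and the $\sigma=\infty$ branch needs to be treated explicitly rather than waved through ``by continuity.'' To turn this into a proof you would need to actually carry out the six entry-by-entry expansions (or at least one representative for each of the two behaviors) and exhibit the Hilbert-symbol or square-class computation in each case.
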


Note that none of the other possible forms of main angle decomposition, familiar
from the real Euclidean case, hold universally for $SO(3)_p$. 
In particular, no Euler angle decomposition exists for primes $p>2$, and none of the 
Euler and Cardano decompositions exist when $p=2$.


\section{Quotienting the group $\boldmath{SO(3)_p}$}
\label{sec:quot}
A fundamental tool to understand the algebraic aspects of $SO(3)_p$ is 
quotienting modulo $p^k$ the group \cite{pigliapochi-msc}: this is possible because the matrix entries are
$p$-adic integers (Theorem \ref{theo:SOinteri}).
These quotients are finite groups and, thus, much easier to study while generating $SO(3)_p$ as a projective limit. 
For us, these projections are crucial as their representations induce 
representations of the whole $SO(3)_p$ and, thus, allow us to find 
irreps easily. 

We can write a $p$-adic integer $a\in\zed_p$ as a sequence 
$a=(a_1,a_2,a_3,\ldots)$ with $a_k\in \zed/p^k$ such that $a_{k+1}\equiv a_k \mod p^k$ 
for all $k>0$, with component-wise addition and multiplication in the respective rings \cite{Serre}. 
This means that the functions 
\begin{equation}\begin{split}
\pi_k':\zed_p &\rightarrow \zed/p^k\\
        a=(a_i)_{i=1}^{\infty} &\mapsto     a_k
\end{split}\end{equation}
are ring homomorphisms,
\begin{align}
  \pi_k'(a+b) &=\pi_k'((a_i)_i+(b_j)_j) =\pi_k'((a_i+b_i)_i) =a_k+b_k =\pi_k'(a)+\pi_k'(b),\\
  \pi_k'(ab)  &=\pi_k'((a_i)_i(b_j)_j)  =\pi_k'((a_ib_i)_i)  =a_kb_k  =\pi_k'(a)\pi_k'(b).
\end{align}
This can be extended to multiplicative groups of matrices over $\zed_p$: the 
matrix product is defined through sums and products of elements, for which $\pi_k'$
are homomorphisms; then the maps
\beq
 \pi_k \left(M\right) = \pi_k \left((m_{ij})_{ij}\right):= \left(\pi_k'\left(m_{ij}\right)\right)_{ij}
\eeq
are group homomorphisms on any group contained in $\mathcal{M}_{n\times n}(\zed_p)$ to 
some other group contained in $\mathcal{M}_{n\times n}(\zed/p^k)$.
\begin{remark}
\textup{Note that all the maps $\pi_k$ are continuous with respect to 
 the $p$-adic metric on $SO(3)_p$ and the discrete topology on 
 the range, which is a finite set.}
\end{remark}

Theorem \ref{theo:SOinteri} means that the projection maps $\pi_k$ are well defined on $SO(3)_p$,
\begin{align}
  \pi_k: SO(3)_p         &\rightarrow SO(3)_p \mod p^k \subset SL(3,\ZZ/p^k), \nonumber\\
         L=(a_{ij})_{ij} &\mapsto     (a_{ij} \mod p^k)_{ij}
\end{align}
since all matrix entries of eligible $L$ are $p$-adic integers. 
This allows us to build a projective sequence of quotient groups,
\beq
  G_{p^k}:= \pi_k(SO(3)_p) = SO(3)_p\mod p^k. 
\eeq
These are finite groups as anticipated, being subgroups of $SL(3,\ZZ/p^k)$. 
By $G_{\boldsymbol{n},p^k}$ we will denote the Abelian subgroup of $G_{p^k}$ 
of $\pi_k$-projected rotations $\mathcal{R}_{\boldsymbol{n}}(\sigma) \in SO(3)_p$ around a 
fixed $\boldsymbol{n}\in \QQ_p^3\backslash\{\boldsymbol{0}\}$,
\beq
  G_{\boldsymbol{n},p^k}:= \pi_k(SO(3)_{p,\boldsymbol{n}}). 
\eeq
The Cardano decompositions (Theorem \ref{theo:Cardano}) hold in $G_{p^k}$ too.

\medskip
It is possible to rewrite the parameterization of the subgroups $SO(3)_{p,\boldsymbol{n}}$ 
of rotations around a given $\boldsymbol{n}\in\QQ_p^3\backslash\{\boldsymbol{0}\}$ (Theorem \ref{thm:rot-param}) only in terms 
of $p$-adic integers \cite{our1st}. To wit, if $\kappa$ denotes the determinant of the 
quadratic form preserved by $p$-adic planar rotations, in the plane orthogonal to the 
rotation axis, we have
\beq\notag
SO(2)_p^\kappa=\left\{\mathcal{R}_\kappa(\sigma)\,:\,\sigma\in\ZZ_p\right\}\cup \left\{-\mathcal{R}_\kappa\left(\sigma\right)\,:\,\sigma\in p\ZZ_p\right\}
\eeq 
for $p>2$ and $\kappa=-v$ (rotations around the $z$-axis), 
or $p=2$ and $\kappa=1,\pm5$ ($\kappa=1$ corresponds to rotations around $x$, $y$ and $z$ axes), 
while
\beq\notag
SO(2)_p^\kappa=\left\{\mathcal{R}_\kappa(\sigma)\,:\,\sigma\in\ZZ_p\right\}\cup \left\{-\cR_\kappa\left(\sigma\right)\,:\,\sigma\in \ZZ_p\right\}
\eeq 
for $p>2$ and $\kappa=p,up$ (rotations around $x$ and $y$ axes),
or $p=2$ and $\kappa=\pm2,\pm10$. 
We also recall that $\mathcal{R}_{\boldsymbol{n}}\left(-\frac{1}{\alpha\sigma}\right)
 =\mathcal{R}_{\boldsymbol{n}}(\infty)\mathcal{R}_{\boldsymbol{n}}(\sigma)$, 
which allows us to write for every $p>2$,
\beq\label{eq:scrttGzpk}\begin{aligned}
G_{z,p^k}&=\left\{\mathcal{R}_{z}(\sigma) \mod p^k\,:\,\sigma\in\ZZ_p\right\}
         \cup \left\{\mathcal{R}_{z}(\infty)\mathcal{R}_{z}(\sigma) \mod p^k\,:\,\sigma\in p\ZZ_p\right\}\\
         &=\left\{\mathcal{R}_{z}(\sigma) \mod p^k\,:\,\sigma\in\ZZ/p^k\right\}
         \cup \left\{\mathcal{R}_{z}(\infty)\mathcal{R}_{z}(\sigma) \mod p^k\,:\,\sigma\in p(\ZZ/p^k)\right\},
\end{aligned}\eeq
and similarly,
\beq\label{eq:scrttGxypk}
G_{\boldsymbol{n},p^k}=\left\{\mathcal{R}_{\boldsymbol{n}}(\sigma) \mod p^k\,:\,\sigma\in\ZZ/p^k\right\}
    \cup\left\{\mathcal{R}_{\boldsymbol{n}}(\infty)\mathcal{R}_{\boldsymbol{n}}(\sigma) \mod p^k\,:\,\sigma\in \ZZ/p^k\right\}\eeq
for the $x$ and $y$ axes. 

Moreover, when $p=2$, we have the following for each of the reference axes:
\beq\label{eq:scrttGxyz2k}\begin{aligned}
G_{\boldsymbol{n},2^k}&=\left\{\mathcal{R}_{\boldsymbol{n}}(\sigma) \mod 2^k\,:\,\sigma\in\ZZ_2\right\}
     \cup \left\{\mathcal{R}_{\boldsymbol{n}}(\infty)\mathcal{R}_{\boldsymbol{n}}(\sigma) \mod 2^k\,:\,\sigma\in 2\ZZ_2\right\}\\
&=\left\{\mathcal{R}_{\boldsymbol{n}}(\sigma) \mod 2^k\,:\,\sigma\in2(\ZZ/2^k)\right\}\\
&\phantom{==}
\cup\left\{\mathcal{R}_{\boldsymbol{n}}(\infty)\mathcal{R}_{\boldsymbol{n}}(\sigma) \mod 2^k\,:\,\sigma\in 2(\ZZ/2^k)\right\}\\
&\phantom{==}
\cup\left\{\mathcal{R}_{\boldsymbol{n}}(\sigma) \mod 2^k\,:\,\sigma\in \ZZ/2^k,\,\sigma\equiv1\mod 2\right\},
\end{aligned}
\eeq 
with the elements in the last set of the union in the 
plane orthogonal to $\boldsymbol{n}$ written as
\beq \label{eq:secproblematico2}
\begin{pmatrix}
\frac{-2(\sigma+{\sigma}^2)}{1+2(\sigma+{\sigma}^2)}&-\frac{1+2\sigma}{1+2(\sigma+{\sigma}^2)}\\ \frac{1+2\sigma}{1+2(\sigma+{\sigma}^2)}&\frac{-2(\sigma+{\sigma}^2)}{1+2(\sigma+{\sigma}^2)}\end{pmatrix} \mod 2^k \ ,\ \sigma\in\ZZ/2^{k-1}.
\eeq 

Eqs. \eqref{eq:scrttGzpk}, \eqref{eq:scrttGxypk} and \eqref{eq:scrttGxyz2k} provide a parameterization of $G_{p^k}$ for every prime 
$p$ and $k\geq1$, thanks to the existing Cardano decompositions. 

We conclude this section by introducing the groups of solutions modulo $p^k$, 
$k\geq1$, of the defining equations of $SO(3)_p$,
\beq \label{solsyspk}
\widehat{G}_{p^k} 
      = \{ L\in \mathcal{M}_{3\times 3}(\ZZ/p^k) : 
            A\equiv L^TAL \mod p^k,\, \det L \equiv 1 \mod p^k \}.
\eeq 
These groups will turn out to be fundamental in the discussion about 
the representations of $SO(3)_p$ at least for $k=1$ in this paper.

\begin{remark}\label{GpkhatGpk}
By virtue of Theorem \ref{theo:SOinteri}
\beq 
G_{p^k} \subseteq \widehat{G}_{p^k},
\eeq
and a fundamental open question is whether these groups coincide or not, 
by varying the values of prime $p>2$ and $k\geq1$. This question
has the character of Hensel's Lemma \cite{Cassels}: 
do the integer solutions modulo $p$ of $A=L^TAL$ with $\det L=1$ lift to $p$-adic solutions?

This boils down to checking if $\lvert G_{p^k}\rvert=\lvert \widehat{G}_{p^k}\rvert$ for some $p$ and $k$. Focusing on primes $p>2$, this can be done by comparing the order of $\widehat{G}_{p^k}$ with the order of $G_{p^k}$ given by one of its possible Cardano decompositions. One of the main aspects left to be discovered in this matter is, then, the multiplicity of the Cardano decompositions for $G_{p^k}$ for every prime $p>2,\,k\geq1$, that is, how many distinct Cardano decompositions around the same axes exist for each matrix of $G_{p^k}$. Once known that, it will not be hard to find the order of the subgroups $G_{\boldsymbol{n},p^k}$ for the $x$, $y$ and $z$ axes from Eqs.~\eqref{eq:scrttGzpk} and \eqref{eq:scrttGxypk} and, then, deduce $\lvert G_{p^k}\rvert$.
\end{remark}


\section{The group $\boldmath{SO(3)_p\mod p}$ for $\boldmath{p>2}$}
\label{pi1SO3}
We specialize Eqs.~\eqref{eq:scrttGzpk} and \eqref{eq:scrttGxypk} for
odd prime $p>2$ in the case $k=1$: for the $x$ or $y$ axes,
\beq\label{eq:Gxyp}
G_{\boldsymbol{n},p}
 =\left\{\cR_{\boldsymbol{n}}(\sigma)\mod p\,:\,\sigma\in\ZZ/p\right\}
  \cup\left\{\cR_{\boldsymbol{n}}(\infty)\cR_{\boldsymbol{n}}\left(\sigma\right)\mod p\,:\,\sigma\in \ZZ/p\right\},
\eeq
while for the $z$ axis,
\beq \label{eq:Gzp}
G_{z,p}=\left\{\cR_z(\sigma) \mod p,\,\sigma\in\ZZ/p\right\}\cup\left\{\cR_z(\infty) \mod p\right\}.
\eeq 
By using \eqref{eq:rotazgeneric}, Eqs. \eqref{eq:Gxyp} and \eqref{eq:Gzp} can be rewritten as follows:
\begin{align}
G_{x,p}&=\left\{\begin{pmatrix}
1&0&0\\0&1&0\\0&2\xi&1
\end{pmatrix} \mod p\ :\ \xi\in\ZZ/p\right\}\cup \left\{\begin{pmatrix}
1&0&0\\0&-1&0\\0&-2\xi&-1
\end{pmatrix} \mod p\ :\ \xi\in\ZZ/p\right\},\\
G_{y,p}&=\left\{\begin{pmatrix}
1&0&0\\0&1&0\\2\eta&0&1
\end{pmatrix}\mod p\ :\ \eta\in\ZZ/p\right\}\cup\left\{\begin{pmatrix}
-1&0&0\\0&1&0\\-2\eta&0&-1
\end{pmatrix}\mod p\ :\ \eta\in\ZZ/p\right\},\\
G_{z,p}&=\left\{\begin{pmatrix}
\frac{1+v\zeta^2}{1-v\zeta^2} & 
\frac{2v\zeta}{1-v\zeta^2}&0\\
\frac{2\zeta}{1-v\zeta^2} & \frac{1+v\zeta^2}{1-v\zeta^2}&0\\0&0&1
\end{pmatrix}\mod p\ :\ \zeta\in\ZZ/p\right\}\cup\left\{\begin{pmatrix}
-1&0&0\\0&-1&0\\0&0&1
\end{pmatrix}\mod p\right\}.
\end{align}
We will refer to the first set of each of these three unions as ``first branch'' and to the second one as ``second branch.'' They provide a parameterization of 
$G_p=SO(3)_p\mod p$ thanks to the existing Cardano decompositions. 

\medskip
\begin{remark}
It is trivial to check that matrices within each of the two branches of $G_{x,p}$ and $G_{y,p}$ are all distinct by varying the parameter in $\ZZ/p$. Moreover a matrix from the first branch cannot be equal to a matrix of the second branch, for both $G_{x,p}$ and $G_{y,p}$, because $1\not\equiv-1\mod p$ when $p>2$. Furthermore, by equating two matrices from the first branch of $G_{z,p}$, we get
\beq\label{eq:zu1}
    \frac{1+v\zeta^2}{1-v\zeta^2}\equiv \frac{1+v{\zeta'}^2}{1-v{\zeta'}^2}\mod p,
\eeq
which is equivalent to $ \zeta^2\equiv{\zeta'}^2\mod p$. Plugging this into
\beq\label{eq:zu2}
    \frac{2\zeta}{1-v\zeta^2}\equiv \frac{2\zeta'}{1-v{\zeta'}^2}\mod p,
\eeq
we obtain $\zeta\equiv\zeta'\mod p$. Finally, we equate a matrix from the first branch of $G_{z,p}$ to the one of the second branch,
\beq 
 \frac{1+v\zeta^2}{1-v\zeta^2}\equiv-1\mod p,
\eeq 
which is equivalent to $1\equiv-1\mod p$, which is impossible.
Therefore, the matrices in $G_{x,p},\, G_{y,p}$ and $G_{z,p}$ are all distinct 
by varying the respective parameters. Then, it follows that
\beq \label{eq:cardxyzmodp}
\lvert G_{x,p}\rvert=\lvert G_{y,p}\rvert=2p,\quad \lvert G_{z,p}\rvert=p+1.
\eeq 
\end{remark}

\medskip
As anticipated, a fundamental point in this matter is to discover the 
multiplicity of a modulo-$p$-reduced Cardano decomposition for $G_p$, 
for example that of the kind $\cR_x\cR_y\cR_z$. A systematic ambiguity 
of order 2 in the Cardano representations is based on the relation
\beq \label{eq:Cardambiguity}
\cR_x(\infty)\cR_y(\infty)\cR_z(\infty)\equiv\1\mod p.
\eeq 
Moreover, in the calculations below, we are going to use the following commutation relation
\beq 
[\cR_z(\infty),\cR_y(\eta)]\equiv\cR_z(\infty)\cR_y(\eta)\cR_z(\infty)^{-1}\cR_y(\eta)^{-1}\equiv \cR_y(-2\eta)\mod p
\eeq 
so that
\beq \label{eq:commrelzy}
\cR_z(\infty)\cR_y(\eta)\equiv [\cR_z(\infty),\cR_y(\eta)]\cR_y(\eta)\cR_z(\infty)\equiv \cR_y(-\eta)\cR_z(\infty)\mod p.
\eeq 

Now, we start analyzing all the possibilities for the branches of the three rotations 
involved in the Cardano decomposition of the kind $\cR_x\cR_y\cR_z$. A triple 
$ijk$ with $i,j,k\in\{1,2\}$ will denote a Cardano representation of the kind 
$\cR_x\cR_y\cR_z$, where the $x$, $y$ and $z$ rotation modulo $p$ are taken 
from the $i$th, $j$th and $k$th branch, respectively. There are $8=2^3$ possible Cardano 
representations $\cR_x\cR_y\cR_z$ with respect to the branches of each of the three 
involved rotations. Thus, there are 36 possibilities of equating two triple products, 
i.e., of equating modulo $p$ two Cardano representations depending on their rotation branches.

Suppose that a given $M\in G_p$ admits a Cardano representation of the kind 111, 
i.e., there exist $\xi,\eta,\zeta\in\ZZ/p$ such that
\beq 
  M\equiv\cR_x(\xi)\cR_y(\eta)\cR_z(\zeta)\mod p.
\eeq 
Then, there exists at least another distinct Cardano representation of $M$ along the 
same axes: by using Eqs.~\eqref{eq:Cardambiguity} and \eqref{eq:commrelzy}, we get
\begin{align}
    M&\equiv\cR_x(\xi)\,\cR_x(\infty)\cR_y(\infty)\cR_z(\infty)\,\cR_y(\eta)\cR_z(\zeta)\notag\\
     &\equiv \cR_x(\infty)\cR_x(\xi)\cR_y(\infty)\cR_y(-\eta)\cR_z(\infty)\cR_z(\zeta)\mod p.
\end{align}
The latter is a Cardano representation of $M$ of the kind either 221 if $\zeta\not\equiv0\mod p$ or 222 if $\zeta\equiv0\mod p$. The reverse is also true: if $M \in G_p$ admits a 
Cardano representation 221 with non-zero parameter for the $z$-rotation (or a Cardano 
representation 222), then it admits a Cardano representation 111 too.

Similarly, if for a given $M\in G_p$ there exist $\xi,\eta,\zeta\in\ZZ/p$ such that $M$ 
has Cardano representation of the kind 121,
\beq
M\equiv \cR_x(\xi)\cR_y(\infty)\cR_y(\eta)\cR_z(\zeta)\mod p,
\eeq 
then also
\begin{align}
  M&\equiv \cR_x(\xi)\,\cR_x(\infty)\cR_y(\infty)\cR_z(\infty)\,\cR_y(\infty)\cR_y(\eta)\cR_z(\zeta)\notag\\
   &\equiv \cR_x(\infty)\cR_x(\xi)\cR_y(\infty)^2\cR_z(\infty)\cR_y(\eta)\cR_z(\zeta)\notag\\
   &\equiv \cR_x(\infty)\cR_x(\xi)\cR_y(-\eta)\cR_z(\infty)\cR_z(\zeta)\mod p,
\end{align}
which is a Cardano representation of $M$ of the kind either 211 if $\zeta\not\equiv0\mod p$ or 212 if $\zeta\equiv0\mod p$.

Moreover, if for a given $M\in G_p$ there exist $\xi,\eta\in\ZZ/p$ such that $M$ has Cardano representation 112,
\beq
  M\equiv \cR_x(\xi)\cR_y(\eta)\cR_z(\infty)\mod p,
\eeq 
then also
\begin{align}
  M&\equiv \cR_x(\xi)\,\cR_x(\infty)\cR_y(\infty)\cR_z(\infty)\,\cR_y(\eta)\cR_z(\infty)\notag\\
   &\equiv \cR_x(\infty)\cR_x(\xi)\cR_y(\infty)\cR_y(-\eta)\cR_z(0)\mod p,
\end{align}
that is, a Cardano representation 221 of $M$.

Finally, if for a given $M\in G_p$ there exist $\xi,\eta\in\ZZ/p$ such that $M$ 
has Cardano representation 122,
\beq
  M\equiv \cR_x(\xi)\cR_y(\infty)\cR_y(\eta)\cR_z(\infty)\mod p,
\eeq 
then also
\begin{align}
M&\equiv \cR_x(\xi)\,\cR_x(\infty)\cR_y(\infty)\cR_z(\infty)\,\cR_y(\infty)\cR_y(\eta)\cR_z(\infty)\notag\\
&\equiv \cR_x(\infty)\cR_x(\xi)\cR_y(\infty)^2\cR_z(\infty)\cR_y(\eta)\cR_z(\infty)\notag\\
&\equiv \cR_x(\infty)\cR_x(\xi)\cR_y(-\eta)\cR_z(0)\mod p,
\end{align}
that is, a Cardano representation 211 of $M$.

We have analyzed each of the eight initial triples $ijk$ since the reverse of each of the above reasonings is also valid. This provides six different modular congruences of Cardano representations with respect to certain triples of branches. We deduce the following.
\begin{remark}\label{rem:2cardanop}
Given a Cardano representation of $M$ of the kind $\cR_x\cR_y\cR_z$ of parameters $\xi,\eta,\zeta$, respectively, then $M$ admits at least another distinct Cardano representation of the kind 
$\cR_x\cR_y\cR_z$ with parameters $\xi',\eta',\zeta'$, respectively: that obtained 
by changing the branches 
of the $x$- and $y$-rotation and
\begin{itemize}
    \item by changing the branch of the $z$-rotation if its parameter $\zeta$ is $0\in\ZZ/p$ or $\infty$ and
    \item by fixing the branch of the $z$-rotation if $\zeta\in\ZZ/p,\, \zeta\not\equiv0\mod p$,
\end{itemize}
and with parameters
\beq 
\xi'\equiv\xi,\ \ \ \eta'\equiv-\eta,\ \ \ \zeta'\equiv-\frac{1}{\alpha\zeta}\mod p.
\eeq 
\end{remark}

Now, we show that a given $M\in G_p$ cannot admit two distinct Cardano 
representations of the kind $\cR_x\cR_y\cR_z$ with respect to the same 
three branches. Here we need to compare each of the eight branch triples $ijk$ with itself. 
The thesis is trivial to check in the four cases $ij2$ where the $z$-rotation is 
$\cR_z(\infty)$. We now focus on the case 111 where the three rotations of the 
Cardano representations all come from the first branch since the calculations 
for the remaining three cases $ij1$ are very similar.

Suppose that a given $M\in G_p$ can be represented as
\beq 
  M\equiv \cR_x(\xi)\cR_y(\eta)\cR_z(\zeta)\equiv \cR_x(\xi')\cR_y(\eta')\cR_z(\zeta')\mod p
\eeq 
for certain $\xi,\xi',\,\eta,\eta',\,\zeta,\zeta'\in\ZZ/p$. 
It means that 
\begin{align*} 
 M&\equiv \begin{pmatrix} \frac{1+v\zeta^2}{1-v\zeta^2} & \frac{2v\zeta}{1-v\zeta^2}&0\\ 
   \frac{2\zeta}{1-v\zeta^2} & \frac{1+v\zeta^2}{1-v\zeta^2}&0\\\frac{2}{1-v\zeta^2}\big(\eta(1+v\zeta^2)+2\xi\zeta\big)&\frac{2}{1-v\zeta^2}\big(\xi(1+v\zeta^2)+2v\eta\zeta\big)&1 \end{pmatrix}\\ 
  &\equiv\begin{pmatrix} \frac{1+v{\zeta'}^2}{1-v{\zeta'}^2} & \frac{2v\zeta'}{1-v{\zeta'}^2}&0\\ 
            \frac{2\zeta'}{1-v{\zeta'}^2} & \frac{1+v{\zeta'}^2}{1-v{\zeta'}^2}&0\\\frac{2}{1-v{\zeta'}^2}\big(\eta'(1+v{\zeta'}^2)+2\xi'\zeta'\big)&\frac{2}{1-v{\zeta'}^2}\big(\xi'(1+v{\zeta'}^2)+2v\eta'\zeta'\big)&1 \end{pmatrix}\mod p.
\end{align*}
By equating the entries of the $2\times2$ upper left minors, we get 
Eqs.~\eqref{eq:zu1} and \eqref{eq:zu2}, giving $\zeta\equiv\zeta'\mod p$, i.e.,
$\cR_z(\zeta)\equiv\cR_z(\zeta')\mod p$. Then the following are equivalent:
\begin{align}
    &\cR_x(\xi)\cR_y(\eta)\cR_z(\zeta)\equiv \cR_x(\xi')\cR_y(\eta')\cR_z(\zeta')\mod p,\\
    &\cR_x(\xi)\cR_y(\eta)\equiv \cR_x(\xi')\cR_y(\eta')\mod p,\\
    &\begin{pmatrix}
1&0&0\\0&1&0\\2\eta&2\xi&1
\end{pmatrix}\equiv\begin{pmatrix}
1&0&0\\0&1&0\\2\eta'&2\xi'&1
\end{pmatrix}\mod p,\\
&\xi'\equiv\xi\mod p\quad \text{and}\quad \eta'\equiv\eta\mod p,\\
&\cR_x(\xi')\equiv\cR_x(\xi)\mod p \quad \text{and}\quad \cR_y(\eta')\equiv\cR_y(\eta)\mod p.
\end{align}
Thus each $M\in G_p$ has a unique Cardano representation of the kind $\cR_x\cR_y\cR_z$ with respect to a certain triple of rotation branches.

\medskip
There are $36-6-8=22$ possibilities left of equating two triples $ijk$, representing 
Cardano decompositions $\cR_x\cR_y\cR_z$ with respect to those triples of branches. 
For all of them, equating modulo $p$ the first or last matrix entry of the two Cardano 
products leads to $1\equiv-1\mod p$, which is impossible. It means that each $M\in G_p$ 
does not have Cardano representations of the kind $\cR_x\cR_y\cR_z$ with 
respect to different triples of branches, except from the cases of Remark \ref{rem:2cardanop}.

We have, thus, proved the following.
\begin{theorem}\label{teo:2cardanomodp}
Every $M\in G_p$, for every odd prime $p$, has exactly two distinct Cardano 
representations of the kind $\cR_x\cR_y\cR_z$. They are one of the following 
six (depending on $M$) according to Remark \ref{rem:2cardanop}, for certain $\xi,\eta,\zeta\in\ZZ/p$:
\beq\notag \begin{array}{ll}
     111\leftrightarrow 221\quad &\cR_x(\xi)\cR_y(\eta)\cR_z(\zeta)\equiv \cR_x(\infty)\cR_x(\xi)\cR_y(\infty)\cR_y(-\eta)\cR_z\left(\frac{1}{v\zeta}\right)\mod p\ \text{if}\  \zeta\not\equiv0 \mod p,\\ 
     111\leftrightarrow222\quad  & \cR_x(\xi)\cR_y(\eta)\cR_z(0)\equiv \cR_x(\infty)\cR_x(\xi)\cR_y(\infty)\cR_y(-\eta)\cR_z(\infty)\mod p,\\
    121 \leftrightarrow 211\quad&\cR_x(\xi)\cR_y(\infty)\cR_y(\eta)\cR_z(\zeta)\equiv \cR_x(\infty)\cR_x(\xi)\cR_y(-\eta)\cR_z\left(\frac{1}{v\zeta}\right)\mod p\ \text{if}\  \zeta\not\equiv0 \mod p,\\
    121 \leftrightarrow 212\quad&\cR_x(\xi)\cR_y(\infty)\cR_y(\eta)\cR_z(0)\equiv \cR_x(\infty)\cR_x(\xi)\cR_y(-\eta)\cR_z(\infty)\mod p,\\
    112 \leftrightarrow 221\quad&\cR_x(\xi)\cR_y(\eta)\cR_z(\infty)\equiv \cR_x(\infty)\cR_x(\xi)\cR_y(\infty)\cR_y(-\eta)\cR_z(0)\mod p,\\
    122 \leftrightarrow 211\quad& \cR_x(\xi)\cR_y(\infty)\cR_y(\eta)\cR_z(\infty)\equiv \cR_x(\infty)\cR_x(\xi)\cR_y(-\eta)\cR_z(0)\mod p.
\end{array}\eeq
\end{theorem}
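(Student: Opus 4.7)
The plan is to combine three ingredients: existence of at least one Cardano decomposition from Theorem~\ref{theo:Cardano}, a systematic construction that manufactures a second decomposition from any given one using the identity $\cR_x(\infty)\cR_y(\infty)\cR_z(\infty)\equiv\1\mod p$ together with the commutation relation~\eqref{eq:commrelzy}, and a case analysis ruling out every other conceivable matching of branch-triples.

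First I would establish the companion representation. Given a Cardano decomposition $M\equiv\cR_x\cR_y\cR_z\mod p$ with factors drawn from branches $i,j,k$, I insert the identity $\cR_x(\infty)\cR_y(\infty)\cR_z(\infty)$ between $\cR_x$ and $\cR_y$, and then push $\cR_z(\infty)$ past $\cR_y(\eta)$ using \eqref{eq:commrelzy}, converting $\eta$ into $-\eta$ while toggling the $x$- and $y$-branches. The $z$-branch flips precisely when the original $z$-parameter $\zeta$ lies in $\{0,\infty\}$, and otherwise $\zeta$ gets sent to $-1/(\alpha\zeta)$ inside the same branch. Running through the eight starting triples produces exactly the six involutive pairings $111\!\leftrightarrow\!221$, $111\!\leftrightarrow\!222$, $121\!\leftrightarrow\!211$, $121\!\leftrightarrow\!212$, $112\!\leftrightarrow\!221$, $122\!\leftrightarrow\!211$ tabulated in the statement, showing that every $M\in G_p$ has at least two distinct Cardano representations of the desired form.

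Next I would show that, within a single branch-triple $ijk$, the Cardano representation is unique. The four triples $ij2$ pin down $\cR_z=\cR_z(\infty)$, so the parameters $\xi,\eta$ are read off directly from the third row of the product. For the triples $ij1$, equating the upper-left $2\times 2$ block of two candidate products yields equations of the shape \eqref{eq:zu1} and \eqref{eq:zu2}, forcing $\zeta\equiv\zeta'\mod p$; cancelling the common factor $\cR_z(\zeta)$ then reduces the identity to one between the lower-triangular matrices parameterising $\cR_x(\xi)\cR_y(\eta)$ as in \eqref{eq:Gxyp}, from which $\xi\equiv\xi'$ and $\eta\equiv\eta'\mod p$ are immediate.

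Finally I would rule out the remaining cross-triple coincidences. Of the $36$ unordered pairs of branch-triples, the previous two steps account for $6$ genuine pairings and $8$ diagonal self-pairings, leaving $22$. For each of these, comparing either the $(1,1)$ or the $(3,3)$ entry of the two product matrices produces a congruence $1\equiv-1\mod p$, which is impossible for odd $p$. The main obstacle is precisely the bookkeeping of this $22$-case exclusion: although each individual comparison is elementary, one must keep careful track of where the choice of branch manifests itself as a sign in the corner entries of the Cardano product, so as to verify that in every residual case at least one of the distinguishing positions is genuinely flipped. Once this is done, the three steps together establish that every $M\in G_p$ has exactly two Cardano representations of type $\cR_x\cR_y\cR_z$, related by one of the six listed equivalences.
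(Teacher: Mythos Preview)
Your proposal is correct and follows essentially the same approach as the paper: inserting the identity \eqref{eq:Cardambiguity} between $\cR_x$ and $\cR_y$ and pushing $\cR_z(\infty)$ past $\cR_y(\eta)$ via \eqref{eq:commrelzy} to produce the companion decomposition, then verifying uniqueness within each branch-triple via \eqref{eq:zu1}--\eqref{eq:zu2}, and finally eliminating the $22$ residual cross-triple pairs by the $1\equiv-1\pmod p$ contradiction at a corner entry. The paper organises the argument in the same three blocks and with the same case count.
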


\begin{corollary}\label{cor:ordineGp}
Given a prime $p>2$, the order of the group $SO(3)_p\mod p=G_p$ is given by
\beq \label{eq:ordineGp}
\lvert G_p\rvert=\frac{1}{2}\lvert G_{x,p}\rvert\lvert G_{y,p}\rvert\lvert G_{z,p}\rvert=2p^2(p+1).
\eeq 
\end{corollary}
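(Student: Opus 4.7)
The plan is a simple double-counting argument that combines the two preceding results: the existence of a Cardano decomposition $\mathcal{R}_x\mathcal{R}_y\mathcal{R}_z$ for every element (Theorem~\ref{theo:Cardano} projected to $G_p$) with the exact-multiplicity-two statement just proven (Theorem~\ref{teo:2cardanomodp}).

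First, I would consider the multiplication map
\[
 \mu: G_{x,p}\times G_{y,p}\times G_{z,p}\longrightarrow G_p,\qquad
 (R_x,R_y,R_z)\longmapsto R_xR_yR_z \bmod p.
\]
This is well-defined because each factor lies in $G_p$ and $G_p$ is a group. By Theorem~\ref{theo:Cardano} (applied to the quotient $G_p$, as noted right after the definition of $G_{p^k}$), every $M\in G_p$ has at least one Cardano representation of the kind $\mathcal{R}_x\mathcal{R}_y\mathcal{R}_z$, so $\mu$ is surjective. By Theorem~\ref{teo:2cardanomodp}, the preimage $\mu^{-1}(M)$ has cardinality exactly $2$ for every $M\in G_p$.

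Hence $\mu$ is a $2$-to-$1$ surjection, which gives
\[
 |G_{x,p}|\,|G_{y,p}|\,|G_{z,p}|\;=\;2\,|G_p|,
\]
so $|G_p|=\tfrac12|G_{x,p}|\,|G_{y,p}|\,|G_{z,p}|$. Substituting the orders $|G_{x,p}|=|G_{y,p}|=2p$ and $|G_{z,p}|=p+1$ from \eqref{eq:cardxyzmodp} yields $|G_p|=\tfrac12\cdot 2p\cdot 2p\cdot(p+1)=2p^2(p+1)$.

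There is no real obstacle: the entire content has been packaged into the two preceding results, and the corollary is a one-line consequence. The only thing to be careful about is to make sure the counting is done on triples of group elements (which is what Theorem~\ref{teo:2cardanomodp} counts, via the parameters $\xi,\eta,\zeta$ together with the branch labels that together specify the three factors) rather than on triples of parameters, so that the product $|G_{x,p}|\,|G_{y,p}|\,|G_{z,p}|$ is the correct number of Cardano triples to compare against.
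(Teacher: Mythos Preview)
Your proof is correct and essentially identical to the paper's own argument: both use that the multiplication map from $G_{x,p}\times G_{y,p}\times G_{z,p}$ to $G_p$ is surjective (by the Cardano decomposition) and exactly $2$-to-$1$ (by Theorem~\ref{teo:2cardanomodp}), then plug in the orders from \eqref{eq:cardxyzmodp}. Your explicit naming of the map $\mu$ and the remark about counting triples of group elements rather than parameters are just cosmetic refinements of the same reasoning.
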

\begin{proof}
The rotation groups $G_{x,p}$, $G_{y,p}$, and $G_{z,p}$ are subgroups of $G_p$; hence, each product $\cR_x\cR_y\cR_z$ such that $\cR_x\in G_{x,p}$, $\cR_y\in G_{y,p}$, and $\cR_z\in G_{z,p}$ is an element of $G_p$. By the the Cardano decomposition 
$\cR_x\cR_y\cR_z$, each element of $G_p$ has a decomposition into a product of these three,
and Theorem \ref{teo:2cardanomodp}, more precisely, states that each element has 
exactly two such decompositions. Therefore,
$\lvert G_p\rvert = \frac12 \lvert G_{x,p}\rvert\lvert G_{y,p}\rvert\lvert G_{z,p}\rvert$. 
\end{proof}

\subsection{Equivalence with $\widehat{G}_p$}
Now we consider the groups of Eq. \eqref{solsyspk} for $k=1$: we want to find a parameterization of these groups $\widehat{G}_{p}$ too.
Being $\pi_1(A)=\diag(1,-v,0)$ and setting $L=\left(a_{ij}\right)_{i,j=1}^3\in SO(3)_p$, the defining conditions of $SO(3)_p$ reduced modulo $p$ translate in the following system of equations for $\widehat{G}_p$:
\begin{equation}\label{sysmodp1}
\begin{cases}
a_{11}^2-va_{21}^2\equiv 1\mod p,\\
a_{12}^2-va_{22}^2\equiv -v\mod p,\\
a_{13}^2-va_{23}^2\equiv 0\mod p,\\
a_{11}a_{12}-va_{21}a_{22}\equiv 0\mod p,\\
a_{11}a_{13}-va_{21}a_{23}\equiv 0\mod p,\\
a_{12}a_{13}-va_{22}a_{23}\equiv 0\mod p,\\
a_{11}a_{22}a_{33}+a_{12}a_{23}a_{31}+a_{13}a_{21}a_{32}-a_{31}a_{22}a_{13}-a_{32}a_{23}a_{11}-a_{33}a_{21}a_{12}\equiv 1\mod p.
\end{cases}
\end{equation}
The elements $a_{31},a_{32},a_{33}$ appear only in the last equation of \eqref{sysmodp1}, 
the one deriving from the determinant: this means that we could at first try to find 
the possible solutions for the other two rows, looking at the other equations.

If $a_{23}\not\equiv0\mod p$, the third equation of \eqref{sysmodp1} would give 
$v\equiv a_{13}^2a_{23}^{-2}\mod p$, which is in contradiction to the fact that 
$v$ is not a square modulo $p$. Hence it must be $a_{13},a_{23}\equiv0\mod p$, 
and from \eqref{sysmodp1}, we have
\begin{equation}\label{sysmodp2}
\begin{cases}
a_{11}^2-va_{21}^2\equiv 1\mod p,\\
a_{12}^2-va_{22}^2\equiv -v\mod p,\\
a_{13},a_{23}\equiv0\mod p,\\
a_{11}a_{12}-va_{21}a_{22}\equiv 0\mod p,\\
a_{33}(a_{11}a_{22}-a_{21}a_{12})\equiv 1\mod p.
\end{cases}
\end{equation}
With $a_{33}(a_{11}a_{22}-a_{21}a_{12})$ being non-zero and, thus, with its two factors, we can write $a_{33}\equiv(a_{11}a_{22}-a_{21}a_{12})^{-1}$. Actually, we can say 
more about these values: let us multiply the first two equations of \eqref{sysmodp2} 
member by member, obtaining
\begin{align*}
-v &\equiv a_{11}^2a_{12}^2-va_{11}^2a_{22}^2-va_{12}^2a_{21}^2+v^2a_{21}^2a_{22}^2\\
&\equiv a_{11}^2a_{12}^2-va_{11}^2a_{22}^2-va_{12}^2a_{21}^2+v^2a_{21}^2a_{22}^2+2v{a_{11}}{a_{12}}{a_{21}}{a_{22}}-2v{a_{11}}{a_{12}}{a_{21}}{a_{22}}\\
&\equiv \left({a_{11}}{a_{12}}-v{a_{21}}{a_{22}}\right)^2-v\left({a_{11}}{a_{22}}-{a_{12}}{a_{21}}\right)^2\\
&\equiv -v\left({a_{11}}{a_{22}}-{a_{12}}{a_{21}}\right)^2\mod p.
\end{align*}
We can, thus, deduce
\begin{equation}
a_{11}a_{22}-a_{21}a_{12}\equiv a_{33}\equiv\pm1.
\end{equation}
Now, the whole remaining study concerns the submatrix 
\beq
\begin{pmatrix}
       a_{11} & a_{12} \\
       a_{21} & a_{22}
      \end{pmatrix}
\eeq 
with
\begin{align}
&a_{11}^2-va_{21}^2\equiv1\mod p\label{eq:sysmodp21},\\
&a_{12}^2-va_{22}^2\equiv-v\mod p\label{eq:sysmodp22},\\
&{a_{11}}{a_{12}}-v{a_{21}}{a_{22}}\equiv0\mod p\label{eq:sysmodp23},\\
&{a_{11}}{a_{22}}-{a_{12}}{a_{21}}\equiv\pm_11\mod p\label{eq:sysmodp24}.
\end{align}
Here, $a_{11}$ and $a_{21}$ cannot be both divisible by $p$; otherwise, Eq. \eqref{eq:sysmodp21} would be impossible: suppose that $a_{11}\not\equiv0\mod p$. 
Equation \eqref{eq:sysmodp23} provides $a_{12}\equiv va_{11}^{-1}a_{21}a_{22}\mod p$. 
We plug this into Eq. \eqref{eq:sysmodp22} and use Eq. \eqref{eq:sysmodp21},
\begin{align*}
-v&\equiv v^2a_{11}^{-2}a_{21}^2a_{22}^2-va_{22}^2\equiv -va_{22}^2\left(1-va_{11}^{-2}a_{21}^2\right)\\ &\equiv -va_{22}^2\big(1+a_{11}^{-2}(1-a_{11}^2)\big)\equiv -va_{11}^{-2}a_{22}^2\mod p,
\end{align*}
which implies that
\begin{equation}\label{eq:pm2res}
a_{22}\equiv \pm_2 a_{11}.
\end{equation}
Now, from Eqs. \eqref{eq:sysmodp21} and \eqref{eq:sysmodp22}, we get 
$a_{12}^2\equiv v(a_{22}^2-1)\equiv v(a_{11}^2-1)\equiv v^2a_{21}^2\mod p$, and so
\beq \label{eq:pm3res}
a_{12}\equiv\pm_3va_{21}\mod p.
\eeq 
The choice is now only about $a_{11},a_{21}$ and three signs. However, we show that these three signs $\pm_1,\pm_2,\pm_3$ are actually the same. Equations \eqref{eq:pm2res} and \eqref{eq:pm3res} make Eq. \eqref{eq:sysmodp23} become $va_{11}a_{21}(\pm_31\mp_21)\equiv0\mod p\Leftrightarrow (\pm_31\mp_21)a_{21}\equiv0\mod p$. If $a_{21}\not\equiv0\mod p$, then $\pm_2=\pm_3$; if $a_{21}\equiv0\mod p$, then the choice of $\pm_3$ is meaningless. Hence, we can write $\pm_21=\pm_31$, in general. Now, Eqs. \eqref{eq:sysmodp21} and \eqref{eq:sysmodp24} give $\pm_11\equiv 
\pm_2\left(a_{11}^2-va_{21}^2\right)\equiv \pm_21$; therefore $\pm_1=\pm_2=\pm_3$.

Equation~\eqref{eq:sysmodp23} becomes an identity, while Eqs. \eqref{eq:sysmodp22} and \eqref{eq:sysmodp24} become equivalent to Eq. \eqref{eq:sysmodp21}. Very similar steps in the hypothesis that $a_{21}\not\equiv0\mod p$ provide the same results, which we collect in the following equations:
\beq 
\left\{\begin{aligned}
&a_{11}^2-va_{21}^2\equiv1\mod p,\\
&{a_{22}}\equiv\pm_1 {a_{11}}\mod p,\\
&{a_{12}}\equiv\pm_1v{a_{21}}\mod p,\\
&{a_{13}}\equiv{a_{23}}\equiv0\mod p,\\
&{a_{33}}\equiv\pm1\mod p.
\end{aligned}\right.
\eeq 
In addition, $\pm=\pm_1$: $a_{33}^{-1}\equiv a_{11}a_{22}-a_{12}a_{21}\Leftrightarrow \pm1\equiv\pm_1(a_{11}^2-va_{21}^2)\equiv\pm_11\mod p$. We conclude that matrix solutions of Eq. \eqref{sysmodp1} are of the form 
\beq 
\begin{pmatrix}
{a_{11}}&\pm v{a_{21}}&0\\{a_{21}}&\pm{a_{11}}&0\\{a_{31}}&{a_{32}}&\pm1
\end{pmatrix}\in\cM_{3\times3}(\ZZ/p)
\eeq 
with $a_{11}^2-va_{21}^2\equiv1\mod p$.

All of the above can be summarized in the following result.
\begin{theorem}\label{parampi1SO3}
The matrices of $\widehat{G}_p$ for any prime $p>2$ are uniquely defined by $a,b,c,d\in\zed/p$ (the first two solutions of $a^2-vb^2\equiv1\mod p$ and the others chosen freely) and by a sign $s=\pm1$,
\beq\label{eq:matsistmodp}
M(a,b,c,d,s)=\begin{pmatrix}
       a & svb & 0 \\
       b & s a & 0 \\
       c & d & s
      \end{pmatrix}.
\eeq
\end{theorem}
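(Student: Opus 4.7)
My plan is to organize the pre-theorem calculations into a clean forward argument, starting from a generic matrix $L=(a_{ij}) \in \widehat{G}_p$ and successively constraining its entries using system \eqref{sysmodp1}. The key insight that drives everything is that $v$ is a non-square modulo $p$ (from equation \eqref{defv}), which turns several quadratic equations into strong structural constraints.

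First, I would look at the third equation $a_{13}^2 - v a_{23}^2 \equiv 0 \mod p$. If $a_{23} \not\equiv 0$, then $v \equiv (a_{13}/a_{23})^2$, contradicting that $v$ is not a square; hence $a_{23} \equiv 0$ and then $a_{13} \equiv 0$. Substituting these into the determinant equation collapses it to $a_{33}(a_{11}a_{22} - a_{12}a_{21}) \equiv 1$, so in particular $a_{33}$ is a unit. Next, multiplying the first two equations of the remaining system and completing the square in the form $a_{11}a_{12} - v a_{21}a_{22}$ (which vanishes by equation \eqref{eq:sysmodp23}) yields the identity
\[
-v \equiv -v(a_{11}a_{22} - a_{12}a_{21})^2 \mod p,
\]
forcing $a_{11}a_{22} - a_{12}a_{21} \equiv \pm 1$ and hence $a_{33} \equiv \pm 1$.

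Then I would reduce to analyzing the $2\times 2$ block. Since $a_{11}^2 - v a_{21}^2 \equiv 1$, at least one of $a_{11}, a_{21}$ is a unit (otherwise the left side would be $\equiv 0$). Assuming $a_{11}$ is a unit (the other case is symmetric), equation \eqref{eq:sysmodp23} solves for $a_{12}$ in terms of the others, and plugging into \eqref{eq:sysmodp22} gives $a_{22}^2 \equiv a_{11}^2$, so $a_{22} \equiv \pm_2 a_{11}$; a parallel manipulation with $a_{12}^2 \equiv v^2 a_{21}^2$ gives $a_{12} \equiv \pm_3 v a_{21}$. The crux is then to verify that the three a priori independent signs $\pm_1$ (from the determinant), $\pm_2$, $\pm_3$ all coincide, which I would do exactly as in the pre-theorem discussion: \eqref{eq:sysmodp23} forces $\pm_2 = \pm_3$ whenever $a_{21}$ is a unit (and the choice of $\pm_3$ is vacuous otherwise), and then substituting $a_{22} \equiv \pm_2 a_{11}$ and $a_{12} \equiv \pm_2 v a_{21}$ into the determinant relation \eqref{eq:sysmodp24} gives $\pm_1 = \pm_2 (a_{11}^2 - v a_{21}^2) \equiv \pm_2 1$.

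The only part remaining is to observe that $a_{31}, a_{32}$ appear nowhere else in \eqref{sysmodp1} once we use $a_{13} \equiv a_{23} \equiv 0$, so they are free in $\ZZ/p$. This gives the parameterization \eqref{eq:matsistmodp}. For uniqueness, two tuples $(a,b,c,d,s)$ and $(a',b',c',d',s')$ producing the same matrix must agree entry by entry, which is immediate from reading off the positions. I do not anticipate a real obstacle here since all steps are essentially algebraic bookkeeping; the only subtle point is the sign-coincidence argument, which I would write out carefully to make sure the degenerate case $a_{21} \equiv 0$ (where $\pm_3$ is not determined a priori) is handled by simply absorbing the sign into the remaining free choice consistent with $\pm_1 = \pm_2$.
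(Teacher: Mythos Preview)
Your proposal is correct and follows essentially the same route as the paper: the paper's proof is precisely the computation preceding the theorem statement (the theorem is introduced with ``All of the above can be summarized in the following result''), and your plan reorganizes those same steps---forcing $a_{13}\equiv a_{23}\equiv 0$ via the non-square $v$, deducing $a_{33}\equiv\pm1$ from the product of the first two equations, solving the $2\times 2$ block, and showing the three signs coincide---into a linear narrative with an explicit uniqueness remark at the end.
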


\begin{corollary}\label{ordpi1SO3}
Given a prime $p>2$, the order of the matrix group $\widehat{G}_p$ is 
\beq \label{eq:ordermodp}
\lvert \widehat{G}_p\rvert=2p^2(p+1).
\eeq
\end{corollary}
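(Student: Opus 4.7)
The plan is to invoke Theorem~\ref{parampi1SO3} and carry out a direct product count. By that theorem, each element of $\widehat{G}_p$ is uniquely parameterized by a quintuple $(a,b,c,d,s)$ with $a,b\in\ZZ/p$ satisfying $a^2-vb^2\equiv 1\mod p$, with $c,d\in\ZZ/p$ completely free, and with $s\in\{\pm 1\}$. The free entries $c,d,s$ already contribute a factor of $2p^2$, so it remains only to count the number $N$ of solutions $(a,b)\in(\ZZ/p)^2$ of $a^2-vb^2\equiv 1\mod p$; the target formula forces $N=p+1$.

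To compute $N$, I would invoke the crucial property (built into the definition of $v$ in Sec.~\ref{sec:prelim}) that $v$ is a non-square in $\ZZ/p$. Then the polynomial $X^2-v$ is irreducible over $\FF_p\cong\ZZ/p$, the quotient $\FF_p[X]/(X^2-v)$ is the field $\FF_{p^2}$, and writing an element as $a+b\sqrt{v}$ one sees that $a^2-vb^2$ is precisely the field norm $N_{\FF_{p^2}/\FF_p}(a+b\sqrt{v})=(a+b\sqrt{v})(a+b\sqrt{v})^p$. Hence solutions $(a,b)$ of the equation correspond bijectively to norm-$1$ elements of $\FF_{p^2}^*$. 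The norm map $z\mapsto z\cdot z^p=z^{p+1}$ on the cyclic group $\FF_{p^2}^*$ has image of order $(p^2-1)/(p+1)=p-1$, so it is surjective onto $\FF_p^*$, and its kernel therefore has order $(p^2-1)/(p-1)=p+1$. Multiplying the counts yields $\lvert\widehat{G}_p\rvert=(p+1)\cdot p^2\cdot 2=2p^2(p+1)$, as claimed.

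There is no serious obstacle here once Theorem~\ref{parampi1SO3} is in hand: the argument is essentially a single product, and the non-squareness of $v$ is precisely what makes the norm-$1$ count clean. A more elementary alternative would be to fix $b\in\ZZ/p$ and sum the number of square roots of $1+vb^2$ using the Legendre symbol, distinguishing whether $1+vb^2$ is zero, a non-zero square, or a non-square; this yields the same $N=p+1$, but the norm-form viewpoint is both shorter and more conceptual, and it has the additional benefit of exhibiting the $(a,b)$-solutions as a cyclic group of order $p+1$, a structural fact that will be useful when comparing $\widehat{G}_p$ to $G_p$ via Remark~\ref{GpkhatGpk} and Corollary~\ref{cor:ordineGp}.
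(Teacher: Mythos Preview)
Your proof is correct and follows essentially the same approach as the paper. The paper likewise invokes Theorem~\ref{parampi1SO3} and multiplies the counts of $(a,b)$, $c$, $d$, $s$; for the count $N=p+1$ of norm-$1$ pairs it appeals to Appendix~\ref{sec:a2vb21modp}, which identifies $a^2-vb^2$ with the norm from $\FF_p[\sqrt{v}]\cong\FF_{p^2}$ to $\FF_p$ and reads off the kernel size via cyclicity of $\FF_{p^2}^*$---exactly your argument, with the minor difference that the appendix also makes the cyclic structure of the solution set explicit by exhibiting a generator.
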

\begin{proof}
For a fixed odd prime $p$, the number of solutions of the equation $a^2-vb^2\equiv1\mod p$ is $p+1$ (Appendix \ref{sec:a2vb21modp}). Then, the total number of combinations of $a,b,c,d,s$ is $(p+1)\cdot p\cdot p\cdot 2$ such that every choice represents a distinct matrix. Then the number of distinct matrices is $2p^2(p+1)$.
\end{proof}

\begin{remark}\label{doubleapproach}
Proving that $G_{p}$ coincides with $\widehat{G}_{p}$ is equivalent to proving that $\lvert G_{p}\rvert =\lvert \widehat{G}_{p}\rvert$ (Remark \ref{GpkhatGpk}). This is true for every prime $p>2$, by comparing Eqs. \eqref{eq:ordineGp} and \eqref{eq:ordermodp}. This implies that Eq. \eqref{eq:matsistmodp} is an equivalent parameterization for $G_{p}$.

\end{remark}

\medskip
The study of $G_p:= SO(3)_p\mod p$ regarding its order and parameterization turns out 
to be fundamental to build a first list of representations of $SO(3)_p$.  
Examining the commutators and the abelianization of a group is enough to exhaust 
the list of one-dimensional representations \cite{simon}: this will be helpful to 
understand how many other representations are present and their form. It is for this 
reason that here we anticipate the structure of the commutators between matrices of 
$G_p$ for every prime $p>2$ by exploiting Remark \ref{doubleapproach}. 
We calculate $[M,N]=MNM^{-1}N^{-1}$ with $M,\,N\in G_p$ written through the parameterization \eqref{eq:matsistmodp} as
\beq\notag
M=\begin{pmatrix}
a_{11}&tva_{21}&0\\a_{21}&ta_{11}&0\\a_{31}&a_{32}&t
\end{pmatrix},\ \ N=\begin{pmatrix}
b_{11}&rvb_{21}&0\\b_{21}&rb_{11}&0\\b_{31}&b_{32}&r
\end{pmatrix}.
\eeq 
We get  
\beq\label{eq:commG35}
[M,N]=\begin{pmatrix}
c_{11}&vc_{21}&0\\c_{21}&c_{11}&0\\c_{31}&c_{32}&1
\end{pmatrix},
\eeq 
where
\beq \begin{aligned}\label{eq:commG35entries}
    &c_{11}=(b_{11}^2-tvb_{21}^2)(a_{11}^2-rva_{21}^2)+(t+r-tr-1)va_{11}a_{21}b_{21}b_{11},\\
   &c_{21}=(t-1)(a_{11}^2-rva_{21}^2)b_{21}b_{11}-(r-1)(b_{11}^2-tvb_{21}^2)a_{11}a_{21}.
\end{aligned}
\eeq

\subsection{From $p$-adic rotations to dihedral groups}
We introduce a homomorphism, which 
will turn out to be useful to our aim of finding two-dimensional irreducible representations
of $SO(3)_p$.
\begin{proposition}
The map
\beq \label{eq:homominor}
\begin{aligned}
K_p': \,& G_p\rightarrow GL((\zed/p)^2),\\ &M\mapsto M_{33},
\end{aligned}\eeq 
where $M_{33}$ is the the $2\times2$ upper-left minor of $M$, is a group homomorphism.
\end{proposition}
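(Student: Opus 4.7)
The plan is to exploit the block-triangular shape of elements of $G_p$ that is visible from the parameterization in Theorem \ref{parampi1SO3}. By Remark \ref{doubleapproach}, every $M\in G_p$ has the form
\beq\notag
M=\begin{pmatrix} a & svb & 0 \\ b & sa & 0 \\ c & d & s \end{pmatrix},
\eeq
so the crucial structural fact is that the upper-right $2\times 1$ block is zero. I would state this explicitly as the first step.

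Next, I would compute the upper-left $2\times 2$ block of a product $MN$ directly from the definition of matrix multiplication. For $i,j\in\{1,2\}$, the $(i,j)$ entry of $MN$ is $\sum_{k=1}^3 M_{ik}N_{kj}$; the $k=3$ term vanishes because $M_{i3}\equiv 0\mod p$ for $i\in\{1,2\}$ by the shape above. Hence $(MN)_{33}=M_{33}N_{33}$, i.e.\ $K_p'(MN)=K_p'(M)K_p'(N)$, which is the homomorphism identity.

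It remains to verify that the image actually lies in $GL((\ZZ/p)^2)$, i.e.\ that $M_{33}$ is invertible. Using the parameterization again,
\beq\notag
\det M_{33}\equiv a\cdot sa-svb\cdot b\equiv s(a^2-vb^2)\equiv s\equiv\pm 1\mod p,
\eeq
where the penultimate step uses the constraint $a^2-vb^2\equiv 1\mod p$ from Theorem \ref{parampi1SO3}. This is a unit in $\ZZ/p$, so $M_{33}\in GL((\ZZ/p)^2)$.

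There is no real obstacle here: the entire content of the proposition is the vanishing of the upper-right block together with the determinant computation. The slightly subtle point worth emphasizing is that without Remark \ref{doubleapproach} (the equivalence $G_p=\widehat G_p$, which gives us the full parameterization of $G_p$), one would have to verify the block-zero structure indirectly from the defining relations of $SO(3)_p\mod p$; with that equivalence in hand, both the homomorphism property and the invertibility of the target fall out immediately.
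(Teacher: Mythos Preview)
Your proof is correct and essentially matches the paper's approach: both invoke the parameterization of Theorem~\ref{parampi1SO3} via Remark~\ref{doubleapproach}, compute $\det M_{33}\equiv s(a^2-vb^2)\equiv \pm1$ to land in $GL((\ZZ/p)^2)$, and verify the homomorphism property using the vanishing third column in the first two rows. The only cosmetic difference is that the paper spells out the product $K_p'(M)K_p'(N)$ and the upper-left block of $MN$ entry by entry, whereas you phrase it as a block-triangular observation; the content is the same.
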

\begin{proof}
According to the parameterization \eqref{eq:matsistmodp} and Remark \ref{doubleapproach}, the map $K_p'$ acts as follows:
\beq\notag 
  \begin{pmatrix}
       a & svb & 0 \\
       b & s a & 0 \\
       c & d & s
      \end{pmatrix}
      \mapsto 
   \begin{pmatrix}
       a & svb\\
       b & s a
      \end{pmatrix}.
\eeq
The image $\mathfrak{I}(K_p')$ is parameterized by $(a,b)\in(\ZZ/p)^2$, which are solutions of $a^2-vb^2\equiv1\mod p$ and $s\in\{\pm1\}$
.
\begin{itemize}
    \item $\mathfrak{I}(K_p')\subset GL(\ZZ/p)$,
    \beq\notag
 \det
 \begin{pmatrix}
       a & svb\\
       b & s a
      \end{pmatrix}
      =
  s(a^2-vb^2)\equiv\pm1\neq0.   
\eeq
\item $K_p'$ is a group homomorphism,
\beq\notag  K_p'\begin{pmatrix}
a&svb&0\\b&sa&0\\c&d&s
\end{pmatrix}K_p'\begin{pmatrix}
a'&s'vb'&0\\b'&s'a'&0\\c'&d'&s'
\end{pmatrix}=\begin{pmatrix}
a&svb\\b&sa
\end{pmatrix}\begin{pmatrix}
a'&s'vb'\\b'&s'a'
\end{pmatrix}=\begin{pmatrix}
aa'+vsbb'&vs'(ab'+sa'b)\\a'b+sab'&s'(saa'+vbb')
\end{pmatrix},\eeq
which equals
\begin{align*}
    K_p'\left(\begin{pmatrix}
a&svb&0\\b&sa&0\\c&d&s
\end{pmatrix}\begin{pmatrix}
a'&s'vb'&0\\b'&s'a'&0\\c'&d'&s'
\end{pmatrix}\right)=K_p'\begin{pmatrix}
aa'+svbb'&vs'(ab'+sa'b)&0\\a'b+sab'&s'(saa'+vbb')&0\\\dots&\dots&ss'
\end{pmatrix}.
\end{align*}
\end{itemize}
\end{proof}

A point of interest lies in the behaviour of $a$ and $b$ as solutions of 
$a^2-vb^2\equiv1\mod p$ while composing matrices in $\mathfrak{I}(K_p')$, 
\cite{pigliapochi-msc}. 
In $\mathfrak{I}(K_3')$ 
there are only the matrices
\begin{equation}
 \begin{pmatrix}
  1 & 0\\
  0 & 1
 \end{pmatrix},
\quad
 \begin{pmatrix}
  0 & 1\\
  1 & 0
 \end{pmatrix},
\end{equation}
up to sign changes on the components, and it is trivial that products of matrices of this form give back matrices of this form. However, which mathematical property guarantees that a product
\beq 
\begin{pmatrix}
a&svb\\b&sa
\end{pmatrix}\begin{pmatrix}
a'&s'vb'\\b'&s'a'
\end{pmatrix}=\begin{pmatrix}
aa'+vsbb'&vs'(ab'+sa'b)\\a'b+sab'&s'(saa'+vbb')
\end{pmatrix}
\eeq 
still gives a matrix coherent with the definition in $\mathfrak{I}(K_p')$
? Can we find some deeper meaning about it?

As proved in Appendix \ref{sec:a2vb21modp}, solutions of $a^2-vb^2\equiv1$ in $\zed/p$ form a cyclic group: the key to prove it lies in looking at the pair $(a,b)$ as a single ``complex'' number $a+ib$, $i^2=v$, and introducing products between pairs as products between complex numbers. As a consequence, the ``product'' of two solutions gives another solution according to the product of matrices. As a matter of fact, another way to represent the complex product is by writing the components in a $2\times2$ matrix of the form
\beq
\begin{pmatrix}
       a & vb \\
       b & a
      \end{pmatrix}
\eeq
and applying the common matrix product.

This is of great interest for us, as these are exactly half of the matrices in $\mathfrak{I}(K_p')$
, the ones with $s=1$. The other half, the matrices with $s=-1$, are obtained simply by multiplying these by the (Pauli) matrix $Z$,
\begin{equation}
 \begin{pmatrix}
       a & -vb \\
       b & -a
      \end{pmatrix}
      =
\begin{pmatrix}
       a & vb \\
       b & a
      \end{pmatrix}Z,      
      \qquad
       Z=\begin{pmatrix}
       1 & 0 \\
       0 & -1
      \end{pmatrix}.
\end{equation}
Note that $Z\in 
\mathfrak{I}(K_p')
$ because 
$\cR_x(\infty)=\diag(1,-1,-1)_{\ZZ_p}\in SO(3)_p$ for every prime $p$.

This means that the group $\mathfrak{I}(K_p')
$ is generated by two matrices,
\beq
\begin{array}{cc}
 C=\begin{pmatrix}
       a_0 & -b_0 \\
       b_0 & a_0
      \end{pmatrix}, &
 Z=\begin{pmatrix}
       1 & 0 \\
       0 & -1
      \end{pmatrix}
\end{array},
\eeq
where $C$ is the matrix obtained by $(a_0,b_0)$, a generator of the cyclic group of the solutions $(a,b)$.

Finally we observe that
\begin{itemize}
 \item $C^{p+1}=\1$ because of the isomorphism with the solutions subgroup of $\zed/p[i]^*$,
 \item $Z^2=\1$, and
 \item $ZCZ=C^{-1}$, as can be checked readily.
\end{itemize}
\par This gives us a complete presentation for our group,
\begin{equation}
\mathfrak{I}(K_p')
=\langle C,Z \mid C^{p+1}=Z^2=\1, ZCZ=C^{-1}\rangle,
\end{equation}
which is exactly the presentation for the dihedral group $D_{p+1}$: in other words, the group $\mathfrak{I}(K_p')
$ is isomorphic to the group of symmetries of a $(p+1)$-gon. Denoting this isomorphism with $\varphi$, we have
\beq \label{eq:isodihedral}
\varphi:\mathfrak{I}(K_p')
\ \stackrel{\sim}{\rightarrow}\ D_{p+1},\ C\mapsto a,\,Z\mapsto x,
\eeq 
where $a$ and $x$ are the generators (rotation and reflection, respectively) of $D_{p+1}$.

This will allow us to exploit the known representation theory of dihedral 
groups for the $p$-adic group $SO(3)_p$, some basic facts about which are reported 
in Appendix \ref{dihrep}. While we will not discuss this in detail, Appendix \ref{dihrep1} 
shows that $\mathfrak{I}(K_p')
$ has always four one-dimensional representations. 
Furthermore, Appendix \ref{dihrep2} shows that there always exists a two-dimensional 
irrep of $\mathfrak{I}(K_p')$, and for $p>3$, there are more than one 2-irreps, 
leading to different $p$-adic qubits.

\begin{remark}
We summarize the preceding discussion by recording that 
\beq 
  J_p^{(i)}:=\rho_i\circ\varphi\circ K_p'\circ\pi_1,
\eeq 
where $\rho_i$'s are the $2$-irreps of $D_{p+1}$, are 
$2$-irreps of $SO(3)_p$, providing examples of $p$-adic qubits for every $p>2$.
\end{remark}


\section{Representations for $SO(3)_3 \mod 3$}
\label{sec:3-mod-3}
In the present section, we exploit the previous results for the case $p=3$ to study some irreducible representations of $SO(3)_3$. From these will emerge one irrep of degree 2, which we will propose as the foundation for the $3$-adic qubit.


\subsection{Abelianization}

As we anticipated, a starting point in the study of the irreducible representations of $SO(3)_3$ is through those of its projection $G_3$, which is a finite group. First, we find its abelianization to understand how many one-dimensional representations $G_3$ has, which benefits the analysis of the degrees of its remaining irreps.

We specify Eq. \eqref{eq:matsistmodp} for $p=3$: by Remark \ref{doubleapproach}, the matrices of $G_3$ are parametrized by 
\beq \label{G3param}
M(a,b,c,d,s)=\begin{pmatrix}
a & -s b & 0 \\
b & s a & 0 \\
c & d & s
\end{pmatrix}
\eeq 
with $(a,b)\in\{(\pm1,0),\,(0,\pm1)\}$. Now we consider Eqs. \eqref{eq:commG35} and \eqref{eq:commG35entries} for $p=3$: $c_{21}\equiv0\mod 3$ because $a_{11}a_{21}\equiv b_{11}b_{21}\equiv0\mod3$. Moreover, $c_{11}\equiv\pm1\mod 3$, so a commutator of $G_3$ is of the form
\beq\notag 
[M,N]=\begin{pmatrix}
\pm1&0&0\\0&\pm1&0\\c_{31}&c_{32}&1
\end{pmatrix}.
\eeq 
The set of commutators of this form is closed under multiplication, meaning that 
\beq\label{eq:[G3,G3]}
[G_3,G_3]=\left\{\begin{pmatrix}
c_{11}&0&0\\0&c_{11}&0\\c_{31}&c_{32}&1
\end{pmatrix},\ c_{11}\in\{\pm1\},\,c_{31},c_{32}\in\ZZ/3\right\}.
\eeq 
By comparing this with Eq. \eqref{G3param}, we can state the following.
\begin{proposition}
 $M(a,b,c,d,s)\in G_3$ belongs to the commutator subgroup $[G_3,G_3]$ if and only if $s=1$ and $(a,b)=(\pm1,0)$.
\end{proposition}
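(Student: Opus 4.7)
The proof reduces to a direct comparison of two parameterizations that are both already in hand: the description of a generic element of $G_3$ via Eq.~\eqref{G3param}, and the description of the commutator subgroup $[G_3,G_3]$ via Eq.~\eqref{eq:[G3,G3]}. Since the latter characterization is given for granted by the excerpt, the proposition becomes a matter of matching matrix entries in both directions.

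The plan is as follows. For the ``if'' direction, assume $s=1$ and $(a,b)=(\pm 1,0)$. Substituting into \eqref{G3param}, the off-diagonal entries $-sb$ and $b$ vanish, the $(3,3)$ entry equals $1$, and the $(1,1)=(2,2)$ entry equals $\pm 1$; the $(3,1)$ and $(3,2)$ entries $c,d$ can be anything in $\ZZ/3$. This is exactly the shape of the matrices described in \eqref{eq:[G3,G3]}, so $M(a,b,c,d,s)\in [G_3,G_3]$.

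For the ``only if'' direction, suppose $M(a,b,c,d,s)\in[G_3,G_3]$. Then by \eqref{eq:[G3,G3]}, reading off the $(3,3)$ entry of the commutator form forces $s=1$; reading off the $(2,1)$ entry forces $b=0$; and reading off the $(1,1)$ entry forces $a\in\{\pm 1\}$. Recalling that the admissible pairs for $G_3$ are $(a,b)\in\{(\pm 1,0),(0,\pm 1)\}$, the condition $b=0$ singles out $(a,b)=(\pm 1,0)$.

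There is essentially no obstacle, since the hard work, namely showing that the commutators of $G_3$ are precisely the matrices in \eqref{eq:[G3,G3]}, has already been carried out via the explicit computation \eqref{eq:commG35}--\eqref{eq:commG35entries} applied at $p=3$. The only thing to be wary of is to verify that the free choice of $c,d\in\ZZ/3$ in \eqref{eq:[G3,G3]} is genuinely realized, so that the ``if'' direction gives no spurious matrices; but this is already part of how the commutator subgroup was identified. Thus the proposition is obtained simply by comparing the two parameterizations entrywise.
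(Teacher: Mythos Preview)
Your proof is correct and follows precisely the approach the paper indicates: the paper simply states ``By comparing this with Eq.~\eqref{G3param}, we can state the following,'' and your argument is exactly that entrywise comparison of the parameterizations \eqref{G3param} and \eqref{eq:[G3,G3]}, spelled out in both directions.
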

The order of $[G_3,G_3]$ is $18$, so the abelianization $\Ab(G_3):=G_3/[G_3,G_3]$ is of order $72/18=4$.

\begin{proposition}\label{Abchar3}
$\Ab(G_3)$ is isomorphic to $\zed/2\times\zed/2$: a class of matrices is characterized by $s$ and $t:= a^2-b^2$ using the form in \eqref{G3param}.
 The following equations are possible representatives for the classes $\Ab(s,t)$:
 \[
 \begin{array}{cc}
  \Ab(1,1)\rightsquigarrow
  \begin{pmatrix}
       1 & 0 & 0 \\
       0 & 1 & 0 \\
       0 & 0 & 1
      \end{pmatrix},
      &
  \Ab(-1,1)\rightsquigarrow    
  \begin{pmatrix}
       1 & 0 & 0 \\
       0 & -1 & 0 \\
       0 & 0 & -1
      \end{pmatrix},
     \\
  \Ab(1,-1)\rightsquigarrow    
  \begin{pmatrix}
       0 & -1 & 0 \\
       1 & 0 & 0 \\
       0 & 0 & 1
      \end{pmatrix},
      &
  \Ab(-1,-1)\rightsquigarrow    
  \begin{pmatrix}
       0 & 1 & 0 \\
       1 & 0 & 0 \\
       0 & 0 & -1
      \end{pmatrix}.     
 \end{array}
\]
\end{proposition}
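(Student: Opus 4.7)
The plan is to construct an explicit surjective homomorphism $\phi : G_3 \twoheadrightarrow \{\pm 1\}\times\{\pm 1\}$ whose kernel is exactly $[G_3,G_3]$; the first isomorphism theorem then delivers the claimed structure. By Theorem \ref{parampi1SO3} together with Remark \ref{doubleapproach}, every $M\in G_3$ is uniquely of the form $M(a,b,c,d,s)$ of \eqref{G3param} with $(a,b)\in\{(\pm 1,0),(0,\pm 1)\}$, $c,d\in\ZZ/3$, and $s\in\{\pm 1\}$. I would define
\[
\phi\bigl(M(a,b,c,d,s)\bigr):=(s,\,t),\qquad t:= a^2-b^2\in\{\pm 1\},
\]
which is well defined because the parameterization is unique.

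The substantive step is to verify that $\phi$ is a homomorphism. Multiplying $M(a,b,c,d,s)\,M(a',b',c',d',s')$ in block form using $v=-1$ produces a matrix of the same shape with new parameters
\[
 s''=ss',\qquad a''=aa'-sbb',\qquad b''=a'b+sab',
\]
while the new third row does not enter $\phi$. A short four-case analysis, according to whether each of $(a,b),(a',b')$ is ``horizontal'' ($b=0$) or ``vertical'' ($a=0$), shows that $(a'',b'')$ is horizontal iff both input pairs are of the same type; equivalently $t''=tt'$. Combined with $s''=ss'$, this gives that $\phi$ is a homomorphism into the abelian group $\{\pm 1\}^2$.

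Because the target is abelian, $\ker\phi\supseteq[G_3,G_3]$. Conversely, $\ker\phi$ consists of the matrices with $s=1$ and $(a,b)=(\pm 1,0)$, so $|\ker\phi|=2\cdot 3\cdot 3=18$, matching $|[G_3,G_3]|$ as read off from \eqref{eq:[G3,G3]}. Hence $\ker\phi=[G_3,G_3]$, and $\Ab(G_3)\cong\{\pm 1\}^2\cong\ZZ/2\times\ZZ/2$, with each coset uniquely labeled by $(s,t)$. The four matrices displayed in the statement are then confirmed as representatives by reading off their $(s,t)$ values directly from the $(3,3)$ entry and the upper-left $2\times 2$ block. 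I do not foresee any genuine obstacle; the only bookkeeping is tracking signs in the four-case product computation, and that very computation makes the multiplicativity of $t$ transparent.
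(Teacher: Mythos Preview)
Your proof is correct and is organized somewhat differently from the paper's. The paper proceeds by direct coset computation: it multiplies a general $M(a,b,c,d,s)$ against the explicit description \eqref{eq:[G3,G3]} of $[G_3,G_3]$ and reads off that the resulting coset depends only on $s$ and on whether $(a,b)$ is of the form $(\pm1,0)$ or $(0,\pm1)$, then lists the four cosets with representatives. You instead package the same content as a surjective homomorphism $\phi=(s,t)$ onto $\{\pm1\}^2$, verify multiplicativity of $t$ by the four-case check on horizontal/vertical pairs, and invoke the first isomorphism theorem together with the cardinality match $|\ker\phi|=18=|[G_3,G_3]|$. Your route is a bit cleaner conceptually and immediately explains \emph{why} $(s,t)$ label the classes; the paper's explicit coset listing, on the other hand, makes the representatives visible without any appeal to homomorphism properties (the multiplicativity of $t$ is in fact established in the paper only later, in the proof of Proposition~\ref{4_1rep3}). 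Both arguments rest on the same prior input, namely the parametrization \eqref{G3param} and the description \eqref{eq:[G3,G3]} of the commutator subgroup.
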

\begin{proof}
According to Eqs. \eqref{G3param} and \eqref{eq:[G3,G3]} we have
\beq \notag 
M(a,b,c,d,s)[G_3,G_3]=\begin{pmatrix}
a&-s b&0\\b&s a&0\\c&d&s
\end{pmatrix}\begin{pmatrix}
c_{11}&0&0\\0&c_{11}&0\\c_{31}&c_{32}&1
\end{pmatrix}=\begin{pmatrix}
a c_{11}&-s c_{11}b &0\\c_{11}b &sa c_{11}&0\\c_{11}c +s c_{31}&c_{11}d +s c_{32}&s 
\end{pmatrix}
\eeq 
with $c_{11},s \in\{\pm1\},\,(a,b)\in\{(\pm1,0),(0,\pm1)\}$.

With fixed $a,b$ and $s$, we have $M(a,b,c,d,s)[G_3,G_3]=M(a,b,c',d',s)[G_3,G_3]$ for every $c,c',d,d'\in \ZZ/3$. Moreover, for every $c,c',\,d,d'\in\ZZ/3$,
\beq\notag 
M(1,0,c,d,1)[G_3,G_3]=M(-1,0,c',d',1)[G_3,G_3]=\left\{\begin{pmatrix}
       c_{11}&0&0\\0&c_{11}&0\\c_{31}'&c_{32}'&1
\end{pmatrix},\ c_{11}\in\{\pm1\},\,c_{31}',c_{32}'\in\ZZ/3\right\}.
\eeq 
A representative of this class is $\1
$, and the class is located by $s=1$ and $t=a^2-b^2=1$.

Similarly, for every $c,c',\,d,d'\in\ZZ/3$,
\begin{align*}
& M(1,0,c,d,-1)[G_3,G_3]=M(-1,0,c',d',-1)[G_3,G_3]=\left\{\begin{pmatrix} c_{11}&0&0\\0&-c_{11}&0\\c_{31}'&c_{32}'&-1
\end{pmatrix},\ c_{11}\in\{\pm1\},\,c_{31}',c_{32}'\in\ZZ/3\right\},\\
& M(0,1,c,d,1)[G_3,G_3]=M(0,-1,c',d',1)[G_3,G_3]=\left\{\begin{pmatrix} 0&-c_{11}&0\\c_{11}&0&0\\c_{31}'&c_{32}'&1
\end{pmatrix},\ c_{11}\in\{\pm1\},\,c_{31}',c_{32}'\in\ZZ/3\right\},\\
& M(0,1,c,d,-1)[G_3,G_3]=M(0,-1,c',d',-1)[G_3,G_3]=\left\{\begin{pmatrix} 0&c_{11}&0\\c_{11}&0&0\\c_{31}'&c_{32}'&1
\end{pmatrix},\ c_{11}\in\{\pm1\},\,c_{31}',c_{32}'\in\ZZ/3\right\}.
\end{align*}
\end{proof}

As a consequence of Proposition \ref{Abchar3}, we have the following further result.

\begin{proposition}\label{4_1rep3}
 $G_3$ cannot have more than four orthogonal one-dimensional irreducible representations $U:G_3\rightarrow\cee^\ast$: a possible choice for the four 1-irreps is as follows:
 \begin{itemize}
  \item $\det:M(a,b,c,d,s)\mapsto 1$,
  \item $s:M(a,b,c,d,s)\mapsto s$,
  \item $t:M(a,b,c,d,s)\mapsto a^2-b^2
  $, and
  \item $st:M(a,b,c,d,s)\mapsto s(a^2-b^2)
  $.
 \end{itemize}
\end{proposition}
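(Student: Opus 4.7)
The plan is to apply the standard character-theoretic fact that the one-dimensional complex representations of a finite group $G$ are in bijection with the characters of its abelianization $G/[G,G]$, and that an abelian group has exactly as many irreducible characters as its order. Combined with Proposition \ref{Abchar3}, which identifies $\Ab(G_3) \cong \ZZ/2 \times \ZZ/2$ of order $4$, this immediately forces that $G_3$ admits at most four one-dimensional irreps (and in fact exactly four, once we exhibit them).

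For the second part, I would verify that each of the four listed maps is a well-defined group homomorphism $G_3 \to \cee^\ast$. The determinant is automatic: $G_3 \subset SL(3,\ZZ/3)$, so $\det \equiv 1$, giving the trivial character. The sign $s$ is multiplicative by inspection of the block-lower-triangular form \eqref{G3param}: the $(3,3)$ entry of the product $M(a,b,c,d,s)\cdot M(a',b',c',d',s')$ equals $ss'$, so $s$ is a character. The coordinate $t = a^2 - b^2$ is similarly a character, and the cleanest argument is to invoke Proposition \ref{Abchar3} directly: the abelianization is explicitly parametrized by $(s,t)$, so $t$ is the projection onto the second $\ZZ/2$ factor. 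The product $st$ is then a character as the product of two characters.

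To confirm orthogonality, I would evaluate these four maps on the four class representatives produced in Proposition \ref{Abchar3}: each combination $(s,t) \in \{\pm 1\}^2$ is realized on precisely one representative, and the characters $1,\, s,\, t,\, st$ reproduce the usual character table of $\ZZ/2 \times \ZZ/2$ and are therefore pairwise distinct. Since there are exactly four irreducible characters of $(\ZZ/2)^2$ and we have produced four distinct ones, they exhaust the list.

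The only mildly delicate point is verifying that $t$ is genuinely multiplicative on $G_3$. Rather than compute $t(MM')$ directly from the matrix entries (which would involve a small case analysis over the four admissible values of $(a,b) \in \{(\pm 1,0),(0,\pm 1)\}$), I would lean on Proposition \ref{Abchar3}: the statement that $G_3/[G_3,G_3]$ is parametrized by the pair $(s,t)$ is exactly the assertion that $t$ descends to a well-defined map on the abelianization, and hence to a character of $G_3$ since the quotient has exponent $2$.
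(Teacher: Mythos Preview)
Your approach is essentially the paper's: both invoke Proposition~\ref{Abchar3} to bound the number of one-dimensional irreps by $|\Ab(G_3)|=4$, then exhibit the four maps $\det,s,t,st$ and verify distinctness via the character table of $(\ZZ/2)^2$.

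The one place where your argument is weaker than the paper's is the multiplicativity of $t$. You write that ``$t$ descends to a well-defined map on the abelianization, and hence to a character of $G_3$ since the quotient has exponent $2$.'' But a well-defined set map $\Ab(G_3)\to\{\pm 1\}$ is not automatically a homomorphism, and the exponent of the quotient is irrelevant here (e.g.\ on $(\ZZ/2)^2$ the map sending only the identity to $+1$ is well defined and respects exponent $2$ but is not a character). What you actually need is that $(s,t)$ furnishes a \emph{group} isomorphism $\Ab(G_3)\to(\ZZ/2)^2$, not merely a bijection of sets; Proposition~\ref{Abchar3} as proved only establishes the bijection. The paper closes this gap by a direct computation: from the product $M(a,b,c,d,s)M(a',b',c',d',s')$ one reads off
\[
t(MM') = (aa'-sbb')^2 - (sab'+a'b)^2 = (a^2-b^2)(a'^{2}-b'^{2}) = t(M)\,t(M'),
\]
where the middle identity uses $ab\equiv 0\bmod 3$ for every admissible $(a,b)\in\{(\pm 1,0),(0,\pm 1)\}$. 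This is exactly the ``small case analysis'' you hoped to avoid, but it is a one-line check and is what actually certifies that $t$ is a character.
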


\begin{proof}
A $1$-irrep for a group is always the trivial representation mapping any element of the group to $1$. This can be seen as the determinant on matrices of $G_3$.

The irreducible character $\chi_U$ of a 1-irrep $U$ is $U$ itself: in fact, the trace of a $1\times1$ matrix is equal to the only value it contains. Since the product in $\CC$ is commutative, the $1$-irreps of $G_3$ have to be invariant on abelianized classes. As consequences, we can actually build the character table for $\Ab(G_3)$ in order to find the 1-irreps of $G_3$, and the two characterizing quantities $s$ and $t$ in Proposition \ref{Abchar3} can be good candidates as characters. Indeed, we show that $\det,\,s,\,t,\,st$, as defined in the Proposition, are the four $1$-irreps of this group.

Now, $\det,\,s,\,t,\,st$ are homomorphisms from $G_3$ to $\CC^\ast$ because
\beq\notag 
M(a,b,c,d,s)M(a',b',c',d',s')=\begin{pmatrix}
        aa'-sbb'&\dots&0\\
        sab'+a'b&\dots&0\\\dots&\dots&ss'
        \end{pmatrix},
\eeq 
so $s(MM')=ss'=s(M)s(M')$, $t(MM')=(aa'-sbb')^2-(sab'+a'b)^2=(a^2-b^2)({a'}^2-{b'}^2)=t(M)t(M')$, and similarly for $st$.

Furthermore, in accordance with general Peter-Weyl theory \cite{simon}, 
$\det,\,s,\,t,\,st$ are orthogonal representations. This can be seen by looking at the character table of $\Ab(G_3)$.
\beq \notag
\begin{array}{ l c c c c } 
 \hline\hline
\Ab(G_3)\simeq\{(s,t)\} \quad &\quad (1,1)\quad &\quad (1,-1)\quad & \quad (-1,1)\quad & \quad (-1,-1)  \\  
\det &1 &1 &1&1 \\  
 s& 1 &1 & -1&-1 \\  
t & 1 &-1 &1 &-1\\ 
st & 1& -1&-1 &1 \\ \hline\hline
\end{array}
\eeq
The rows are all different, and they correspond to the different values the $1$-irreps associate with each element of $G_3$.

We have shown four 1-irreps of $G_3$: any other 1-irrep would be
equivalent to these because of Proposition \ref{Abchar3}.
\end{proof}


\subsection{Degree of representations}

The Peter-Weyl theorem \cite{simon} states that the sum of the squares of the degrees $d_{\alpha}$ of the irreducible representations of $G$ located by $\alpha$ is equal to the order of $G$.

The order of $G_3$ is $72$ by Corollary \ref{cor:ordineGp}, and there are $4$ 
irreducible representations of $G_3$ of degree $1$ by Proposition \ref{4_1rep3}. Then,
\begin{equation}
 \sum_{\alpha\mid d_{\alpha}>1}d_{\alpha}^2=68.
\end{equation}
Unfortunately, this is not sufficient to deduce the nature of higher-degree irreps: we need another proposition (proved in Appendix \ref{conjclg3}).

\begin{proposition}\label{conjclg3prop}
 $G_3$ is partitioned in nine conjugacy classes. Their orders and lists are in Appendix \ref{conjclg3}.
\end{proposition}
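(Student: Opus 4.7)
The plan is to use the homomorphism $K_3': G_3 \to \mathfrak{I}(K_3') \cong D_4$ from Section~\ref{pi1SO3} as the organizing tool. Since conjugation in $G_3$ descends through $K_3'$ to conjugation in $D_4$, two elements of $G_3$ whose images lie in distinct $D_4$-conjugacy classes cannot be $G_3$-conjugate. Because $D_4$ has the five standard conjugacy classes $\{I\}$, $\{C^2\}$, $\{C, C^3\}$, $\{Z, C^2 Z\}$, $\{CZ, C^3 Z\}$, and $|\ker K_3'| = 9$, the task reduces to analysing five fibres of sizes $9, 9, 18, 18, 18$ and determining how each splits into $G_3$-conjugacy classes.

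To carry out this analysis I would write every element of $G_3$ in block form, $M = \bigl(\begin{smallmatrix} B & 0 \\ w^T & s \end{smallmatrix}\bigr)$ with $B = K_3'(M)$ and $w = (c,d)^T \in (\ZZ/3)^2$, and likewise $M' = \bigl(\begin{smallmatrix} B' & 0 \\ u^T & s' \end{smallmatrix}\bigr)$. A short block-matrix computation shows that $M' M M'^{-1}$ has upper-left block $B' B B'^{-1}$ and a new third row depending linearly on $w$, $u$, $B$, $B'$, and $s$. Two specializations drive the argument: (i) conjugating by a kernel element ($B' = I$, $s' = 1$) sends $w^T$ to $w^T + u^T(B - sI)$, so the kernel-orbit of $w^T$ inside a fibre is the affine coset $w^T + \operatorname{Im}(B - sI) \subset (\ZZ/3)^2$; and (ii) conjugating by $M'$ with $B' \notin Z_{D_4}(B)$ moves between different fibres covering the same $D_4$-conjugacy class. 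Hence, within a single fibre, the $G_3$-conjugacy structure is controlled by the rank of the $2\times 2$ matrix $B - sI$ over $\ZZ/3$, while across a full $D_4$-class it is completed by pairing kernel-orbits via the $D_4$-action on fibres.

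Running through the five fibres: in the identity fibre ($B = I$, $s = 1$), $B - sI = 0$, the kernel acts trivially, and the residual right $D_4$-action on $(\ZZ/3)^2$ yields three orbits $\{(0,0)\}$, $\{(\pm 1, 0), (0, \pm 1)\}$, $\{(\pm 1, \pm 1)\}$ of sizes $1, 4, 4$. Over $\{C^2\}$ (with $s=1$) one has $B - sI = -2I \equiv I \pmod 3$, which is invertible, and the nine-element fibre collapses to a single class; the same full-rank phenomenon occurs over $\{C, C^3\}$, producing one class of size $18$. Over the reflection classes $\{Z, C^2 Z\}$ and $\{CZ, C^3 Z\}$ (with $s=-1$) the matrix $B - sI$ has rank~$1$, so each single-$B$ fibre breaks into three kernel-orbits of size~$3$; the $D_4$-averaging across the two fibres of a reflection class then pairs these orbits into two $G_3$-classes per reflection class. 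Summing gives $3 + 1 + 1 + 2 + 2 = 9$ conjugacy classes, with sizes totalling $|G_3| = 72$.

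The main obstacle I anticipate is step~(ii) for the two reflection fibres: verifying precisely how the three kernel-orbits inside the fibre over $B$ match up with the three inside the fibre over $B' B B'^{-1} \ne B$ under conjugation by an explicit witness $M' \in G_3$ with $B' \notin Z_{D_4}(B)$. This is a direct but delicate orbit--stabilizer computation, and the class sizes in the reflection fibres must be read off carefully. As a sanity check, the final nine class sizes must sum to $72$ and be compatible with Peter--Weyl: with the four one-dimensional irreps of Proposition~\ref{4_1rep3}, the remaining five irreps must have squared dimensions totalling $68$, forcing one irrep of dimension~$2$ and four of dimension~$4$---consistent with the $p=3$ qubit constructed in this section.
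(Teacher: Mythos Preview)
Your approach is genuinely different from the paper's. The appendix gives no argument at all: it simply lists the nine classes element by element, a brute-force enumeration of all $72$ matrices of $G_3$. Your route through the surjection $K_3':G_3\to D_4$ is far more structural and, done carefully, yields the class sizes $1,4,4,9,18,6,12,6,12$ together with an explanation of why they arise, rather than a table to be checked by hand.

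One point to tighten. In your two-step scheme ``kernel-orbits within a fibre, then cross-fibre matching,'' you are implicitly skipping an intermediate layer: conjugation by elements of $G_3$ whose image lies in the $D_4$-centraliser $Z_{D_4}(B)$ but is not the identity. Such elements preserve the fibre over $B$ yet can merge distinct kernel-orbits. This is exactly what happens in the reflection fibres. Over $B=Z=\mathrm{diag}(1,-1)$ with $s=-1$ you correctly find $B-sI=\mathrm{diag}(2,0)$ of rank~$1$, giving three kernel-orbits $\{(c,0)\},\{(c,1)\},\{(c,-1)\}$. But $C^2=-I$ also centralises $Z$; conjugating by an element over $C^2$ sends $(c,d)\mapsto(-c+\ast,\,-d)$, which already fuses the $d=1$ and $d=-1$ kernel-orbits \emph{inside} the single fibre, producing orbits of sizes $3$ and $6$ there. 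The cross-fibre identification with the fibre over $C^2Z$ then gives the two $G_3$-classes of sizes $6$ and $12$ you anticipate. So your ``main obstacle'' is real but localised: once you insert the centraliser step between kernel-orbits and cross-fibre transport, the reflection analysis goes through cleanly, and the Peter--Weyl check $1+4+4+9+18+6+12+6+12=72$ together with $68=2^2+4\cdot 4^2$ confirms the count.
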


As a consequence of the fact that the number of conjugacy classes of a group equals the number of its irreducible representations \cite{simon}, there are nine non equivalent irreps of $G_3$, among which five have degree greater than 1. Suppose, then, to sort the degrees of these irreps from the smaller to the greater as $d_1,\ldots,d_9$: we can state that
\begin{align}
 d_1=d_2=d_3=d_4=1,\\
 d_5^2+d_6^2+d_7^2+d_8^2+d_9^2=68.\label{sum68}
\end{align}
It is known that $d_\alpha$ divides $\lvert G\rvert$ when $\alpha$ locates an irreducible representation of $G$. This is enough to deduce all of the $d_\alpha$:
\begin{proposition}\label{3dregrees}
The only solution for \eqref{sum68} for integers $d_\alpha$ divisors of $72$ greater than $1$ is the following up to permutations:
\beq
 d_5=2,\ d_6=d_7=d_8=d_9=4.
\eeq
\end{proposition}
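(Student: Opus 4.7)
The plan is to turn the problem into a finite-enumeration problem over a very small set of candidate degrees, using the two constraints that each $d_\alpha$ divides $72$ and that $d_\alpha \ge 2$.

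First I would note that any individual $d_\alpha$ must satisfy $d_\alpha^2 \le 68$, since the five squares are all positive and sum to $68$. Combined with the divisibility of $72$ by $d_\alpha$, this forces $d_\alpha \in \{2,3,4,6,8\}$ (the divisors of $72$ strictly between $1$ and $\sqrt{68}$). So the problem reduces to finding nonnegative integers $n_2,n_3,n_4,n_6,n_8$ with $n_2+n_3+n_4+n_6+n_8=5$ and $4n_2+9n_3+16n_4+36n_6+64n_8=68$.

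Next I would knock out the large values. If $n_8\ge 1$, the remaining four squares must sum to $4$, but each is at least $4$, so the minimum possible is $16$; hence $n_8=0$. If $n_6\ge 1$, the remaining four squares lie in $\{4,9,16\}$ and must sum to $32$; a brief enumeration (the only candidates are ordered multisets from $\{4,9,16\}$) shows that none of $(16,16,\cdot,\cdot)$, $(16,9,\cdot,\cdot)$, $(16,4,\cdot,\cdot)$, $(9,9,\cdot,\cdot)$, $(9,4,\cdot,\cdot)$, $(4,4,\cdot,\cdot)$ can be completed to sum to $32$, so $n_6=0$ as well.

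We are left with $n_2+n_3+n_4=5$ and $4n_2+9n_3+16n_4=68$. Subtracting $4$ times the first equation from the second yields the clean linear Diophantine constraint
\[
5n_3+12n_4=48.
\]
Reducing modulo $5$ gives $2n_4\equiv 3\pmod 5$, i.e.\ $n_4\equiv 4\pmod 5$, and since $n_4\le 5$ the only possibility is $n_4=4$, which forces $n_3=0$ and $n_2=1$. Thus, up to permutation, $(d_5,d_6,d_7,d_8,d_9)=(2,4,4,4,4)$, which is the stated conclusion.

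The only real substance is making sure the divisibility condition $d_\alpha\mid 72$ is actually used to cut down the pool — without it, values like $5$ or $7$ would need to be ruled out separately. I do not anticipate a genuine obstacle; the argument is elementary casework and a one-line modular computation, and the only care needed is the exhaustive check in the $n_6\ge 1$ case.
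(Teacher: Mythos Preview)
Your proof is correct and noticeably cleaner than the paper's. The paper (Appendix \ref{degreeP3}) proceeds by a nested case analysis: fix the largest degree $d_9$, then $d_8$, and so on, exhausting every branch by hand. You instead pass to the multiplicity variables $n_k$ and reduce the problem to the linear Diophantine equation $5n_3+12n_4=48$, which a single congruence mod $5$ solves. The gain is that your argument scales and is easy to check at a glance, whereas the paper's tree of cases is tedious and error-prone (though entirely valid). One small point worth making explicit: in your $n_6\ge 1$ step you assert that the remaining four squares lie in $\{4,9,16\}$; strictly speaking a second $6$ is a priori allowed, and you should remark that $36>32$ rules it out before invoking the enumeration.
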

The proof of this proposition is in Appendix \ref{degreeP3}. Looking at $d_5$, we can now state the following.
\begin{proposition}\label{2irrep1}
 There exists a two-dimensional irrep $V$ of $G_3$, that is,
 \[
 K_3:G_3\rightarrow GL(V), \quad V\simeq\CC^2,
 \]
 such that for any subspace $Y\subset V$,
 \[
  (K_3(G_3))Y=Y\ \text{implies}\ Y=\{0_V\}\text{ or }Y=V.
 \]
\end{proposition}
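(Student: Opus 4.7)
The plan is to construct $K_3$ as the pullback of a $2$-dimensional irrep of the dihedral group $D_4$ along the surjective homomorphism $\varphi\circ K_3'\colon G_3\to D_4$ assembled in the preceding section. For $p=3$ we have $p+1=4$, so Eq.~\eqref{eq:isodihedral} identifies the image $\mathfrak{I}(K_3')$ with $D_4$. Since $K_3'$ is surjective onto its image by definition and $\varphi$ is a group isomorphism, the composition $\varphi\circ K_3'$ surjects $G_3$ onto $D_4$. I would then invoke the representation theory of dihedral groups recalled in Appendix~\ref{dihrep2}: $D_4$ admits a faithful $2$-dimensional irrep $\rho\colon D_4\to GL(\CC^2)$, defined on generators, for instance, by
\[
\rho(a) = \diag(i,-i), \qquad \rho(x) = \begin{pmatrix} 0 & 1 \\ 1 & 0 \end{pmatrix}.
\]
Setting $K_3 := \rho\circ\varphi\circ K_3'$ yields a $2$-dimensional representation of $G_3$ by composition of homomorphisms.

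For irreducibility I would use the elementary fact that if $\phi\colon G\to H$ is a surjective homomorphism and $\sigma\colon H\to GL(V)$ is irreducible, then $\sigma\circ\phi$ is irreducible as a representation of $G$: any $(\sigma\circ\phi)(G)$-invariant subspace of $V$ is automatically stabilised by $\sigma(\phi(G))=\sigma(H)$, which by $H$-irreducibility of $\sigma$ forces it to be $\{0_V\}$ or $V$. Applied to $\phi=\varphi\circ K_3'$ and $\sigma=\rho$, this gives the desired irreducibility of $K_3$.

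No step in this plan presents a genuine obstacle: all substantive content, namely identifying the image of $K_3'$ with $D_4$ and producing a faithful $2$-irrep of $D_4$, has already been handled in the preceding discussion and in the appendix, with only the short formal pullback argument remaining. As a consistency check, Proposition~\ref{3dregrees} forces exactly one $d_\alpha=2$ in the irrep-degree list of $G_3$, so the $K_3$ constructed above is, up to equivalence, the unique $2$-irrep of $G_3$, matching the unique $2$-irrep of $D_4$.
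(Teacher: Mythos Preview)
Your proof is correct, but it takes a different route from the paper at this point. In the paper, Proposition~\ref{2irrep1} is deduced purely by counting: Proposition~\ref{3dregrees} (solved in Appendix~\ref{degreeP3}) forces $d_5=2$ among the irrep degrees, so a $2$-irrep must exist. No construction is given at that stage; the explicit $K_3$ appears only afterwards, and there it is built as $\imath\circ K_3'$ with irreducibility checked directly by exhibiting two image matrices without a common eigenvector (Proposition~\ref{irrepK}). The dihedral pullback you use is acknowledged by the paper only at the very end of the $p=3$ section as an ``equivalent way'' and in the general remark closing Sec.~\ref{pi1SO3}.

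Your approach has the advantage of being constructive and uniform in $p$: it bypasses the conjugacy-class enumeration and the Diophantine degree analysis entirely, relying only on the already-established surjection onto $D_{p+1}$ and the standard fact that irreps pull back along surjections. The paper's counting argument, on the other hand, yields more: it pins down the full list of irrep degrees of $G_3$ (one $2$ and four $4$'s), not just the existence of a $2$-irrep. Your final consistency remark about uniqueness is accurate, but note that the uniqueness is a consequence of the paper's counting, not of your construction.
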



\subsection{A $3$-adic qubit}

Now it is not hard to find an explicit form of a 2-irrep for $SO(3)_3$, \cite{pigliapochi-msc}. An important consequence of the way $G_3=SO(3)_3\mod3=\pi_1(SO(3)_3)$ was constructed according to 
Sec. \ref{sec:quot} is as follows.
\begin{proposition}\label{2irrep2}
 There exists a 2-irrep $J_3$ of $SO(3)_3$,
 \[
  J_3:= K_3\circ\pi_1.
 \]
\end{proposition}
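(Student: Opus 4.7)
The plan is to verify that $J_3 := K_3 \circ \pi_1$ satisfies the two properties required of a $2$-irrep: it is a homomorphism $SO(3)_3 \to GL(V)$ with $V \simeq \CC^2$, and its only invariant subspaces are $\{0_V\}$ and $V$. The first is essentially free: $\pi_1: SO(3)_3 \to G_3$ is a group homomorphism (established in Sec.~\ref{sec:quot} from the fact that $\pi_1'$ is a ring homomorphism on $\ZZ_p$, extended entrywise to matrices), and $K_3: G_3 \to GL(V)$ is a group homomorphism by Proposition~\ref{2irrep1}. So the composition $J_3$ is automatically a homomorphism into $GL(V)$ with $V \simeq \CC^2$.

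For irreducibility, the key observation is that $\pi_1$ is surjective onto $G_3$ by the very definition $G_3 = \pi_1(SO(3)_3)$. Consequently, the image $J_3(SO(3)_3) = K_3(\pi_1(SO(3)_3)) = K_3(G_3)$. Therefore, a subspace $Y \subseteq V$ is $J_3$-invariant if and only if it is $K_3$-invariant, that is, $(K_3(G_3))Y = Y$. By Proposition~\ref{2irrep1}, this forces $Y = \{0_V\}$ or $Y = V$, which is precisely the definition of $J_3$ being irreducible.

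Finally, for continuity (as noted in Sec.~\ref{sec:prelim}, the representations we look for are required to be continuous): $\pi_1$ is continuous from the $p$-adic metric on $SO(3)_3$ to the discrete topology on the finite group $G_3$ (this was recorded in the remark after Theorem~\ref{theo:SOinteri}), and $K_3$, being a map from a finite discrete group, is trivially continuous. Hence $J_3$ is continuous as a composition of continuous maps. I do not anticipate any real obstacle here: the content of the proposition is essentially the pullback principle for representations along a surjective homomorphism, and the only thing to check is the surjectivity of $\pi_1$ onto its declared codomain $G_3$, which holds by construction.
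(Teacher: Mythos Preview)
Your proof is correct and follows essentially the same argument as the paper: both verify that $J_3$ is a homomorphism as a composition of homomorphisms, and deduce irreducibility from the surjectivity $\pi_1(SO(3)_3)=G_3$ together with the irreducibility of $K_3$. Your additional remark on continuity is a welcome supplement (the paper defers this to the final statement on the 3-adic qubit); note only that the continuity remark you cite appears in Sec.~\ref{sec:quot} rather than immediately after Theorem~\ref{theo:SOinteri}.
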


\begin{proof}
 $J_3$ is a homomorphism because it is composition of homomorphisms. $J_3$ is irreducible since $K_3$ is so: if $Y\subset V$, then the following are equivalent:
 \begin{align}
     &(J_3(SO(3)_3))Y=Y,\\
     &(K_3(\pi_1(SO(3)_3)))Y=Y,\\
     &(K_3(G_3))Y=Y.
 \end{align}
 The latter is equivalent to $Y=\{0_V\}$ or $Y=V$.
\end{proof}

The first ingredient to find an explicit form for a 2-irrep $K_3$ is the homomorphism $K_3'$ of Eq. \eqref{eq:homominor}. Furthermore, one considers the embedding of the integers modulo $3$ into the complex numbers \beq 
\imath:\ZZ/3\hookrightarrow \CC,\ 0_{\ZZ/3}\mapsto0_\CC,\, \pm1_{\ZZ/3}\mapsto\pm1_\CC.
\eeq 
This can be extended to a map between groups of matrices on these fields, respectively, which we will always call $\imath$ with abuse of notation. Note that the solutions $(a,b)\in\{(\pm1,0),(0,\pm1)\}\subset(\ZZ/3)^2$ of $a^2+b^2\equiv 1\mod3$ are such that their embedding $\{(\pm1,0),(0,\pm1)\}\subset\CC^2$ is solution of $a^2+b^2=1$ in $\CC$. 
\begin{proposition}\label{prop:immersione}
$\imath:\mathfrak{I}(K_3')
\rightarrow GL(\CC^2)$ is a group homomorphism.
\end{proposition}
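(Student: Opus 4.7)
The plan hinges on the observation that although $\imath:\ZZ/3\hookrightarrow\CC$ is \emph{not} a ring homomorphism (indeed $1+1+1=0$ in $\ZZ/3$ while $3\neq 0$ in $\CC$), its entrywise extension nevertheless preserves the multiplicative structure of $\mathfrak{I}(K_3')$, because the arithmetic within this image never triggers a reduction modulo $3$.

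First I would enumerate $\mathfrak{I}(K_3')$ explicitly. Since $p=3$ forces $v=-1$, and the equation $a^2+b^2\equiv 1\mod 3$ has precisely the four solutions $(\pm 1,0)$ and $(0,\pm 1)$, the image consists of the eight matrices
\[
  M(a,b,s)=\begin{pmatrix} a & -sb \\ b & sa \end{pmatrix},\quad (a,b)\in\{(\pm 1,0),(0,\pm 1)\},\ s\in\{\pm 1\},
\]
all of whose entries already lie in $\{-1,0,1\}\subset\ZZ/3$. Since $\det M(a,b,s)=s(a^2+b^2)=\pm 1\neq 0$ viewed in $\CC$, the image $\imath(M(a,b,s))$ is automatically invertible, so $\imath$ is well-defined as a map $\mathfrak{I}(K_3')\to GL(\CC^2)$.

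Next, to verify $\imath(MN)=\imath(M)\imath(N)$, I would use the product formula
\[
  M(a,b,s)M(a',b',s')=\begin{pmatrix} aa'-sbb' & -s'(ab'+sa'b) \\ a'b+sab' & s'(saa'-bb') \end{pmatrix}
\]
already displayed in the section and observe that in each pair $(a,b)$ and $(a',b')$ exactly one coordinate vanishes. Consequently every entry of $MN$ reduces to a single signed product $\pm xy$ with $x,y\in\{0,\pm 1\}$ and thus again lies in $\{-1,0,1\}$. In particular, no modular reduction is ever invoked, so the matrix $MN$ computed in $\mathcal{M}_{2\times 2}(\ZZ/3)$ has literally the same entries, viewed through the bijection $\{-1,0,1\}\subset\ZZ/3 \leftrightarrow \{-1,0,1\}\subset\CC$, as the product $\imath(M)\imath(N)$ computed in $\mathcal{M}_{2\times 2}(\CC)$. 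Applying $\imath$ entrywise to both sides yields the identity.

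The only real obstacle is conceptual: one must resist treating $\imath$ as a (nonexistent) ring homomorphism and instead isolate the fact that $\mathfrak{I}(K_3')$ is supported on a small subset of $\mathcal{M}_{2\times 2}(\ZZ/3)$ on which $\imath$ is compatible with both addition and multiplication, because the sums and products encountered never exceed $1$ in absolute value. A more structural alternative would be to invoke the isomorphism $\mathfrak{I}(K_3')\simeq D_{p+1}=D_4$ from \eqref{eq:isodihedral}: the eight complex matrices $\imath(M(a,b,s))$ manifestly satisfy $C^4=Z^2=\1$ and $ZCZ=C^{-1}$, hence define a group homomorphism from $D_4$ into $GL(\CC^2)$ which coincides with $\imath\circ\varphi^{-1}$ on generators. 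The direct entrywise check, however, is so brief that this detour is unnecessary.
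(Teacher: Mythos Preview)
Your proof is correct and follows essentially the same approach as the paper: both verify invertibility via the determinant and then argue that, because each pair $(a,b)$ has exactly one nonzero coordinate, the entries of the product formula never involve a sum of two nonzero terms, so the computations in $\ZZ/3$ and in $\CC$ agree with no reduction modulo $3$ ever occurring. Your additional remark about the $D_4$ presentation is a nice alternative but, as you note, unnecessary here.
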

\begin{proof}
\begin{itemize}
    \item $\mathfrak{I}(\imath)\subset GL(\CC^2)$,
    \beq\notag
    \det\begin{pmatrix}
           a&-sb\\b&sa
    \end{pmatrix}_{\CC}=s(a^2+b^2)=s\neq0,
    \eeq 
    where $(a,b)\in\{(\pm1,0),(0,\pm1)\}\subset\CC^2$.
\item $\{0,\pm1\}\subset\CC$ is not a subfield since it is not closed under the addition operation. However, if we look at the entries of $\begin{pmatrix}
aa'-sbb'&-s'(ab'+sa'b)\\a'b+sab'&s'(saa'-bb')
\end{pmatrix}$, when these operations are to be performed both in $\ZZ/3$ and in $\CC$, they never present sums between $\pm1$ and $\pm1$ because $(a,b),(a',b')\in\{(\pm1,0),(0,\pm1)\}$. It follows that
\begin{align*}
    &\imath\begin{pmatrix}
           a&-sb\\b&sa
    \end{pmatrix}_{\ZZ/3}\imath\begin{pmatrix}
           a'&-s'b'\\b'&s'a'
    \end{pmatrix}_{\ZZ/3}=\begin{pmatrix}
           a&-sb\\b&sa
    \end{pmatrix}_{\CC}\begin{pmatrix}
           a'&-s'b'\\b'&s'a'
    \end{pmatrix}_{\CC}=\begin{pmatrix}
aa'-sbb'&-s'(ab'+sa'b)\\a'b+sab'&s'(saa'-bb')
\end{pmatrix}_{\CC}\\
&=\imath\begin{pmatrix}
aa'-sbb'&-s'(ab'+sa'b)\\a'b+sab'&s'(saa'-bb')
\end{pmatrix}_{\ZZ/3}=\imath\left(\begin{pmatrix}
           a&-sb\\b&sa
    \end{pmatrix}_{\ZZ/3}\begin{pmatrix}
           a'&-s'b'\\b'&s'a'
    \end{pmatrix}_{\ZZ/3}\right).
\end{align*}
\end{itemize}
\end{proof}

Now we are able to state the following.

\begin{proposition}\label{irrepK}
The function 
\beq 
K_3:G_3\rightarrow GL(\CC^2),\ \ \ \ K_3:=\imath\circ K_3' 
\eeq 
is the unitary $2$-irrep of $G_3$ up to isomorphisms.
\end{proposition}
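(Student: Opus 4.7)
The plan is to verify, in order, that $K_3$ is a group homomorphism, that it takes values in the unitary group, that it is irreducible, and that it is the unique such representation up to equivalence. The first item is essentially immediate: $K_3'$ was shown to be a group homomorphism when it was introduced in \eqref{eq:homominor}, with image $\mathfrak{I}(K_3')$, and $\imath$ restricted to $\mathfrak{I}(K_3')$ is a homomorphism into $GL(\CC^2)$ by Proposition \ref{prop:immersione}. Hence $K_3=\imath\circ K_3'$ is a composition of homomorphisms and therefore a homomorphism $G_3\to GL(\CC^2)$.

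Next I would make the image explicit. Using the parameterization \eqref{G3param} of $G_3$ and the formula for $K_3'$, the image $K_3(G_3)$ consists of the eight matrices
\[
  \begin{pmatrix} a & -sb \\ b & sa \end{pmatrix}_{\CC},\qquad (a,b)\in\{(\pm1,0),(0,\pm1)\},\ s\in\{\pm1\}.
\]
Each of these is a real orthogonal matrix, in particular unitary, so $K_3$ indeed lands in $U(2)$. The count of eight elements matches the prediction of \eqref{eq:isodihedral}: for $p=3$ the image is isomorphic to $D_{p+1}=D_4$, which has order $8$.

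For irreducibility I would show that no one-dimensional subspace of $\CC^2$ is stable under all these matrices. The diagonal element $\diag(1,-1)$ in the image forces any invariant line to be $\CC e_1$ or $\CC e_2$, while the antidiagonal $\left(\begin{smallmatrix}0&1\\1&0\end{smallmatrix}\right)$ swaps these two lines; hence no common invariant line exists and $K_3$ is irreducible. Combined with the previous paragraph, $K_3$ is an irreducible unitary representation of $G_3$ of degree $2$.

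Finally, uniqueness up to isomorphism is a direct corollary of Proposition \ref{3dregrees}: among the nine inequivalent irreps of $G_3$ exactly one has degree $2$, so any other $2$-dimensional irrep is necessarily equivalent to $K_3$. The main obstacle in the entire argument, if any, is the bookkeeping required to list the eight elements of the image and to identify the group structure; once that is accomplished, unitarity, irreducibility, and uniqueness all follow without further computation.
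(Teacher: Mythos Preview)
Your proof is correct and follows the same overall structure as the paper: verify that $K_3$ is a homomorphism (composition of $K_3'$ and $\imath$), check unitarity, prove irreducibility by ruling out invariant lines, and invoke Proposition~\ref{3dregrees} for uniqueness. The differences are in execution only: for unitarity you observe that the eight image matrices are real orthogonal, whereas the paper verifies the inner product identity by direct computation; for irreducibility you pair $\diag(1,-1)$ with the swap matrix $\left(\begin{smallmatrix}0&1\\1&0\end{smallmatrix}\right)$, while the paper pairs the identity with the rotation $\left(\begin{smallmatrix}0&-1\\1&0\end{smallmatrix}\right)$. Your choice is actually cleaner, since the identity matrix imposes no constraint on an invariant line (every vector is an eigenvector), so the paper's pair does not literally have ``no common eigenvectors'' as claimed; your argument with $\diag(1,-1)$ genuinely pins the candidate line down to $\CC e_1$ or $\CC e_2$ before the swap matrix excludes both.
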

\begin{proof}
As said, parameterization \eqref{G3param} is fundamental to understand the action of $G_3$, that is,
\beq\notag 
  \begin{pmatrix}
       a & -s b & 0 \\
       b & s a & 0 \\
       c & d & s
      \end{pmatrix}_{\ZZ/3}
      \stackrel{K_3'}{\mapsto}\ 
   \begin{pmatrix}
       a & -s b\\
       b & s a
      \end{pmatrix}_{\ZZ/3}\stackrel{\imath}{\mapsto}\ 
   \begin{pmatrix}
       a & -s b\\
       b & s a
      \end{pmatrix}_\CC.
\eeq
The image $\mathfrak{I}(K_3)$ of $K_3$ is parameterized by $(a,b)\in\{(\pm1,0),(0,\pm1)\}\subset\CC^2,s\in\{\pm1\}\subset\CC$.
\begin{itemize}
    \item $\mathfrak{I}(K_3)\subset GL(\CC^2)$ as noticed in Proposition \ref{prop:immersione}.
\item $K_3$ is a group homomorphism, being the composition of two group homomorphisms $K_3'$ and $\imath$.
\item $K_3$ is irreducible:\\
if $Y$ is a subrepresentation such that $Y\neq \{0\},\CC^2$, then $\text{dim}(Y)=1$, i.e., $Y=\CC y$ for any $y\in Y$. By definition of subrepresentation, $K_3(G_3)Y\subset Y$ means that $K_3(M)y \in Y$ for every $M\in G_3$ and $y\in Y$: it must be $K_3(M)y=\lambda y$ for some $\lambda\in\CC$ depending on $y$ and $M$. In other words, $y$ is a common eigenvector for every action of the group. This must be true, in particular, for
\beq \notag K_3\big(M(1,0,c,d,1)\big)=\begin{pmatrix}
       1&0\\0&1
\end{pmatrix}_\CC\ \ \ \text{of eigenvectors}\ \ \ \begin{pmatrix}
       1\\0
\end{pmatrix},\ \begin{pmatrix}
       0\\1
\end{pmatrix}
\eeq 
and for 
\beq\notag K_3\big(M(0,1,c,d,1)\big)=\begin{pmatrix}0&-1\\1&0\end{pmatrix}_{\CC}\ \ \ \text{of eigenvectors}\ \ \ \begin{pmatrix}
       i\\1
\end{pmatrix},\ \begin{pmatrix}
       -i\\1
\end{pmatrix}.\eeq 
These two group actions have no common eigenvectors. This is a contradiction: the representation $K_3$ is irreducible.
\item $K_3$ is unitary,\\namely, $\mathfrak{I}(K_3)$ is composed by unitary linear transformations on $\CC^2$,
\begin{align*}
&\langle\begin{pmatrix}
a&-sb\\b&sa
\end{pmatrix}\begin{pmatrix}
  x_1\\
  x_2
 \end{pmatrix}
 ,
 \begin{pmatrix}
a&-sb\\b&sa
\end{pmatrix}\begin{pmatrix}
  y_1\\
  y_2
 \end{pmatrix}
 \rangle=\langle \begin{pmatrix}
 ax_1-sbx_2\\bx_1+sax_2
 \end{pmatrix}, \begin{pmatrix}
 ay_1-sby_2\\by_1+say_2
 \end{pmatrix}\rangle\\
 &=(ax_1-sbx_2)(ay_1-sby_2)+(bx_1+sax_2)(by_1+say_2)\\
 &=(a^2+b^2)(x_1y_1+x_2y_2)= x_1y_1+x_2y_2= \langle
 \begin{pmatrix}
  x_1\\
  x_2
 \end{pmatrix}
 ,
 \begin{pmatrix}
  y_1\\
  y_2
 \end{pmatrix}
 \rangle
 \end{align*}
 for every $\begin{pmatrix}
  x_1\\
  x_2
 \end{pmatrix}
 ,
 \begin{pmatrix}
  y_1\\
  y_2
 \end{pmatrix}\in\CC^2$ according to its Euclidean scalar product.
\end{itemize}
\end{proof}

We have, then, an explicit form for $J_3$ too: it is a function that makes a $2\times2$ complex matrix out of the first components of the $p$-adic sequences in the upper-left $2\times2$ part of a matrix in $SO(3)_3$, that is,
\beq \label{Jaction}
\begin{aligned}
&J_3:=K_3\circ \pi_1:SO(3)_3\rightarrow GL(\CC^2)\\
& \begin{pmatrix}
  (a_{111},a_{112},\ldots) & (a_{121},a_{122},\ldots) & (a_{131},a_{132},\ldots)\\
  (a_{211},a_{212},\ldots) & (a_{221},a_{222},\ldots) & (a_{231},a_{232},\ldots)\\
  (a_{311},a_{312},\ldots) & (a_{321},a_{322},\ldots) & (a_{331},a_{332},\ldots)
 \end{pmatrix}_{\ZZ_3}
\stackrel{J_3}{\longmapsto}
 \begin{pmatrix}
  a_{111} & a_{121}\\
  a_{211} & a_{221}
 \end{pmatrix}_\CC.
\end{aligned}\eeq

We, thus, have the following final result.
\begin{proposition}
The pair $(\CC^2,J_3)$ with $J_3$ defined as in \eqref{Jaction} is a 3-adic qubit, that is, a continuous unitary irreducible linear (and then projective) representation  of degree $2$ on $\CC$ of the $3$-adic special orthogonal group $SO(3)_3$ defined in \eqref{SO3p}.
\end{proposition}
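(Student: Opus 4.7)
The statement is essentially a corollary that collects together the properties of $J_3$ that have already been established in the preceding propositions, so my plan is to check each item in the definition of ``$p$-adic qubit'' and cite the relevant prior result.

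First, I would note that $J_3 = K_3 \circ \pi_1$ is by construction a map $SO(3)_3 \to GL(\CC^2)$, so the degree is $2$ by definition of the codomain. That $J_3$ is a \emph{linear representation} (i.e. a group homomorphism into $GL(\CC^2)$) follows from the fact that it is the composition of two group homomorphisms: $\pi_1$ is a homomorphism because projection modulo $p$ on $p$-adic integer matrices respects sums and products (as developed in Sec.~\ref{sec:quot}), and $K_3 = \imath \circ K_3'$ is a homomorphism by Proposition \ref{irrepK}. Once $J_3$ is a linear representation, it is automatically a projective one.

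Next, I would establish \emph{irreducibility}, which is exactly the content of Proposition \ref{2irrep2}: a $G_3$-invariant subspace $Y \subseteq \CC^2$ pulls back via $\pi_1$ to an $SO(3)_3$-invariant subspace, and vice versa, since $\pi_1$ is surjective onto $G_3$ by definition. Thus irreducibility of $K_3$ (Proposition \ref{irrepK}) transfers directly to $J_3$. \emph{Unitarity} is inherited from $K_3$: every element of $\mathfrak{I}(K_3)$ preserves the Euclidean scalar product on $\CC^2$ by the explicit computation in the proof of Proposition \ref{irrepK}, and so does every element of $\mathfrak{I}(J_3) \subseteq \mathfrak{I}(K_3)$.

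The only item not already spelled out in a prior proposition is \emph{continuity}. Here I would invoke the remark in Sec.~\ref{sec:quot}: $\pi_1 : SO(3)_3 \to G_3$ is continuous when the domain carries the $p$-adic metric and the codomain carries the discrete topology, because $G_3$ is finite. The map $K_3 : G_3 \to GL(\CC^2)$ is then trivially continuous, as its domain is discrete. The composition $J_3 = K_3 \circ \pi_1$ is therefore continuous, completing the verification. Since every ingredient has been proved earlier, there is no real obstacle; the ``proof'' is essentially a bookkeeping assembly of Propositions \ref{2irrep2} and \ref{irrepK} together with the continuity remark from Sec.~\ref{sec:quot}.
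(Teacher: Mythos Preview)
Your proposal is correct and matches the paper's approach exactly: the paper presents this proposition without an explicit proof, the word ``thus'' signalling that it is a direct assembly of Propositions \ref{2irrep2} and \ref{irrepK} together with the continuity Remark from Sec.~\ref{sec:quot}, which is precisely the bookkeeping you carry out. One very minor remark: since $\pi_1$ is surjective onto $G_3$, you actually have $\mathfrak{I}(J_3) = \mathfrak{I}(K_3)$ rather than merely $\subseteq$, but this does not affect the argument.
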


The isomorphism $\varphi:\mathfrak{I}(K_3')
\rightarrow D_4$ of Eq. \eqref{eq:isodihedral} gives us an equivalent way to derive the $2$-irrep $J_3$ of $SO(3)_3$, applying the representation theory of dihedral groups of even degree to $\mathfrak{I}(K_3')$. However, first, Appendix \ref{dihrep1} confirms the four one-dimensional representations we found in Proposition \ref{4_1rep3} for the case $p=3$. Then, Appendix \ref{dihrep2} shows that for $p=3$ there is a unique 2-irrep of $\mathfrak{I}(K_3')$,
\beq 
C_{\ZZ/3}\mapsto\begin{pmatrix}
       0&-1\\1&0
\end{pmatrix}_\CC,\ \ \ \ Z_{\ZZ/3}\mapsto \begin{pmatrix}
       1&0\\0&-1
\end{pmatrix}_\CC.
\eeq 
This coincides with $\imath$ of Proposition \ref{prop:immersione} if we choose $(a_0,b_0)=(0,1)$, leading to the same qubit representation $J_3$ of Eq. \eqref{Jaction}, which, then, can be written as $J_3=\rho\circ\varphi\circ K_3'\circ\pi_1$ too, where $\rho$ denotes the $2$-irrep of $D_4$.

\section{Representations for $SO(3)_5\mod 5$}
\label{sec:5-mod-5}
We have $\widehat{G}_5=G_5$ according to Remark \ref{doubleapproach}. We choose $u=2$ as non-square invertible $5$-adic integer. In this way, the parameterization \eqref{eq:matsistmodp} for the matrices in $G_5$ becomes

\beq \label{G5param}
M(a,b,c,d,s)=\begin{pmatrix}
a & -2s b & 0 \\
b & s a & 0 \\
c & d & s
\end{pmatrix}, 
\eeq 
where $(a,b)\in\{(\pm1,0),\,(\pm2,1),\,(\pm2,-1)\}$ are the solutions of $a^2+2b^2\equiv1\mod5$.


\subsection{Abelianization}
We specify Eqs. \eqref{eq:commG35} and \eqref{eq:commG35entries} for $p=5$: it turns out that $(c_{11},c_{21})\in\{(1,0),\,(2,\pm1)\}$, so a commutator of $G_5$ is of one of the following forms:
\beq 
\begin{pmatrix}
1&0&0\\0&1&0\\c_{31}&c_{32}&1
\end{pmatrix},\ \ \begin{pmatrix}
2&-2&0\\1&2&0\\c_{31}&c_{32}&1
\end{pmatrix},\ \ \begin{pmatrix}
2&2&0\\-1&2&0\\c_{31}&c_{32}&1
\end{pmatrix}.
\eeq 
The set of commutators of these forms is closed under multiplication, meaning that
\beq\label{eq:[G5,G5]}
[G_5,G_5]=\left\{
\begin{pmatrix}
c_{11}&-2c_{21}&0\\c_{21}&c_{11}&0\\c_{31}&c_{32}&1
\end{pmatrix},\ (c_{11},c_{21})\in\{(1,0),(2,\pm1)\},\, c_{31},c_{32}\in\ZZ/5\right\}.
\eeq 
By comparing this with Eq. \eqref{G5param}, we can state the following.
\begin{proposition}
 $M(a,b,c,d,s)\in G_5$ belongs to the commutator subgroup $[G_5,G_5]$ if and only if $s = 1$ and $(a,b)\in\{(1,0),(2,\pm1)\}$.
\end{proposition}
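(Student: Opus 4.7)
The plan is to prove the proposition by a direct comparison of the general parameterization \eqref{G5param} of elements of $G_5$ with the explicit description of $[G_5,G_5]$ just established in \eqref{eq:[G5,G5]}. Since \eqref{eq:[G5,G5]} fixes the $(3,3)$-entry of every commutator to $1$ and pins the upper-left $2\times 2$ block to the shape $\begin{pmatrix} c_{11} & -2c_{21} \\ c_{21} & c_{11}\end{pmatrix}$ with $(c_{11},c_{21})$ drawn from the three pairs $(1,0),(2,1),(2,-1)$, the proposition becomes a matter of identifying which choices of the parameters $(a,b,s)$ make the matrix $M(a,b,c,d,s)$ of \eqref{G5param} land in that set.

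For the forward implication, I would start from $M(a,b,c,d,s)\in [G_5,G_5]$, read off from \eqref{eq:[G5,G5]} that its $(3,3)$-entry is $1$, and observe that in \eqref{G5param} this entry is precisely $s$; hence $s=1$. With $s=1$, the upper-left block of $M(a,b,c,d,1)$ equals $\begin{pmatrix} a & -2b \\ b & a \end{pmatrix}$, which must coincide with $\begin{pmatrix} c_{11} & -2c_{21} \\ c_{21} & c_{11}\end{pmatrix}$ for some $(c_{11},c_{21})\in\{(1,0),(2,\pm 1)\}$. This forces $a=c_{11}$ and $b=c_{21}$, so $(a,b)\in\{(1,0),(2,\pm 1)\}$. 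Conversely, given $s=1$ and $(a,b)\in\{(1,0),(2,\pm 1)\}$, the matrix $M(a,b,c,d,1)$ matches the template in \eqref{eq:[G5,G5]} verbatim with $(c_{11},c_{21},c_{31},c_{32})=(a,b,c,d)$, so it belongs to $[G_5,G_5]$ for any choice of $c,d\in\ZZ/5$.

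The conceptual obstacle is not located in this statement but in the preceding derivation of \eqref{eq:[G5,G5]} itself, namely the two facts tacitly used: that every commutator $[M,N]$ of elements of $G_5$ has an upper-left block with $(c_{11},c_{21})\in\{(1,0),(2,\pm 1)\}$ (obtained by specializing \eqref{eq:commG35entries} to $p=5$ and enumerating the parameters $(a_{11},a_{21},b_{11},b_{21})$ allowed by \eqref{G5param}), and that the set of such matrices is closed under multiplication and hence is the full commutator subgroup. Once \eqref{eq:[G5,G5]} is taken as given, the proposition reduces, as above, to reading off the parameters $(a,b,s)$ from the template. The proof is therefore short and structurally identical to that of the analogous $p=3$ proposition in the previous section.
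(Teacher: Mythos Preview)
Your argument is correct and matches the paper's approach exactly: the paper does not even write out a separate proof, stating only ``By comparing this with Eq.~\eqref{G5param}, we can state the following,'' which is precisely the parameter-matching you carry out. Your observation that the substantive work lies in establishing \eqref{eq:[G5,G5]} beforehand is also accurate.
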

The order of $[G_5,G_5]$ is $75$, so the abelianization $\Ab(G_5):=G_5/[G_5,G_5]$ is of order $300/75=4$.

\begin{proposition}\label{Abchar5}
$\Ab(G_5)$ is isomorphic to $\zed/2\times\zed/2$: a class of matrices is characterized by $s$ and $t:= \text{sign}(a)$ (taking the representatives of the elements in $\ZZ/5$ as $-2,-1,0,1,2$) using the form in \eqref{G5param}.
 The following are possible representatives for the classes $\Ab(s,t)$:
 \[
 \begin{array}{cc}
  \Ab(1,1)\rightsquigarrow
  \begin{pmatrix}
       1 & 0 & 0 \\
       0 & 1 & 0 \\
       0 & 0 & 1
      \end{pmatrix},
      &
  \Ab(-1,1)\rightsquigarrow    
  \begin{pmatrix}
       1 & 0 & 0 \\
       0 & -1 & 0 \\
       0 & 0 & -1
      \end{pmatrix},
     \\
  \Ab(1,-1)\rightsquigarrow    
  \begin{pmatrix}
       -1 & 0 & 0 \\
       0 & -1 & 0 \\
       0 & 0 & 1
      \end{pmatrix},
      &
  \Ab(-1,-1)\rightsquigarrow    
  \begin{pmatrix}
       -1 & 0 & 0 \\
       0 & 1 & 0 \\
       0 & 0 & -1
      \end{pmatrix}.     
 \end{array}
\]
\end{proposition}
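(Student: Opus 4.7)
I would mimic the proof of Proposition~\ref{Abchar3} for the case $p=5$. Using the parameterization \eqref{G5param} together with the explicit description of $[G_5,G_5]$ in \eqref{eq:[G5,G5]}, the plan is to compute the coset $M(a,b,c,d,s)\cdot[G_5,G_5]$ and identify which triples $(a,b,s)$ give the same coset; the parameters $c,d$ are immediately irrelevant, since $c_{31},c_{32}$ range freely over $\ZZ/5$ in \eqref{eq:[G5,G5]}.

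The main step is to right-multiply $M(a,b,c,d,s)$ by $M(c_{11},c_{21},c_{31},c_{32},1)$ for the three admissible upper-left patterns $(c_{11},c_{21})\in\{(1,0),(2,\pm 1)\}$ from \eqref{eq:[G5,G5]}. A direct calculation gives new upper-left parameters
\[
(a',b') \;=\; \bigl(ac_{11}-2sbc_{21},\; bc_{11}+sac_{21}\bigr)\bmod 5.
\]
For $s=1$ this is precisely multiplication in the cyclic group of order $p+1=6$ of solutions of $a^2+2b^2\equiv 1\bmod 5$ (cf.\ the discussion around \eqref{eq:isodihedral}), and the three matrices above are exactly its unique subgroup of order $3$, namely $\{(1,0),(2,\pm 1)\}$; for $s=-1$ one obtains the analogous action via the conjugate, which preserves the same subgroup. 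Hence, regardless of $s$, the six admissible $(a,b)$ split into the two orbits $\{(1,0),(2,\pm 1)\}$ and $\{(-1,0),(-2,\pm 1)\}$, distinguished by $t:=\mathrm{sign}(a)$. This produces exactly four cosets indexed by $(s,t)\in\{\pm 1\}^2$, with the representatives $M(\pm 1,0,0,0,\pm 1)$ listed in the statement.

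Finally, to rule out a cyclic $\ZZ/4$ and conclude $\Ab(G_5)\simeq\ZZ/2\times\ZZ/2$, I would check that each of the three non-identity representatives $\diag(1,-1,-1)$, $\diag(-1,-1,1)$, $\diag(-1,1,-1)$ already squares to $\1$ in $G_5$, a fortiori in the quotient; together with $|\Ab(G_5)|=4$, this forces the Klein-four structure. The only mildly delicate step is the orbit computation on $(a,b)$, but the interpretation as multiplication in the cyclic group makes it essentially automatic, so I do not foresee a real obstacle.
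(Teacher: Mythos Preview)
Your proposal is correct and follows the same overall strategy as the paper: multiply $M(a,b,c,d,s)$ by the elements of $[G_5,G_5]$ from \eqref{eq:[G5,G5]}, note that $c,d$ wash out, and determine which $(a,b,s)$ give the same coset. The only difference is in the orbit step: the paper simply enumerates all six values of $(a,b)$ for each $s$ and writes out the resulting cosets explicitly, whereas you identify $\{(1,0),(2,\pm1)\}$ as the order-$3$ subgroup of the cyclic group of solutions of $a^2+2b^2\equiv1\bmod5$ and read off the two orbits at once; you also add the short check that each nontrivial representative squares to $\1$, which the paper leaves implicit in the $(s,t)$-labelling. Your version is a bit more conceptual but otherwise equivalent.
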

\begin{proof}
According to Eqs. \eqref{G5param} and \eqref{eq:[G5,G5]}, we have
\beq \notag 
M(a,b,c,d,s)[G_5,G_5]=\begin{pmatrix}
a&-2sb&0\\b&sa&0\\c&d&s
\end{pmatrix}\begin{pmatrix}
c_{11}&-2c_{21}&0\\c_{21}&c_{11}&0\\c_{31}&c_{32}&1
\end{pmatrix}=\begin{pmatrix}
ac_{11}-2sbc_{21}&-2ac_{21}-2sc_{11}b&0\\c_{11}b+sac_{21}&-2bc_{21}+sac_{11}&0\\c_{11}c+c_{21}d+sc_{31}&c_{11}d-2c_{21}c+sc_{32}&s
\end{pmatrix}
\eeq 
with $(c_{11},c_{21})\in\{(1,0),(2,\pm1)\},s \in\{\pm1\},\,(a,b)\in\{(\pm1,0),(\pm2,1),(\pm2,-1)\}$.

With fixed $a,b$ and $s$, we have $M(a,b,c,d,s)[G_5,G_5]=M(a,b,c',d',s)[G_5,G_5]$ for every $c,c',d,d'\in \ZZ/5$. Moreover, for every $c,c',c'',\,d,d',d''\in\ZZ/5$,
\begin{align*}
M&(1,0,c,d,1)[G_5,G_5]=M(2,1,c',d',1)[G_5,G_5]=M(2,-1,c'',d'',1)[G_5,G_5]\\
&=\left\{\begin{pmatrix}
c_{11}&-2c_{21}&0\\c_{21}&c_{11}&0\\c_{31}'&c_{32}'&1
\end{pmatrix},\ (c_{11},c_{21})\in\{(1,0),(2,\pm1)\},c_{31}',c_{32}'\in\ZZ/5\right\}.
\end{align*}
A representative of this class is $\1$, 
and the class is located by $s=1,\,t=\text{sign}(a)=1$.

Similarly, for every $c,c',c'',\,d,d',d''\in\ZZ/5$,
\begin{align*}
    M&(-1,0,c,d,1)[G_5,G_5]=M(-2,1,c',d',1)[G_5,G_5]=M(-2,-1,c'',d'',1)[G_5,G_5]\\
    &=\left\{\begin{pmatrix}-c_{11}&2c_{21}&0\\ -c_{21}&-c_{11}&0\\c_{31}'&c_{32}'&1\end{pmatrix},\ (c_{11},c_{21})\in\{(1,0),(2,\pm1)\},c_{31}',c_{32}'\in\ZZ/5\right\},\\
    M&(1,0,c,d,-1)[G_5,G_5]=M(2,1,c',d',-1)[G_5,G_5]=M(2,-1,c'',d'',-1)[G_5,G_5]\\
    &=\left\{\begin{pmatrix}c_{11}&-2c_{21}&0\\ -c_{21}&-c_{11}&0\\c_{31}'&c_{32}'&-1\end{pmatrix},\ (c_{11},c_{21})\in\{(1,0),(2,\pm1)\},c_{31}',c_{32}'\in\ZZ/5\right\},\\
    M&(-1,0,c,d,-1)[G_5,G_5]=M(-2,1,c',d',-1)[G_5,G_5]=M(-2,-1,c'',d'',-1)[G_5,G_5]\\
    &=\left\{\begin{pmatrix}-c_{11}&2c_{21}&0\\c_{21}&c_{11}&0\\c_{31}'&c_{32}'&-1\end{pmatrix},\ (c_{11},c_{21})\in\{(1,0),(2,\pm1)\},c_{31}',c_{32}'\in\ZZ/5\right\}.
\end{align*}
\end{proof}
As a consequence of Proposition \ref{Abchar5}, we have the following further result.
\begin{proposition}\label{4_1rep5}
 $G_5$ has four orthogonal one-dimensional irreducible representations $U:G_5\rightarrow\cee^\ast$: a possible choice for the four 1-irreps is as follows:
 \begin{itemize}
  \item $\det:M(a,b,c,d,s)\mapsto 1$,
  \item $s:M(a,b,c,d,s)\mapsto s$,
  \item $t:M(a,b,c,d,s)\mapsto \text{sign}(a)$ (taking the representatives of the elements in $\ZZ/5$ as $-2,-1,0,1,2$), and
  \item $st:M(a,b,c,d,s)\mapsto s\cdot\text{sign}(a)$.
 \end{itemize}
\end{proposition}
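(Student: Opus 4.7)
The plan is to adapt the argument used for Proposition \ref{4_1rep3} (the $p=3$ case) to $p=5$, leveraging the fact that every one-dimensional representation factors through the abelianization and that we have already identified $\Ab(G_5)$ in Proposition \ref{Abchar5}. First, I would observe that since the codomain $\CC^*$ is abelian, any homomorphism $G_5\to\CC^*$ annihilates $[G_5,G_5]$ and therefore factors uniquely through the quotient map $q:G_5\to\Ab(G_5)$. Thus the one-dimensional irreps of $G_5$ are in bijection with those of $\Ab(G_5)\simeq \ZZ/2\times\ZZ/2$, and the Peter-Weyl count $1^2+1^2+1^2+1^2=4=|\Ab(G_5)|$ shows there are exactly four such irreps, each of degree one.

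Next, I would identify the four irreps explicitly. The character group of $\ZZ/2\times\ZZ/2$ in the coordinates $(s,t)$ provided by Proposition \ref{Abchar5} consists of the four maps $(s,t)\mapsto 1,\ s,\ t,\ st$. Pulling these back along $q$ yields the four candidates in the statement: the trivial character corresponds to $\det$ (the whole of $G_5$ lies in $SL(3,\ZZ/5)$), while $s$, $t$, and $st$ are lifts of the sign of the $(3,3)$-entry, the sign of $a$, and their product. I would verify each is actually a homomorphism: for $s$, the block-triangular form \eqref{G5param} makes $(MM')_{33}=ss'$ immediate; for $t$, the key point is that $t$ is a well-defined coordinate on $\Ab(G_5)$, which is exactly the content of Proposition \ref{Abchar5}; and $st$ is then a homomorphism as the pointwise product of two homomorphisms into the abelian group $\{\pm 1\}$.

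Finally, I would establish orthogonality by writing the $4\times 4$ character table on the four abelianization classes $\Ab(\pm 1,\pm 1)$, observing that the rows are the four distinct sign patterns $(++++),\ (++--),\ (+-+-),\ (+--+)$, and invoking the standard orthogonality of irreducible characters on the finite abelian group $\Ab(G_5)$. Equivalently, one can check directly that for any two distinct characters $\chi,\chi'$ from the list, $\sum_{g\in\Ab(G_5)}\overline{\chi(g)}\chi'(g)=0$.

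The main (mild) obstacle is confirming that $t=\text{sign}(a)$ is a homomorphism, since the upper-left $2\times 2$ product in $G_5$ has first entry $aa'-2sbb'\bmod 5$ rather than $aa'$, so the sign does not obviously multiply. However, this is precisely what is captured when Proposition \ref{Abchar5} exhibits $(s,t)$ as an isomorphism $\Ab(G_5)\xrightarrow{\sim}\ZZ/2\times\ZZ/2$; once that is granted, multiplicativity of $t$ on $G_5$ is automatic, and the rest of the proof is a routine transcription of the $p=3$ argument.
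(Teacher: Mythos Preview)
Your proposal is correct and follows essentially the same approach as the paper: both factor the one-dimensional representations through the abelianization $\Ab(G_5)\simeq\ZZ/2\times\ZZ/2$ established in Proposition \ref{Abchar5}, read off the four characters from the Klein four-group character table, and pull them back to $G_5$ via the quotient map. Your treatment is in fact somewhat more explicit than the paper's, particularly in flagging and resolving the potential issue with the multiplicativity of $t=\operatorname{sign}(a)$ by appealing to Proposition \ref{Abchar5}.
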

\begin{proof}
As for the case $p=3$, the $1$-irreps of $G_5$ correspond to the irreducible characters of $\Ab(G_5)$. The character table of $\Ab(G_5)\simeq\ZZ/2\times \ZZ/2$ is as follows.
\beq \notag
\begin{array}{ l c c c c } 
 \hline\hline
Ab(G_5)\simeq\{Ab(s,t)\} \quad &\quad Ab(1,1)\quad & \quad Ab(1,-1)\quad & \quad Ab(-1,1)\quad &\quad Ab(-1,-1)\\  
\det &1 &1 &1&1 \\  
 s& 1 &1 & -1&-1 \\  
t & 1 &-1 &1 &-1\\ 
st & 1& -1&-1 &1 \\ \hline\hline
\end{array}
\eeq
The bijective correspondence between the $1$-irreps of $G_5$ and the irreps of $\Ab(G_5)$ is as follows: if $\widetilde{\rho}:Ab(G_5)\rightarrow \CC^\ast$ is a $1$-irrep of $\Ab(G_5)$, then $\rho(M):=\widetilde{\rho}(M[G_5,G_5])$ is a $1$-irrep of $G_5$ (and vice versa). This concludes the proof
.
\end{proof}
\subsection{Degree of representations}
The Peter-Weyl theorem \cite{simon} for $G_5$ states that $\sum_\alpha d_\alpha^2=\lvert G_5\rvert=300$, where $d_\alpha$ are the degrees of the irreducible representations of $G_5$ located by $\alpha$. Then,
\beq 
\sum_{\alpha\mid d_\alpha>1}d_\alpha^2=296
\eeq 
according to Proposition \ref{4_1rep5}.

This is not enough to deduce the nature of higher-degree irreps, but another proposition (proved in Appendix \ref{conjclg5}) comes to our help.
\begin{proposition}\label{conjclg5prop}
 $G_5$ is partitioned in $14$ conjugacy classes. Their orders and lists are in Appendix \ref{conjclg5}.
\end{proposition}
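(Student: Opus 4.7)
The plan is to enumerate the 14 conjugacy classes of $G_5$ by exploiting the surjective homomorphism $K'_5 : G_5 \to D_6$ from Section~\ref{pi1SO3}, whose kernel $T := \ker K'_5$ consists of the 25 matrices $M(1,0,c,d,1)$ and is isomorphic to $(\ZZ/5)^2$. The strategy is to decompose $G_5$ fiber by fiber over the six conjugacy classes of $D_6 \cong \mathfrak{I}(K'_5)$ using a block-matrix formula for conjugation.

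First I would derive the block decomposition. Writing $M$ in the $2{+}1$ block form with top-left block $R = \begin{pmatrix} a & -2sb \\ b & sa \end{pmatrix}$, bottom row $r = (c,d)$, and lower-right entry $s$, and similarly $N$ with data $(R', r', \sigma)$, a direct calculation gives
\[
  NMN^{-1} = \begin{pmatrix} R'R(R')^{-1} & 0 \\ r'(R - sI)(R')^{-1} + \sigma r (R')^{-1} & s \end{pmatrix}.
\]
Thus $s$ is invariant, the top-left block transforms as in $D_6$, and the bottom row transforms affinely. The six conjugacy classes of $D_6$ are $\{e\}$, the central involution $\{(-1,0,1)\}$, the two size-2 rotation classes $\{(\pm 2, \pm 1, 1)\}$, and two size-3 reflection classes. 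When $R - sI$ is invertible modulo 5 --- which covers the central involution and both non-central rotation classes --- the translation action of $T$ alone sweeps the entire fiber, so each such $D_6$-class lifts to a single $G_5$-class, contributing $3$ classes. For the identity fiber the induced action on $T$ is $r \mapsto \sigma r (R')^{-1}$, which preserves the anisotropic quadratic form $\phi(c,d) = 2c^2 + d^2 \pmod 5$; the orbits are $\{0\}$ and the four nonzero level sets of $\phi$, each of size $6$, yielding $5$ classes in this fiber.

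The main obstacle is the two reflection fibers, each of size $75$, where $R + I$ has rank $1$ and the $T$-action fills only a one-dimensional affine subspace. Here I would introduce two invariants: the rank of $M + I$ (equivalently whether $M^2 = I$), and for rank-$2$ elements the $\phi$-level of $M^2 \in T$. At the representative $(1,0,-1)$, $M^2$ has bottom row $(0,-2d)$ with $\phi$-level $4d^2 \in \{1,4\}$; at the representative $(-1,0,-1)$ of the other reflection class, $M^2$ has $\phi$-level in $\{2,3\}$. Within each reflection fiber, the rank-$1$ (involution) elements form a single orbit of $15$ elements, and the $60$ rank-$2$ elements split into two orbits of $30$ elements each according to the $\phi$-level of $M^2$, giving $3$ classes per reflection fiber and $6$ in total.

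Summing yields $3 + 5 + 6 = 14$ conjugacy classes, which one consistency-checks against the sum of class sizes equaling $|G_5| = 300$. The hardest step is verifying that the claimed split within the reflection fibers is exact --- in particular that all $30$ rank-$2$ elements of a fixed $\phi$-level within one $D_6$-reflection-class form a single $G_5$-orbit. I would settle this by an orbit-stabilizer computation, showing that a typical rank-$2$ element such as $M(1,0,0,d,-1)$ with $d \neq 0$ has centralizer in $G_5$ of order exactly $10$, by solving the commutation equations $NM = MN$ directly from the block formula above.
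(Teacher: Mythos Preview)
Your argument is correct and considerably more structural than what the paper does. The paper's Appendix~\ref{conjclg5} is a bare enumeration: it simply lists the fourteen classes $C_1,\dots,C_{14}$ with their elements, presumably obtained by direct (possibly machine-assisted) computation, and offers no conceptual explanation. Your route---fibering $G_5$ over $D_6$ via $K_5'$, writing conjugation in $2+1$ block form, and separating cases by whether $R-sI$ is invertible---explains \emph{why} the count is $14$. The quadratic invariant $\phi(c,d)=2c^2+d^2$ is exactly the right tool: its level sets of size $6$ recover the paper's classes $C_2$--$C_5$ (levels $3,2,4,1$ respectively), and the $\phi$-value of $M^2$ distinguishes $C_{10}$ from $C_{12}$ (levels $4$ and $1$) and $C_9$ from $C_{11}$ (levels $2$ and $3$), matching the appendix perfectly. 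Your approach would also generalise to $G_p$ for other odd primes, which the paper's listing does not.

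Two small points you should make explicit when writing this up. First, for the involution subclass in each reflection fibre you assert a single orbit of size $15$ but only carry out the centraliser computation for the rank-$2$ case; the analogous computation for $M(1,0,0,0,-1)=\mathrm{diag}(1,-1,-1)$ gives a centraliser of order $20$ (namely $b=c'=0$, $a=\pm1$, $d'$ and $\sigma$ free), hence orbit size $300/20=15$, and this should be stated. Second, you should check---it is a short count---that within a fixed reflection fibre each admissible $\phi$-level is attained by exactly $30$ elements, so that the size-$30$ orbit you produce really exhausts its level set; this is immediate once one notes that $r(R-I)$ is always a scalar multiple of a fixed vector of $\phi$-value $1$ (respectively $3$) for the two reflection types.
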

It means that there are $14$ non-equivalent irreps of $G_5$, among which four have degree 1. Sorting the degrees of these irreps from the smaller to the greater as $d_1,\ldots,d_{14}$, we have
\begin{align}
 d_1=&d_2=d_3=d_4=1,\\
 &\sum_{\alpha=5}^{14}d_\alpha^2=296.\label{sum296}
\end{align}
Solving Eq. \eqref{sum296} for $d_\alpha$ divisors of $\lvert G_5 \rvert$ greater than $1$ provides $18$ possible solutions, among which four present no degree equal to $2$, which we are interested relatively to $5$-adic qubits. However, there exists a stronger result due to It$\hat{\text{o}}$ \cite{ito}, namely, $d_\alpha$ divides the index in $G_5$ of any of its (maximal) Abelian normal subgroups. In order to exploit it, we prove the following.
\begin{proposition}
\beq A:=\big\{M(1,0,c,d,1)\,:\,c,d\in\ZZ/5\big\}\eeq 
is the maximal normal Abelian subgroup of $G_5$.
\end{proposition}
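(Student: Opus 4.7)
The plan is to establish, in order, that $A$ is a subgroup, that it is Abelian, that it is normal in $G_5$, and finally that it is maximal among normal Abelian subgroups. The first two points follow at once from the identity
\[
  M(1,0,c,d,1)\,M(1,0,c',d',1) \;=\; M(1,0,\,c+c',\,d+d',\,1),
\]
which shows that $A \cong (\ZZ/5)^2$ as an additive group, hence is Abelian of order $25$.

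For normality, the cleanest route is to identify $A$ with the kernel of the homomorphism $K_5'$ of Eq.~\eqref{eq:homominor}, which sends a matrix to its upper-left $2 \times 2$ minor. Using the parameterization~\eqref{G5param}, this minor equals the identity exactly when $a = 1$, $b = 0$, and $s = 1$, i.e.\ precisely on $A$. Being a kernel, $A$ is normal, and as a sanity check $|G_5|/|A| = 300/25 = 12 = |D_6|$, consistent with $G_5/A \cong \mathfrak{I}(K_5') \cong D_6$.

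The main content is maximality, which I plan to establish in the stronger form $C_{G_5}(A) = A$. Once this is shown, any normal Abelian subgroup $B \supseteq A$ will satisfy $B \subseteq C_{G_5}(A) = A$, since every element of $B$ commutes with every element of $A \subseteq B$ by Abelianness, so $B = A$. To verify $C_{G_5}(A) = A$ I would take $g = M(a,b,c',d',s)$ generic and $h = M(1,0,c,d,1) \in A$, and compute $gh$ and $hg$ explicitly. The upper $2 \times 3$ block of the two products coincides automatically, so the equation $gh = hg$ collapses to the pair of relations
\[
  c(s-a) = db, \qquad d(1-a) = -2bc \qquad \text{for all } c,d \in \ZZ/5.
\]
Specialising $(c,d) = (1,0)$ and $(0,1)$ forces $s=a$, $b=0$, and $a=1$, hence $s=1$, so $g \in A$. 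The computation is entirely mechanical given the parameterization~\eqref{G5param}, and I do not anticipate any real obstacle; if anything, the only subtlety to keep track of is the convention $s \in \{\pm 1\}$, which together with $a=1$ is what pins down $s=1$ in the final step.
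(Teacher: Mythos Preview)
Your argument is correct as far as it goes, and it takes a genuinely different route from the paper. The paper leans on the explicit list of conjugacy classes (Appendix~\ref{conjclg5}): since a normal subgroup is a union of conjugacy classes and each of $C_6,\dots,C_{14}$ already contains a pair of non-commuting elements, every normal Abelian subgroup must lie inside $C_1\cup\cdots\cup C_5=A$. Your approach avoids the class list entirely, identifying $A=\ker K_5'$ for normality and computing $C_{G_5}(A)=A$ directly from the parameterization~\eqref{G5param}; this is cleaner and more self-contained.

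There is, however, a small gap. From $C_{G_5}(A)=A$ you deduce only that no normal Abelian $B\supseteq A$ strictly contains $A$, which makes $A$ \emph{a} maximal normal Abelian subgroup. The proposition (and the paper's proof) asserts that $A$ is \emph{the} maximal one, i.e.\ that every normal Abelian subgroup is contained in $A$; self-centralizing alone does not guarantee this (in $Q_8$ each of $\langle i\rangle,\langle j\rangle,\langle k\rangle$ is normal, Abelian, and self-centralizing, yet they are pairwise incomparable). The fix is short and stays within your framework: $A$ is the unique Sylow $5$-subgroup of $G_5$ (order $25$ in a group of order $300$, and normal). For any normal Abelian $B$, its $5'$-part $B_{5'}$ is characteristic in $B$, hence normal in $G_5$, and $B_{5'}\cap A=1$ forces $[A,B_{5'}]\subseteq A\cap B_{5'}=1$, whence $B_{5'}\subseteq C_{G_5}(A)=A$ and so $B_{5'}=1$; the $5$-part of $B$ already sits inside $A$, giving $B\subseteq A$.
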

\begin{proof}
Normal subgroups are unions of conjugacy classes. The elements in each $C_i$, $i=6,\dots,14$, $C_i$ conjugacy classes of $G_5$ in Appendix \ref{conjclg5}, do not commute, in general. These classes, then, are not of interest for our purposes here. On the other hand, the matrices in $\bigcup_{i=1}^5 C_i=\big\{M(1,0,c,d,1)\,:\,c,d\in\ZZ/5\big\}$ commute. This set is closed under matrix multiplication and contains the identity matrix.\end{proof}

The index of $A$ in $G_5$ is $\lvert G_5:A\rvert =\lvert G_5\rvert/\lvert A\rvert = 300/25=12$. It follows by It$\hat{\text{o}}$'s theorem that $d_\alpha\in\{2,3,4,6,12\}$ for every $\alpha=5,\dots,14$.

\begin{proposition}\label{prop:sum296}
Equation \eqref{sum296} has three solutions for integers $d_\alpha\in\{2,3,4,6,12\}$ up to permutations.
\beq\notag  
\begin{array}{lccccccccccc} 
\hline\hline
& \quad d_5\quad &\quad d_6\quad &\quad d_7\quad &\quad d_8\quad &\quad d_9\quad &\quad d_{10}\quad &\quad d_{11}\quad &\quad d_{12}\quad &\quad d_{13}\quad  &\quad d_{14}\\ 
\textup{First}&2&2&3&3&3&3&6&6&6&\quad 12\\ 
\textup{Second}\quad &2&2&6&6&6&6&6&6&6&\quad 6\\  
\textup{Third} &2&4&4&4&4&4&4&4&6&\quad 12\\\hline\hline
\end{array}
\eeq 
\end{proposition}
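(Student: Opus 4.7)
The problem is a finite Diophantine enumeration, so the plan is a careful case analysis. I would first encode a solution up to permutation by multiplicities $n_2,n_3,n_4,n_6,n_{12}\geq 0$ recording how many of the $d_\alpha$ equal $2,3,4,6,12$ respectively. The two simultaneous constraints become
\begin{equation}\notag
n_2+n_3+n_4+n_6+n_{12}=10,\qquad 4n_2+9n_3+16n_4+36n_6+144n_{12}=296.
\end{equation}
Eliminating $n_2$ via the first equation reduces the problem to the single Diophantine equation
\begin{equation}\notag
5n_3+12n_4+32n_6+140n_{12}=256,
\end{equation}
subject to $n_3+n_4+n_6+n_{12}\leq 10$ (so that $n_2\geq 0$).

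Next I would extract a priori bounds on the "large" unknowns. Since $2\cdot 140=280>256$, one has $n_{12}\in\{0,1\}$; similarly $n_6\leq 8$ when $n_{12}=0$ and $n_6\leq 3$ when $n_{12}=1$. This reduces the analysis to at most a dozen pairs $(n_{12},n_6)$. For each pair one is left with $5n_3+12n_4=R$ for a specific $R$, together with $n_3+n_4\leq 10-n_6-n_{12}$, which is trivially searched: $n_4$ ranges over a short interval and for each $n_4$ one checks whether $(R-12n_4)/5$ is a nonnegative integer fitting the count bound.

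Running through the cases, I expect exactly three to survive, matching the table: $(n_{12},n_6)=(0,8)$ forces $n_3=n_4=0$, $n_2=2$ (second row); $(n_{12},n_6)=(1,3)$ forces $n_4=0$, $n_3=4$, $n_2=2$ (first row); $(n_{12},n_6)=(1,1)$ forces $n_4=7$, $n_3=0$, $n_2=1$ (third row). All other $(n_{12},n_6)$ either violate the upper bound $12(10-n_6-n_{12})$ on $5n_3+12n_4$ or fail the congruence $R\equiv 12n_4\pmod 5$ with an admissible $n_4$. There is no genuine obstacle here beyond bookkeeping; the only mild pitfall is remembering to impose $n_2\geq 0$ at the end of each branch, which rules out the spurious $n_4=1$, $n_3=8$ in the $(1,2)$ branch and similar near-misses.
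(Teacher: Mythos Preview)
Your argument is correct. Encoding the unordered tuple by multiplicities $(n_2,n_3,n_4,n_6,n_{12})$, eliminating $n_2$, and then bounding $n_{12}$ and $n_6$ leaves only a handful of residual equations $5n_3+12n_4=R$, and your treatment of these (including the check $n_3+n_4\leq 10-n_6-n_{12}$ that kills the near-miss $n_3=8$, $n_4=1$ in the $(n_{12},n_6)=(1,2)$ branch) is complete; exactly the three listed solutions survive.

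The paper takes a different route: it fixes a non-decreasing ordering $d_5\leq\cdots\leq d_{14}$ and performs a recursive descent, first splitting on the value of $d_{14}\in\{12,6,4,3,2\}$, then on $d_{13}$, and so on, pruning at each step by comparing the remaining target with the minimum and maximum attainable by the remaining terms. Your multiplicity formulation is considerably more economical---it turns a ten-level nested case tree into a two-parameter search over $(n_{12},n_6)$ followed by a one-line divisibility check---while the paper's ordered descent, though longer, has the virtue of being entirely mechanical and requiring no preliminary algebraic manipulation. Both are valid and elementary; yours is the cleaner write-up.
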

The proof of this proposition is in Appendix \ref{degreeP5}.
\medskip

In order to further restrict this set of solutions, we recall Eq. \eqref{eq:isodihedral}: here $\mathfrak{I}(K_5')$ is isomorphic to the dihedral group $D_6$. Appendix \eqref{dihrep1} confirms the four one-dimensional representations we found in \eqref{4_1rep5}. Moreover, Appendix \eqref{dihrep2} shows that $\mathfrak{I}(K_5')
$ has $\frac{p-1}{2}=2$ irreps of degree $2$; hence, $G_5$ has at least these two 2-irreps. This reduces the possibilities of Proposition \ref{prop:sum296} to the following solutions.
\beq\notag  
\begin{array}{lccccccccccc} 
\hline\hline
&\quad d_5\quad &\quad d_6\quad &\quad d_7\quad &\quad d_8\quad &\quad d_9\quad &\quad d_{10}\quad &\quad d_{11}\quad &\quad d_{12}\quad &\quad d_{13}\quad &\quad d_{14}\\ 
\textup{First}&2&2&3&3&3&3&6&6&6&\quad 12\\ 
\textup{Second}\quad &2&2&6&6&6&6&6&6&6&\quad 6\\ \hline\hline
\end{array}
\eeq 
In any case, we conclude that $G_5$ leads exactly to two qubit representations: those coming from the two unitary $2$-irreps of the dihedral group $D_6$.
\subsection{$5$-Adic qubits}
Similarly to Proposition \ref{2irrep2}, we have the following.
\begin{proposition}
 There exist two unitary 2-irreps $J_5^{(1)},\,J_5^{(2)}$ of $SO(3)_5$,
 \beq 
  J_5^{(i)}:= \rho_i\circ\varphi\circ K_5'\circ\pi_1,
 \eeq
 where $\varphi$ represents isomorphism \eqref{eq:isodihedral} and $\rho_i,\ i=1,2$ are the two $2$-irreps of $D_6$ (Appendix \ref{dihrep2}),
 \begin{align*}
 & \rho_1\,:\quad a \mapsto\begin{pmatrix}
        \frac{1}{2}&-\frac{\sqrt{3}}{2}\\\frac{\sqrt{3}}{2}&\frac{1}{2}\end{pmatrix},\quad
 x\mapsto
 \begin{pmatrix}
        1 & 0\\
        0 & -1
       \end{pmatrix},\\
& \rho_2\,:\quad a \mapsto\begin{pmatrix}
        -\frac{1}{2}&-\frac{\sqrt{3}}{2}\\\frac{\sqrt{3}}{2}&-\frac{1}{2}\end{pmatrix},\quad
 x\mapsto
 \begin{pmatrix}
        1 & 0\\
        0 & -1
       \end{pmatrix}.
 \end{align*} 
\end{proposition}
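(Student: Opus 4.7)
The plan is to observe that $J_5^{(i)}$ is presented as a composition of four maps, each of which is (or can be quickly checked to be) a group homomorphism, and then to verify irreducibility, unitarity, and continuity separately. First, $\pi_1 \colon SO(3)_5 \to G_5$ is a continuous group homomorphism, as established in Section \ref{sec:quot}. Second, $K_5' \colon G_5 \to \mathfrak{I}(K_5') \subset GL((\ZZ/5)^2)$ is a group homomorphism by the Proposition just before Eq.~\eqref{eq:homominor}. Third, $\varphi$ is a group isomorphism onto $D_6 = D_{p+1}\big|_{p=5}$ by Eq.~\eqref{eq:isodihedral}. Fourth, each $\rho_i \colon D_6 \to GL(\CC^2)$ is a group homomorphism, which I would verify by checking the defining dihedral relations $a^6 = x^2 = \1$, $xax = a^{-1}$ on the given explicit matrices $\rho_i(a)$, $\rho_i(x)$. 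The composition $J_5^{(i)} := \rho_i \circ \varphi \circ K_5' \circ \pi_1$ is then immediately a homomorphism.

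For irreducibility, the key observation is that $\pi_1$ is surjective (by definition of $G_5$), $K_5'$ is surjective onto $\mathfrak{I}(K_5')$ (by definition of the image), and $\varphi$ is a bijection onto $D_6$. Hence $\varphi \circ K_5' \circ \pi_1 \colon SO(3)_5 \twoheadrightarrow D_6$ is surjective, so the image of $J_5^{(i)}$ coincides with the image of $\rho_i$. Any $J_5^{(i)}$-invariant subspace $Y \subset \CC^2$ is therefore $\rho_i$-invariant, and since the $\rho_i$ are $2$-irreps of $D_6$ (Appendix \ref{dihrep2}), one concludes $Y \in \{\{0\}, \CC^2\}$.

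For unitarity, it suffices to check that the two generators $\rho_i(a)$ and $\rho_i(x)$ are unitary (in fact real orthogonal): $\rho_i(x) = \diag(1,-1)$ is manifestly orthogonal, and $\rho_i(a)$ is a planar rotation by angle $\pm \pi/3$, whose orthogonality is immediate from $\left(\tfrac{1}{2}\right)^2 + \left(\tfrac{\sqrt{3}}{2}\right)^2 = 1$. Since the image is generated by these two matrices, $J_5^{(i)}(SO(3)_5) \subset U(2)$. Continuity is automatic: $\pi_1$ is continuous with respect to the $p$-adic metric on $SO(3)_5$ and the discrete topology on the finite group $G_5$, and the remaining three maps are defined on finite sets, so $J_5^{(i)}$ is continuous into $U(2)$ equipped with its Lie group topology.

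No step here is a serious obstacle, since all the ingredients are already in place from earlier sections and appendices; the proof is essentially bookkeeping by composition. The only content worth spelling out is the surjectivity argument that reduces irreducibility of $J_5^{(i)}$ to irreducibility of $\rho_i$, and the observation that the two nonequivalent $\rho_1, \rho_2$ yield two inequivalent $J_5^{(1)}, J_5^{(2)}$, which follows because inequivalence of irreducible representations is preserved under pullback by the surjective homomorphism $\varphi \circ K_5' \circ \pi_1$.
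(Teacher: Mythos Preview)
Your proof is correct and follows essentially the same approach as the paper: the paper's proof also observes that $J_5^{(i)}$ is a homomorphism as a composition of homomorphisms, and reduces irreducibility to that of $\rho_i$ via the surjectivity of $\varphi\circ K_5'\circ\pi_1$ onto $D_6$. You are somewhat more thorough than the paper, which does not spell out unitarity, continuity, or the inequivalence of $J_5^{(1)}$ and $J_5^{(2)}$ in its proof (unitarity is taken from Appendix~\ref{dihrep}, continuity from the Remark in Section~\ref{sec:quot}).
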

\begin{proof}
$J_5^{(i)}$ for $i=1,2$ are group homomorphisms from $SO(3)_5$ to $GL(\CC^2)$ because they are compositions of group homomorphisms. They are irreducible since the $\rho_i$ are so: if $Y\subset\CC^2$, then the following are equivalent:
\begin{align}
    &\left(J_5^{(i)}(SO(3)_5)\right)Y=Y,\\
    &\rho_i\big((\varphi\circ K_5'\circ\pi_1)(SO(3)_5)\big)Y=Y,\\
    &\rho_i(D_6)Y=Y.
\end{align}
The latter is equivalent to $Y=\{0\}$ or $Y=\CC^2$.
\end{proof}

To write them explicitly, it is enough to tell the action of the generators $C,Z$ of $\mathfrak{I}(K_5')
$. Choosing $(a_0,b_0)=(-2,1)$ as the generator of the solutions $(a,b)$ to $a^2+2b^2\equiv1\mod5$ and writing $5$-adic numbers as sequences of integer numbers modulo $5^k$, we get
\beq \label{J51action}
\begin{aligned}
J_5^{(1)}:\quad &\begin{pmatrix}
  (-2,a_{112},\ldots) & (-2,a_{122},\ldots) & (0,a_{132},\ldots)\\
  (1,a_{212},\ldots) & (-2,a_{222},\ldots) & (0,a_{232},\ldots)\\
  (a_{311},a_{312},\ldots) & (a_{321},a_{322},\ldots) & (1,a_{332},\ldots)
 \end{pmatrix}_{\ZZ_5}
\stackrel{K_5'\circ\pi_1}{\longmapsto}
 C=\begin{pmatrix}
  -2 & -2\\
  1 & -2
 \end{pmatrix}_{\ZZ/5}
\stackrel{\rho_1\circ\varphi}{\longmapsto}\begin{pmatrix}
        \frac{1}{2}&-\frac{\sqrt{3}}{2}\\\frac{\sqrt{3}}{2}&\frac{1}{2}\end{pmatrix}_\CC\\
        &\begin{pmatrix}
  (1,a_{112},\ldots) & (0,a_{122},\ldots) & (0,a_{132},\ldots)\\
  (0,a_{212},\ldots) & (-1,a_{222},\ldots) & (0,a_{232},\ldots)\\
  (a_{311},a_{312},\ldots) & (a_{321},a_{322},\ldots) & (-1,a_{332},\ldots)
 \end{pmatrix}_{\ZZ_5}
\stackrel{K_5'\circ\pi_1}{\longmapsto} Z = \begin{pmatrix}
       1&0\\0&-1
\end{pmatrix}_{\ZZ/5}
\stackrel{\rho_1\circ\varphi}{\longmapsto}\begin{pmatrix}
       1&0\\0&-1
\end{pmatrix}_\CC
\end{aligned}\eeq
and 
\beq \label{J52action}
\begin{aligned}
J_5^{(2)}:\quad &\begin{pmatrix}
  (-2,a_{112},\ldots) & (-2,a_{122},\ldots) & (0,a_{132},\ldots)\\
  (1,a_{212},\ldots) & (-2,a_{222},\ldots) & (0,a_{232},\ldots)\\
  (a_{311},a_{312},\ldots) & (a_{321},a_{322},\ldots) & (1,a_{332},\ldots)
 \end{pmatrix}_{\ZZ_5}
\stackrel{K_5'\circ\pi_1}{\longmapsto}
 C=\begin{pmatrix}
  -2 & -2\\
  1 & -2
 \end{pmatrix}_{\ZZ/5}
\stackrel{\rho_2\circ\varphi}{\longmapsto}\begin{pmatrix}
        -\frac{1}{2}&-\frac{\sqrt{3}}{2}\\\frac{\sqrt{3}}{2}&-\frac{1}{2}\end{pmatrix}_\CC\\
        &\begin{pmatrix}
  (1,a_{112},\ldots) & (0,a_{122},\ldots) & (0,a_{132},\ldots)\\
  (0,a_{212},\ldots) & (-1,a_{222},\ldots) & (0,a_{232},\ldots)\\
  (a_{311},a_{312},\ldots) & (a_{321},a_{322},\ldots) & (-1,a_{332},\ldots)
 \end{pmatrix}_{\ZZ_5}
\stackrel{K_5'\circ\pi_1}{\longmapsto} Z = \begin{pmatrix}
       1&0\\0&-1
\end{pmatrix}_{\ZZ/5}
\stackrel{\rho_2\circ\varphi}{\longmapsto}\begin{pmatrix}
       1&0\\0&-1
\end{pmatrix}_\CC.
\end{aligned}\eeq
The matrix $\begin{pmatrix}       \frac{1}{2}&-\frac{\sqrt{3}}{2}\\\frac{\sqrt{3}}{2}&\frac{1}{2}\end{pmatrix}$ is of order $6$ as $C$, and $\rho_1$ is a faithful representation of $D_6\simeq \mathfrak{I}(K_5')
$. On the other hand, $\begin{pmatrix} -\frac{1}{2}&-\frac{\sqrt{3}}{2}\\\frac{\sqrt{3}}{2}&-\frac{1}{2}\end{pmatrix}$ is of order $3$, so $\rho_2$ is not injective.

We, thus, arrive to the following final result.
\begin{proposition}
 The pairs $(\CC^2,J_5^{(1)})$, $(\CC^2,J_5^{(2)})$, with $J_5^{(i)}$ defined as in \eqref{J51action} and \eqref{J52action}, are 5-adic qubits, i.e., continuous unitary irreducible linear (and then projective) representations of degree $2$ on $\CC$ of the $5$-adic special orthogonal group $SO(3)_5$ defined in \eqref{SO3p}.
\end{proposition}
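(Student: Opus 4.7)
The plan is to check the four defining properties of a $5$-adic qubit representation, namely that each $J_5^{(i)}$ is a continuous, unitary, irreducible group homomorphism $SO(3)_5 \to GL(\CC^2)$. Since $J_5^{(i)} = \rho_i \circ \varphi \circ K_5' \circ \pi_1$ is a composition of maps already analysed in the paper, the strategy mirrors the $p=3$ treatment (Propositions \ref{2irrep2} and \ref{irrepK}): each property is pulled back from the corresponding property of one of the four factors.

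First I would dispose of the homomorphism and continuity properties together. The homomorphism property is immediate because $\pi_1$, $K_5'$, $\varphi$, and $\rho_i$ are all group homomorphisms by results already established in the paper. For continuity, the remark after the definition of $\pi_k$ records that $\pi_1$ is continuous from the $p$-adic topology on $SO(3)_5$ to the discrete topology on the finite group $G_5$; the remaining three maps have finite (hence discrete) domain, so they are automatically continuous, and a composition of continuous maps is continuous.

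Irreducibility is the substantive step, but it reduces to the key observation that $\varphi \circ K_5' \circ \pi_1 : SO(3)_5 \to D_6$ is surjective: $\pi_1$ is onto $G_5$ by construction, $K_5'$ is onto $\mathfrak{I}(K_5')$ by the definition of its image, and $\varphi$ is an isomorphism. Consequently $J_5^{(i)}(SO(3)_5) = \rho_i(D_6)$ as subsets of $GL(\CC^2)$, so every $J_5^{(i)}$-invariant subspace of $\CC^2$ is invariant under the full group $\rho_i(D_6)$. Irreducibility of each $J_5^{(i)}$ then follows from Appendix \ref{dihrep2}, where $\rho_1$ and $\rho_2$ are identified as two inequivalent two-dimensional irreducible representations of $D_6$.

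For unitarity, since the unitary group $U(2)$ is closed under multiplication, it suffices to check the two generators. A direct inspection shows that $\rho_1(a)$ and $\rho_2(a)$ are the real planar rotations by $\pi/3$ and $2\pi/3$ respectively, which are orthogonal and hence unitary, while $\rho_i(x) = \operatorname{diag}(1,-1)$ is a real reflection, also unitary. I do not anticipate any serious obstacle in this proof; the only mildly delicate point is cleanly recording the surjectivity of $\varphi \circ K_5' \circ \pi_1$ so that irreducibility transfers from $\rho_i$ to $J_5^{(i)}$. The two representations are genuinely distinct as promised because the two underlying $\rho_i$ are inequivalent at the level of $D_6$, and this inequivalence is preserved by precomposition with a surjection.
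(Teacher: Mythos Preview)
Your proposal is correct and follows essentially the same approach as the paper: both argue that $J_5^{(i)}$ is a homomorphism as a composition of homomorphisms, and transfer irreducibility from $\rho_i$ to $J_5^{(i)}$ via the surjectivity of $\varphi\circ K_5'\circ\pi_1$ onto $D_6$. You are slightly more explicit than the paper in justifying continuity (via the remark on $\pi_k$) and unitarity (by checking the generators directly rather than citing Appendix~\ref{dihrep}), but these are elaborations of the same route rather than a different one.
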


Note that, in contrast to the case $p=3$, $\imath':\ZZ/5\hookrightarrow \CC,\ 0_{\ZZ/5}\mapsto 0_\CC,\,\pm1_{\ZZ/5}\mapsto \pm1_\CC,\,\pm2_{\ZZ/5}\mapsto \pm2_\CC$ does not lead to a group homomorphism $\imath'\circ K_5'$; the four solutions $(a,b)\in\{(\pm2,1),(\pm2,-1)\}\subset(\ZZ/5)^2$ of $a^2+2b^2\equiv1\mod5$ are not embedded by $\imath'$ to solutions of $a^2+2b^2=1$ in $\CC$. 

\section{Representations of $SO(3)_2\mod 2$}
\label{sec:2-mod-2}
We exploit Eqs. \eqref{eq:scrttGxyz2k} and \eqref{eq:secproblematico2} 
for $k=1$. Since $\cR_{\boldsymbol{n}}(\infty)\equiv \1
\mod2$, we just have
\beq 
G_{\boldsymbol{n},2}=\left\{\cR_{\boldsymbol{n}}\left(\sigma\right)\mod 2\,:\, \sigma\in\ZZ/2\right\}.
\eeq 
The elements of $G_{\boldsymbol{n},2}$ written on the plane orthogonal to $\boldsymbol{n}$ are 
\beq 
\begin{pmatrix}
       1&0\\0&1
\end{pmatrix}\ \ \ \text{and}\ \ \ \begin{pmatrix}
       0&1\\1&0
\end{pmatrix},
\eeq 
which are the identity and the exchange of the two reference axes orthogonal to $\boldsymbol{n}$. This holds for every $\boldsymbol{n}$ of the canonical basis of $\QQ_2^3$, for which the groups $G_{\boldsymbol{n},2}$ provide the matrices 
\beq
\begin{pmatrix}
       1&0&0\\0&1&0\\0&0&1
\end{pmatrix}, \ \ \begin{pmatrix}
1&0&0\\0&0&1\\0&1&0
\end{pmatrix}, \ \ \begin{pmatrix}
0&0&1\\0&1&0\\1&0&0
\end{pmatrix},\ \ \begin{pmatrix}
0&1&0\\1&0&0\\0&0&1
\end{pmatrix}.
\eeq 
These exchanges generate the group of permutations of the three reference axes of $\QQ_2^3$. 
We conclude that there exists an isomorphism $\phi$ between $SO(3)_2\mod 2$ and the group $\mathbb{S}_3$ of permutations of $3$ elements,
\beq \begin{aligned}
&\ \ \ \ \ \ \ \ \ \ \ \ \phi:SO(3)_2 \mod 2\ \stackrel{\sim}{\rightarrow}\ \mathbb{S}_3,\\
&\begin{pmatrix}
0&1&0\\1&0&0\\0&0&1
\end{pmatrix}\mapsto(12),\quad \begin{pmatrix}
1&0&0\\0&0&1\\0&1&0
\end{pmatrix}\mapsto (23),
\end{aligned}\eeq 
where $(12),\ (23)$ are the exchanges that generate $\mathbb{S}_3$.

This isomorphism allows us to apply the representation theory of $\mathbb{S}_3$ to 
$SO(3)_2\mod 2$. In particular, we know that $\mathbb{S}_3$ has a unique $2$-irrep, 
which we propose as the foundation for the $2$-adic qubit.

In detail \cite{repsSL}
, $\sum_\alpha d_\alpha^2=\lvert \mathbb{S}_3\rvert=6$, where $d_\alpha$ are the degrees of all the irreducible representations of $\mathbb{S}_3$ located by $\alpha$. This means that no irreducible representation of degree higher or equal than $3$ is allowed for $\mathbb{S}_3$. The commutator subgroup $[\mathbb{S}_3,\mathbb{S}_3]$ is the group of even permutations, so $\lvert\mathbb{S}_3/[\mathbb{S}_3,\mathbb{S}_3]\rvert=2$ is the number of $1$-irreps of $\mathbb{S}_3$. Hence, $\mathbb{S}_3$ has two $1$-irreps and one $2$-irrep.

The complex 1-irreps for $\mathbb{S}_3$ are  as follows:
\begin{itemize}
    \item the trivial representation
\beq \notag
\operatorname{TRIV}:\mathbb{S}_3 \rightarrow \CC^\ast, \ \ \pi \mapsto 1,
\eeq
\item the sign representation
\beq \notag
\operatorname{SIGN} : \mathbb{S}_3\rightarrow \CC^\ast, 
\ \ \ \pi \mapsto 
\text{sign}(\pi) = \left\{ \begin{aligned}&+1,\ \text{if}\ \pi \ \text{even},\\ &-1,\ \text{if}\ \pi \ \text{odd}.\end{aligned} \right.
\eeq
\end{itemize}
A complex representation of degree $3$ of $\mathbb{S}_3$ is 
\beq\notag
\tau:\mathbb{S}_3\rightarrow GL(\CC^3), \ \  \tau(\pi)(e_i)=e_{\pi(i)},
\eeq 
which permutes the basis elements $\{e_i\}_{i=1}^3$ of $\CC^3$. This representation is reducible, in fact there exists the subrepresentation $\tau'$ on 
\beq
\CC^2\simeq\text{span}\{e_1-e_2,e_2-e_3\}=\{x^1e_1+x^2e_2+x^3e_3\in\CC^3\,:\, x^1+x^2+x^3=0\}.
\eeq 
This has the following matrix form for the generators $(12)$ and $(23)$ of $\mathbb{S}_3$ with respect to the basis $\{e_1-e_2,e_2-e_3\}$:
\beq \notag (12) \longmapsto \begin{pmatrix} -1 & 1 \\ 0 & 1 \end{pmatrix}, \ \ \ \ \ (23) \longmapsto \begin{pmatrix} 1 & 0 \\ 1 & -1 \end{pmatrix}. 
\eeq
$\tau'$ can be easily shown to be the unitary irreducible representation of degree $2$ of $\mathbb{S}_3$. In fact, as in the proof of irreducibility of Proposition \ref{irrepK}, the actions of $(12)$ and $(23)$ have no common eigenvectors.

\begin{proposition}
 There exists a unitary 2-irrep $J_2$ of $SO(3)_2$,
 \beq 
  J_2:= \tau'\circ\phi\circ \pi_1.
 \eeq
 \end{proposition}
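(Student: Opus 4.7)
The plan is to verify that $J_2 = \tau' \circ \phi \circ \pi_1$ satisfies each of the four defining properties of a unitary 2-irrep: it should be (i) a group homomorphism from $SO(3)_2$ into $GL(\CC^2)$, (ii) continuous with respect to the $p$-adic topology on the domain, (iii) irreducible as a representation on a two-dimensional complex vector space, and (iv) unitary with respect to a suitable Hermitian inner product on $\CC^2$. The first three are straightforward; the only genuinely substantive point is making unitarity explicit.

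The homomorphism property is immediate because each of $\pi_1$, $\phi$, $\tau'$ is a group homomorphism by construction. Continuity follows from the remark after the definition of $\pi_k$ in Section \ref{sec:quot}: $\pi_1$ is continuous from the $p$-adic metric on $SO(3)_2$ to the discrete topology on $G_2 = SO(3)_2 \mod 2$, while $\phi$ and $\tau'$ are maps between finite sets and hence trivially continuous. For irreducibility, I would observe that $\pi_1$ is surjective onto $G_2$ and that $\phi$ is an isomorphism onto $\SS_3$, so the image $J_2(SO(3)_2) = \tau'(\SS_3)$ as a set of operators on $\CC^2$. Any $J_2$-invariant subspace $Y \subset \CC^2$ is therefore $\tau'$-invariant, and by the argument preceding the statement (the matrices $\tau'((12))$ and $\tau'((23))$ admit no common eigenvector, exactly as in the proof of Proposition~\ref{irrepK}) we conclude $Y \in \{\{0\}, \CC^2\}$.

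The main obstacle is unitarity, since the matrices exhibited for $\tau'((12))$ and $\tau'((23))$ in the basis $\{e_1 - e_2,\, e_2 - e_3\}$ are manifestly not unitary with respect to the Euclidean inner product on $\CC^2$. I would handle this conceptually rather than computationally: the ambient permutation representation $\tau$ on $\CC^3$ permutes the orthonormal basis $\{e_1,e_2,e_3\}$ and is therefore unitary with respect to the standard Hermitian inner product, so its restriction $\tau'$ to the $\SS_3$-invariant hyperplane $\{x^1 + x^2 + x^3 = 0\}$ is automatically unitary with respect to the inherited inner product. Choosing an orthonormal basis of this hyperplane, for example $\bigl\{(e_1 - e_2)/\sqrt{2},\, (e_1 + e_2 - 2e_3)/\sqrt{6}\bigr\}$, then presents $\tau'$ as a genuine element of $U(2)$ and establishes $J_2$ as a unitary representation. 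Equivalently, one could invoke Weyl's averaging trick: every finite-dimensional representation of a finite group admits an invariant Hermitian inner product, obtained by averaging any fixed inner product over $\SS_3$, with respect to which $\tau'$ automatically becomes unitary.
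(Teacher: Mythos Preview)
Your proof is correct and, for the homomorphism and irreducibility parts, follows exactly the paper's approach: $J_2$ is a composition of homomorphisms, and irreducibility reduces to that of $\tau'$ via the surjectivity of $\phi\circ\pi_1$ onto $\SS_3$. Where you go beyond the paper is in treating continuity and, especially, unitarity explicitly. The paper's proof of this proposition says nothing about either; it simply asserts earlier that $\tau'$ is ``the unitary irreducible representation of degree $2$'' without justification, even though the displayed matrices in the basis $\{e_1-e_2,\,e_2-e_3\}$ are visibly not unitary for the standard inner product. Your observation that $\tau'$ is the restriction of the unitary permutation representation $\tau$ to an invariant subspace, and hence unitary for the inherited inner product (or equivalently via Weyl averaging), is exactly the missing step, and supplying an explicit orthonormal basis makes the claim concrete.
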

\begin{proof}
$J_2$ is a homomorphism because it is composition of homomorphisms. $J_2$ is irreducible since $\tau'$ is so: if $Y\subset\CC^2$, then the following are equivalent:
\begin{align}
&(J_2(SO(3)_2)Y=Y,\\
& \tau'\big((\phi\circ\pi_1)(SO(3)_2)\big)Y=Y,\\
&\tau'(\mathbb{S}_3)Y=Y.
\end{align}
The latter is equivalent to $Y=\{0\}\ \text{or}\ Y=\CC^2$.
\end{proof}

To write $J_2$ explicitly, it is enough to tell the action of the generators $\begin{pmatrix}
0&1&0\\1&0&0\\0&0&1
\end{pmatrix},\ \begin{pmatrix}
1&0&0\\0&0&1\\0&1&0
\end{pmatrix}$ of $SO(3)_2\mod2$,
\beq \label{J2action}
\begin{aligned}
J_2:\quad &\begin{pmatrix}
  (0,a_{112},\ldots) & (1,a_{122},\ldots) & (0,a_{132},\ldots)\\
  (1,a_{212},\ldots) & (0,a_{222},\ldots) & (0,a_{232},\ldots)\\
  (0,a_{312},\ldots) & (0,a_{322},\ldots) & (1,a_{332},\ldots)
 \end{pmatrix}_{\ZZ_2}
\stackrel{\phi\circ\pi_1}{\longmapsto}
 (12)\in\mathbb{S}_3
\stackrel{\tau'}{\longmapsto}\begin{pmatrix}
        -1&1\\0&1\end{pmatrix}_\CC\\
        &\begin{pmatrix}
  (1,a_{112},\ldots) & (0,a_{122},\ldots) & (0,a_{132},\ldots)\\
  (0,a_{212},\ldots) & (0,a_{222},\ldots) & (1,a_{232},\ldots)\\
  (0,a_{312},\ldots) & (1,a_{322},\ldots) & (0,a_{332},\ldots)
 \end{pmatrix}_{\ZZ_2}\stackrel{\phi\circ\pi_1}{\longmapsto}
 (23)\in\mathbb{S}_3
\stackrel{\tau'}{\longmapsto}\begin{pmatrix}
       1&0\\1&-1
\end{pmatrix}_\CC.
\end{aligned}\eeq

We, thus, arrive to the following final result.
\begin{proposition}
 The pair $(\CC^2,J_2)$ with $J_2$ defined as in \eqref{J2action} is a 2-adic qubit, i.e., a continuous unitary irreducible linear (and then projective) representation of degree $2$ on $\CC$ of the $2$-adic special orthogonal group $SO(3)_2$ defined in \eqref{SO3p}.
\end{proposition}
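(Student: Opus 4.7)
The plan is to factor $J_2 = \tau' \circ \phi \circ \pi_1$ and transport each required property across the composition. That $J_2$ is a group homomorphism into $GL(\CC^2)$ is immediate: $\pi_1$ is a homomorphism by Section~\ref{sec:quot}, $\phi$ is the explicit isomorphism $SO(3)_2 \mod 2 \cong \SS_3$ just exhibited, and $\tau'$ is a representation of $\SS_3$ by the above computation on the hyperplane $\{x^1+x^2+x^3=0\}$. The degree is manifestly $2$, linearity is built in, and a linear representation induces a projective one for free. Continuity is essentially automatic: by the Remark in Section~\ref{sec:quot}, $\pi_1$ is continuous from the $p$-adic topology on $SO(3)_2$ to the discrete topology on the finite group $SO(3)_2 \mod 2$, and the further maps $\phi$ and $\tau'$ are set maps out of finite discrete spaces, hence continuous into $GL(\CC^2)$ with its Lie topology.

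Irreducibility reduces to that of $\tau'$ because $\phi \circ \pi_1$ is surjective onto $\SS_3$, so a $J_2(SO(3)_2)$-invariant subspace $Y \subset \CC^2$ is the same as a $\tau'(\SS_3)$-invariant subspace. I would argue by the same strategy as in Proposition~\ref{irrepK}: any proper nonzero invariant $Y$ would be one-dimensional and thus a common eigenline for $\tau'((12))$ and $\tau'((23))$. A direct eigenvector computation on the two displayed $2 \times 2$ matrices shows that $\tau'((12))$ has eigenlines spanned by $(1,0)^T$ and $(1,2)^T$, while $\tau'((23))$ has eigenlines spanned by $(2,1)^T$ and $(0,1)^T$; no two of these lines coincide, so no invariant line exists and $\tau'$ is irreducible.

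The one step that requires a bit of care is unitarity, since neither of the explicit generator matrices is unitary with respect to the standard Hermitian inner product on $\CC^2$. I would handle this either by Weyl's averaging trick, which provides an invariant Hermitian inner product on $\CC^2$ for any finite-dimensional complex representation of a finite group (here $\SS_3$), or more concretely by realizing $\tau'$ as the restriction of the permutation action of $\SS_3$ on $\CC^3$ to the invariant hyperplane $\{x^1 + x^2 + x^3 = 0\}$ endowed with the Hermitian form inherited from $\CC^3$: in an orthonormal basis of that hyperplane, the generators become real orthogonal reflections, which are unitary. Thus $(\CC^2, J_2)$ is a continuous unitary two-dimensional irreducible (hence projective) representation of $SO(3)_2$. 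I expect the unitarity bookkeeping to be the main—indeed the only—nontrivial step; everything else follows mechanically from the factorization.
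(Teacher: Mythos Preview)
Your proof is correct and follows essentially the same route as the paper: the paper's argument for this final proposition is just the preceding proposition (homomorphism via composition, irreducibility via surjectivity of $\phi\circ\pi_1$ reducing to that of $\tau'$, which is checked by the no-common-eigenvector argument you reproduce). If anything, you are more careful than the paper about unitarity: the paper simply asserts that $\tau'$ is ``the unitary irreducible representation of degree $2$ of $\SS_3$'' without addressing the fact that the displayed matrices are not unitary in the given basis, whereas you correctly flag this and supply two standard fixes (Weyl averaging, or passing to an orthonormal basis of the hyperplane with the induced Hermitian form).
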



\section{Discussion} 
\label{sec:conclusion}
We have examined the $p$-adic special orthogonal group $SO(3)_p$ in three dimensions, 
outlining some of its geometric and algebraic properties with the help of modular 
arithmetics and representation theory. Our aim was to show, for each $p$, at least 
one two-dimensional continuous unitary projective irrep, thus showing the existence of
$p$-adic qubits, in the sense of the construction of a spin-$\frac{1}{2}$ 
particle through an angular momentum representation. 

As a matter of fact, for $p\geq5$, we found that there is more than one such 
$p$-adic qubit, and it remains an important open question to classify all 
two-dimensional projective irreps for every $p$, not to mention the higher-
dimensional ones. 

A special role is played by the one-dimensional irreps, which act by multiplication 
on the other irreps and which themselves form an Abelian group, isomorphic 
to the abelianization of $SO(3)_p$. It remains largely unknown, except that 
we know for odd $p>2$ that it contains the Klein group $\ZZ/2 \times\ZZ/2$. 


In any case, the present work has the potential to open new lines of investigations, as follows:
 
\begin{itemize}
 \item \textit{$p$-adic spin}: while there are works treating representations of 
       $SO(3)_p$ as a characterizing property of particles, such as Ref. \cite{varvir2010}, 
       they do not work to distinguish the behaviour of the group for different values 
       of $p$. 
 
 \item \textit{$p$-adic quantum computation}: the qubit is the simplest object in this 
       subject, and computational gates can be formulated in this framework with the 
       potential of leading to new eventually more efficient algorithms.
 
 \item \textit{$p$-adic quantum information}: stochastic processes over the field of 
       $p$-adic numbers \cite{VVZ} can be formalized for qubit, thus enlarging the 
       perspectives of quantum communication.

\end{itemize}


\section*{Acknowledgments}
The authors acknowledge fruitful discussions with Sara Di Martino and 
Michele Pigliapochi in the early stages of the present project. 
A.W. acknowledges the financial support by the Spanish
MINECO (Project Nos. FIS2016-86681-P and PID2019-107609GB-I00/AEI/10.13039/501100011033),
with the support of FEDER funds, and the Generalitat de Catalunya (Project No. 2017-SGR-1127).


\section*{Author Declarations}
\subsection*{Conflict of Interest}
The authors have no conflicts to disclose.


\section*{Data availability}
Data sharing is not applicable to this article as no new data were created or analyzed in this study.

\vfill\pagebreak
\appendix

\section{Cyclic structure of solutions of the equation $a^2-vb^2\equiv1$ in $\ZZ/p$, $p>2$ prime}
\label{sec:a2vb21modp}
Recall that $v$ is not a square modulo $p$ for every prime $p>2$.
Thus, we can construct the field extension of the Galois field $\FF_p=\ZZ/p$
\cite{galois}, with $i$ being a fixed root of $x^2\equiv v$. This is denoted as 
$\FF_p[i]$ and can be seen as $\FF_p+i\FF_p$. The Galois group 
$\text{Aut}(\FF_p[i]/\FF_p)=\{\text{automorphisms}\, \psi:\FF_p[i]\rightarrow\FF_p[i]\,:\, \psi(x)=x\ \forall x\in \FF_p\}$ 
contains only two elements: the identity and $i\mapsto -i$.

Next, $\FF_p[i]^\ast$ with the multiplicative operation is a cyclic group 
(since $\FF_p[i]$ is a finite field) of order $p^2-1$. This means there exists 
an element $\alpha+i\beta,\ \alpha,\beta\in\FF_p^\ast$ of order $p^2-1$ which 
generates the whole group:
\beq 
\FF_p[i]^\ast=\langle \alpha+i\beta\rangle =
\left\{(\alpha+i\beta)^k\,:\, k=0,\dots,p^2-2
\right\}.
\eeq
Our aim is to find all the solutions of $1\equiv a^2-vb^2\equiv(a+ib)(a-ib)\mod p$. We have
\beq 
a+ib\equiv(\alpha+i\beta)^m, \quad a-ib\equiv(\alpha-i\beta)^m
\eeq 
for some $m=0,\dots,p^2-2$ because $i\mapsto -i$ is the only nontrivial 
automorphism of the field $\FF_p[i]\supset\FF_p$. Hence, our equation becomes 
$(\alpha^2-v\beta^2)^m\equiv1\mod p$, from which we see that $m$ is a multiple 
of the order of $\alpha^2-v\beta^2$. 
\begin{proposition}
If $\alpha+i\beta$ is generator of $\FF_p[i]^\ast$, then $\alpha^2-v\beta^2$ is 
generator of $\FF_p^\ast$. 
\end{proposition}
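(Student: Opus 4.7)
The plan is to reinterpret $\alpha^2-v\beta^2$ as a field-theoretic norm. Since $i$ is a fixed root of $x^2\equiv v\mod p$, we have the factorization $\alpha^2-v\beta^2=(\alpha+i\beta)(\alpha-i\beta)$. The excerpt has already identified the nontrivial element of $\mathrm{Aut}(\FF_p[i]/\FF_p)$ as the conjugation $\sigma:i\mapsto -i$, so the product above is precisely the field norm $N(\alpha+i\beta):=(\alpha+i\beta)\,\sigma(\alpha+i\beta)$, which lies in $\FF_p^*$ by Galois invariance.

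Next I would observe that $\sigma$ coincides with the Frobenius automorphism $x\mapsto x^p$: Frobenius is an $\FF_p$-automorphism of $\FF_p[i]$ (fixing $\FF_p$ pointwise by Fermat's little theorem) and is nontrivial because $\FF_p[i]\neq\FF_p$, so it must equal $\sigma$. This yields the compact identity
\[
  N(x)=x\cdot x^p=x^{p+1}\qquad\text{for all }x\in\FF_p[i]^*.
\]

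With this in hand, the conclusion is a single order computation inside a cyclic group. By hypothesis $g:=\alpha+i\beta$ has order $p^2-1$ in $\FF_p[i]^*$; therefore $g^{p+1}$ has order
\[
  \frac{p^2-1}{\gcd(p+1,\,p^2-1)}=\frac{p^2-1}{p+1}=p-1=\lvert\FF_p^*\rvert.
\]
Hence $N(g)=\alpha^2-v\beta^2$ is an element of $\FF_p^*$ whose order equals the order of the whole (cyclic) group $\FF_p^*$, so it generates $\FF_p^*$.

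I do not expect a genuine obstacle here: once the identification with the norm map is made, everything reduces to standard finite-field facts. The only point that needs a brief justification is the equality $\sigma=\mathrm{Frob}$, i.e.\ that $\sigma(x)=x^p$; this can be done explicitly on a generator by writing $x=\alpha+i\beta$ and using $i^p=i\cdot (i^2)^{(p-1)/2}=i\cdot v^{(p-1)/2}=-i$, since $v$ is a non-square modulo $p$ and so $v^{(p-1)/2}\equiv-1\mod p$ by Euler's criterion.
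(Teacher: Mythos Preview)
Your proof is correct and follows essentially the same route as the paper: both identify the conjugation $i\mapsto -i$ with the Frobenius $x\mapsto x^p$ (the paper argues Frobenius is a nontrivial element of the two-element Galois group, while you additionally offer the direct check $i^p=i\,v^{(p-1)/2}=-i$ via Euler's criterion), deduce $\alpha^2-v\beta^2=(\alpha+i\beta)^{p+1}$, and then read off the order. Your order computation via $\gcd(p+1,p^2-1)=p+1$ is in fact a bit more explicit than the paper's, which only notes $[(\alpha+i\beta)^{p+1}]^{p-1}=1$ and leaves the ``exactly $p-1$'' implicit.
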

\begin{proof}
First, we want to show that the map
\beq 
f:\mathbb{F}_p[i]\rightarrow \mathbb{F}_p[i],\ x\mapsto x^p
\eeq 
corresponds to the non trivial automorphism $i\mapsto -i$.
\\$f\in \text{Aut}(\mathbb{F}_p[i]/\mathbb{F}_p)$ because of the following conditions:
\begin{itemize}
    \item $f(x+y)\equiv(x+y)^p\equiv\sum_{j=0}^p{p\choose j}x^jy^{p-j}\equiv x^p+y^p\equiv f(x)+f(y),\ \forall x,y\in \mathbb{F}_p[i]$,\\
    since the binomial coefficients ${p\choose j}$ for $j=1,\dots,p-1$ are multiples of $p$.
    \item $f(xy)\equiv\underbrace{xy\dots xy}_{p\,\text{times}}\equiv x^py^p\equiv f(x)f(y),\ \forall x,y\in \mathbb{F}_p[i]$.
    \item $f(x)\equiv 0\Leftrightarrow x\equiv 0$.
    \item For every $x\in \mathbb{F}_p^\ast$, we have $f(x)\equiv x^p\equiv x^{p-1}x\equiv x$ since $\mathbb{F}_p^\ast$ is a multiplicative cyclic group of order $p-1$.
\end{itemize}
Furthermore, $f\neq id$; otherwise $x^p\equiv x$ for every $x\in \mathbb{F}_p[i]$ would imply $x^{p-1}\equiv 1$ for every $x\in \mathbb{F}_p[i]^\ast$, which leads to a contradiction with the fact that the order of $\mathbb{F}_p[i]^\ast$ is $p^2-1$. \\It follows that $a-ib\equiv f(a+ib)\equiv (a+ib)^p, \forall a,b\in\mathbb{F}_p$.
\par If $\alpha +i\beta $ is generator of $\mathbb{F}_p[i]^\ast$, it is an element of order $p^2-1$ and $1\equiv (\alpha +i\beta)^{p^2-1}\equiv \big[(\alpha +i\beta)^{(p+1)}\big]^{(p-1)}$. Therefore $\alpha^2-v\beta^2\equiv (\alpha+i\beta)(\alpha-i\beta)\equiv (\alpha+i\beta)^{p+1}$ is a generator of $\mathbb{F}_p^\ast$.
\end{proof}

\medskip

Since $\alpha^2-v\beta^2\in \mathbb{F}_p^\ast$ is of order $p-1$, together with $(\alpha^2-v\beta^2)^m\equiv1$, we can write $m=(p-1)n$ where $n=0,\dots,p$, and
\beq 
a+ib\equiv\left((\alpha+i\beta)^{p-1}\right)^n:\equiv(a_0+ib_0)^n.
\eeq 

We have $a_0^2-vb_0^2\equiv(\alpha+i\beta)^{p-1}(\alpha-i\beta)^{p-1}\equiv(\alpha^2-v\beta^2)^{p-1}\equiv1$. The pairs $(a,b)$ such that $a+ib\equiv (a_0+ib_0)^n$ are all distinct by varying $n$ because $(a_0+ib_0)^n\equiv(a_0+ib_0)^{n'}\Leftrightarrow1\equiv (a_0+ib_0)^{n-n'}\equiv(\alpha+i\beta)^{(p-1)(n-n')}\Leftrightarrow(n-n')(p-1)=M(p^2-1)\Leftrightarrow n-n'=M(p+1)$, which is impossible for every $M\in\ZZ_{>0}$. Furthermore, $(a_0+ib_0)^n\equiv (\alpha+i\beta)^{(p-1)n}\equiv1\Leftrightarrow (p-1)n=q(p^2-1)\Leftrightarrow n=q(p+1)$ for $q\in\ZZ$. 

This means that the solutions $(a,b)\in\FF_p^2$ of $a^2-vb^2\equiv1\mod p$ form a cyclic group of order $p+1$, endowed with the product
\beq 
(a,b)\cdot (c,d):=(ac+vbd,\,ad+bc).
\eeq 
This is because the pairs $(a,b)$ are regarded as a single number $a+ib$ and $(a+ib)(c+id)=ac+iad+ibc+i^2bd=(ac+vbd)+i(ad+bd)$. Another equivalence is 
\beq 
a+ib\leftrightarrow\begin{pmatrix}
a&vb\\b&a
\end{pmatrix}
\eeq 
that gives again
\beq 
(a,b)\cdot(c,d)=\begin{pmatrix}
a&vb\\b&a
\end{pmatrix}\begin{pmatrix}
c&vd\\d&c
\end{pmatrix} =\begin{pmatrix}
ac+vbd&v(ad+bc)\\ad+bc&ac+vbd
\end{pmatrix}.
\eeq


\section{Representations of dihedral groups of even degree}
\label{dihrep}
Consider the dihedral group $D_n$, where $n$ is even \cite{repsSL}
,
\beq
 D_n \coloneqq \langle a,x \mid a^n = x^2 = e,\, xax = a^{-1} \rangle.
\eeq

This group has $(n+6)/2$ conjugacy classes: the identity element, the element $a^{n/2}$, $(n-2)/2$ other conjugacy classes in $\langle a \rangle$, and two conjugacy classes outside $\langle a \rangle$, with representatives $x$ and $ax$. The representations we are going to show are all unitary.

\subsection{The four one-dimensional representations}
\label{dihrep1}

The commutator subgroup is $\langle a^2 \rangle$, which has index four, and the quotient group (the abelianization) is a Klein four-group. There are, thus, four one-dimensional representations:
\begin{itemize}
\item The trivial representation, sending all elements to the $1 \times 1$ matrix $(1)$.
\item The representation sending all elements in $\langle a \rangle$ to $(1)$ and all the others to $(-1)$.
\item The representation sending all elements in $\langle a^2, x \rangle$ to $(1)$ and all the others to $(-1)$.
\item The representation sending all elements in $\langle a^2, ax \rangle$ to $(1)$ and all the others to $(-1)$.
\end{itemize}

\subsection{Two-dimensional representations}
\label{dihrep2}

There are $(n-2)/2$ irreducible two-dimensional representations. The $k$th representation into real orthogonal matrices is as follows:
\begin{equation}
 a\mapsto
 \begin{pmatrix}
        \cos\frac{2\pi k}{n} & -\sin\frac{2\pi k}{n}\\
        \sin\frac{2\pi k}{n} & \cos\frac{2\pi k}{n}
       \end{pmatrix},
 \quad
 x\mapsto
 \begin{pmatrix}
        1 & 0\\
        0 & -1
       \end{pmatrix}.
\end{equation}
$k$ ranges between $1$ and $(n-2)/2$, because $k$ and $n-k$ form equivalent complex irreducible representations. Furthermore, for $k=0$ and $k=n/2$, the representation is not irreducible and can be broken down into one-dimensional representations.


\section{Conjugacy classes of $G_3$}
\label{conjclg3}
In Proposition \ref{conjclg3prop}, we need to know the number of conjugacy classes of
\[
 G_3=\pi_1(SO(3)_3)=SO(3)_3\mod3.
\]
$G_3$ is partitioned in the following nine conjugacy classes [using the form in \eqref{G3param}]:

\begin{itemize}
\item one class of order 1, composed of
\[
\begin{array}{llll}
\!\!\!\begin{pmatrix}
  1& 0& 0 \\  0& 1& 0 \\  0& 0& 1 \end{pmatrix}.
\end{array}
 \]
\item Two classes of order 4,
\[\big\{M(1,0,c,d,1)\,:\,(c,d)\in\{(0,\pm1),(\pm1,0)\}\big\}\] composed of
\[
\begin{array}{llll}
\!\!\!\begin{pmatrix}
  1& 0& 0 \\  0& 1& 0 \\  0& 1& 1 \end{pmatrix},& \!\!\!\begin{pmatrix}
  1& 0& 0 \\  0& 1& 0 \\  0&-1& 1 \end{pmatrix},& \!\!\!\begin{pmatrix}
  1& 0& 0 \\  0& 1& 0 \\  1& 0& 1 \end{pmatrix},& \!\!\!\begin{pmatrix}
  1& 0& 0 \\  0& 1& 0 \\ -1& 0& 1 \end{pmatrix},
 \end{array}
\]
\[\big\{M(1,0,c,d,1)\,:\,(c,d)\in\{(1,\pm1),(-1,\pm1)\}\big\}\] composed of
\[
\begin{array}{llll}
\!\!\!\begin{pmatrix}
  1& 0& 0 \\  0& 1& 0 \\  1& 1& 1 \end{pmatrix},& \!\!\!\begin{pmatrix}
  1& 0& 0 \\  0& 1& 0 \\  1&-1& 1 \end{pmatrix},& \!\!\!\begin{pmatrix}
  1& 0& 0 \\  0& 1& 0 \\ -1& 1& 1 \end{pmatrix},& \!\!\!\begin{pmatrix}
  1& 0& 0 \\  0& 1& 0 \\ -1&-1& 1 \end{pmatrix}.
 \end{array}
\]
\item Two classes of order 6,
\[\big\{M(a,b,c,d,-1)\,:\,(a,b)\in\{(0,\pm1)\},(c,d)\in\{(0,0),(\pm 1,\pm b)\}\big\}\] composed of
\[
\begin{array}{llll}
\!\!\!\begin{pmatrix}
  0& 1& 0 \\  1& 0& 0 \\  0& 0&-1 \end{pmatrix},& \!\!\!\begin{pmatrix}
  0& 1& 0 \\  1& 0& 0 \\  1& 1&-1 \end{pmatrix},& \!\!\!\begin{pmatrix}
  0& 1& 0 \\  1& 0& 0 \\ -1&-1&-1 \end{pmatrix},& \\ \!\!\!\begin{pmatrix}
  0&-1& 0 \\ -1& 0& 0 \\  0& 0&-1 \end{pmatrix},& \!\!\!\begin{pmatrix}
  0&-1& 0 \\ -1& 0& 0 \\  1&-1&-1 \end{pmatrix},& \!\!\!\begin{pmatrix}
  0&-1& 0 \\ -1& 0& 0 \\ -1& 1&-1 \end{pmatrix}, &
 \end{array}
\]
\[\big\{M(1,0,c,d,-1)\,:\,(c,d)\in\{(0,0),(\pm 1,0)\}\big\}\cup\big\{M(-1,0,c,d,-1)\,:\,(c,d)\in\{(0,0),(0,\pm 1)\}\big\}\] composed of
\[
\begin{array}{llll}
\!\!\!\begin{pmatrix}
  1& 0& 0 \\  0&-1& 0 \\  0& 0&-1 \end{pmatrix},& \!\!\!\begin{pmatrix}
  1& 0& 0 \\  0&-1& 0 \\  1& 0&-1 \end{pmatrix},& \!\!\!\begin{pmatrix}
  1& 0& 0 \\  0&-1& 0 \\ -1& 0&-1 \end{pmatrix},&\\ \!\!\!\begin{pmatrix}
 -1& 0& 0 \\  0& 1& 0 \\  0& 0&-1 \end{pmatrix},& \!\!\!\begin{pmatrix}
 -1& 0& 0 \\  0& 1& 0 \\  0& 1&-1 \end{pmatrix},& \!\!\!\begin{pmatrix}
 -1& 0& 0 \\  0& 1& 0 \\  0&-1&-1 \end{pmatrix}. &
 \end{array}
\]
\item One class of order 9,
\[\big\{M(-1,0,c,d,1)\,:\,c,d\in\ZZ/3\big\}\]composed of
\[
\begin{array}{llll}
\!\!\!\begin{pmatrix}
 -1& 0& 0 \\  0&-1& 0 \\  0& 0& 1 \end{pmatrix},& \!\!\!\begin{pmatrix}
 -1& 0& 0 \\  0&-1& 0 \\  0& 1& 1 \end{pmatrix},& \!\!\!\begin{pmatrix}
 -1& 0& 0 \\  0&-1& 0 \\  0&-1& 1 \end{pmatrix},&\\ \!\!\!\begin{pmatrix}
 -1& 0& 0 \\  0&-1& 0 \\  1& 0& 1 \end{pmatrix},& \!\!\!\begin{pmatrix}
 -1& 0& 0 \\  0&-1& 0 \\  1& 1& 1 \end{pmatrix},& \!\!\!\begin{pmatrix}
 -1& 0& 0 \\  0&-1& 0 \\  1&-1& 1 \end{pmatrix},&\\ \!\!\!\begin{pmatrix}
 -1& 0& 0 \\  0&-1& 0 \\ -1& 0& 1 \end{pmatrix},& \!\!\!\begin{pmatrix}
 -1& 0& 0 \\  0&-1& 0 \\ -1& 1& 1 \end{pmatrix},& \!\!\!\begin{pmatrix}
 -1& 0& 0 \\  0&-1& 0 \\ -1&-1& 1 \end{pmatrix}. &
\end{array}
\]
\item Two classes of order 12,
\[\big\{M(0,\pm1,c,d,-1)\,:\,(c,d)\in\{(0,\pm1),(\pm1,0),(\pm1,\mp b)\}\big\}\]composed of
\[
\begin{array}{llll}
\!\!\!\begin{pmatrix}
  0& 1& 0 \\  1& 0& 0 \\  0& 1&-1 \end{pmatrix},& \!\!\!\begin{pmatrix}
  0& 1& 0 \\  1& 0& 0 \\  0&-1&-1 \end{pmatrix},& \!\!\!\begin{pmatrix}
  0& 1& 0 \\  1& 0& 0 \\  1& 0&-1 \end{pmatrix},& \!\!\!\begin{pmatrix}
  0& 1& 0 \\  1& 0& 0 \\  1&-1&-1 \end{pmatrix},\\ \!\!\!\begin{pmatrix}
  0& 1& 0 \\  1& 0& 0 \\ -1& 0&-1 \end{pmatrix},& \!\!\!\begin{pmatrix}
  0& 1& 0 \\  1& 0& 0 \\ -1& 1&-1 \end{pmatrix},& \!\!\!\begin{pmatrix}
  0&-1& 0 \\ -1& 0& 0 \\  0& 1&-1 \end{pmatrix},& \!\!\!\begin{pmatrix}
  0&-1& 0 \\ -1& 0& 0 \\  0&-1&-1 \end{pmatrix},\\ \!\!\!\begin{pmatrix}
  0&-1& 0 \\ -1& 0& 0 \\  1& 0&-1 \end{pmatrix},& \!\!\!\begin{pmatrix}
  0&-1& 0 \\ -1& 0& 0 \\  1& 1&-1 \end{pmatrix},& \!\!\!\begin{pmatrix}
  0&-1& 0 \\ -1& 0& 0 \\ -1& 0&-1 \end{pmatrix},& \!\!\!\begin{pmatrix}
  0&-1& 0 \\ -1& 0& 0 \\ -1&-1&-1 \end{pmatrix},
\end{array}
\]
\[\big\{M(1,0,c,\pm1,-1)\,:\,c\in\ZZ/3\big\}\cup\big\{M(-1,0,\pm1,d,-1)\,:\,d\in\ZZ/3\big\}\]composed of
\[
\begin{array}{llll}
\!\!\!\begin{pmatrix}
  1& 0& 0 \\  0&-1& 0 \\  0& 1&-1 \end{pmatrix},& \!\!\!\begin{pmatrix}
  1& 0& 0 \\  0&-1& 0 \\  0&-1&-1 \end{pmatrix},& \!\!\!\begin{pmatrix}
  1& 0& 0 \\  0&-1& 0 \\  1& 1&-1 \end{pmatrix},& \!\!\!\begin{pmatrix}
  1& 0& 0 \\  0&-1& 0 \\  1&-1&-1 \end{pmatrix},\\ \!\!\!\begin{pmatrix}
  1& 0& 0 \\  0&-1& 0 \\ -1& 1&-1 \end{pmatrix},& \!\!\!\begin{pmatrix}
  1& 0& 0 \\  0&-1& 0 \\ -1&-1&-1 \end{pmatrix},& \!\!\!\begin{pmatrix}
 -1& 0& 0 \\  0& 1& 0 \\  1& 0&-1 \end{pmatrix},& \!\!\!\begin{pmatrix}
 -1& 0& 0 \\  0& 1& 0 \\  1& 1&-1 \end{pmatrix},\\ \!\!\!\begin{pmatrix}
 -1& 0& 0 \\  0& 1& 0 \\  1&-1&-1 \end{pmatrix},& \!\!\!\begin{pmatrix}
 -1& 0& 0 \\  0& 1& 0 \\ -1& 0&-1 \end{pmatrix},& \!\!\!\begin{pmatrix}
 -1& 0& 0 \\  0& 1& 0 \\ -1& 1&-1 \end{pmatrix},& \!\!\!\begin{pmatrix}
 -1& 0& 0 \\  0& 1& 0 \\ -1&-1&-1 \end{pmatrix}.
\end{array}
\]
\item One class of order 18,
\[\big\{M(0,\pm1,c,d,1)\,:\,c,d\in\ZZ/3\big\}\]composed of
\[
\begin{array}{llll}
\!\!\!\begin{pmatrix}
  0& 1& 0 \\ -1& 0& 0 \\  0& 0& 1 \end{pmatrix},& \!\!\!\begin{pmatrix}
  0& 1& 0 \\ -1& 0& 0 \\  0& 1& 1 \end{pmatrix},& \!\!\!\begin{pmatrix}
  0& 1& 0 \\ -1& 0& 0 \\  0&-1& 1 \end{pmatrix},& \!\!\!\begin{pmatrix}
  0& 1& 0 \\ -1& 0& 0 \\  1& 0& 1 \end{pmatrix},\\ \!\!\!\begin{pmatrix}
  0& 1& 0 \\ -1& 0& 0 \\  1& 1& 1 \end{pmatrix},& \!\!\!\begin{pmatrix}
  0& 1& 0 \\ -1& 0& 0 \\  1&-1& 1 \end{pmatrix},& \!\!\!\begin{pmatrix}
  0& 1& 0 \\ -1& 0& 0 \\ -1& 0& 1 \end{pmatrix},& \!\!\!\begin{pmatrix}
  0& 1& 0 \\ -1& 0& 0 \\ -1& 1& 1 \end{pmatrix},\\ \!\!\!\begin{pmatrix}
  0& 1& 0 \\ -1& 0& 0 \\ -1&-1& 1 \end{pmatrix},& \!\!\!\begin{pmatrix}
  0&-1& 0 \\  1& 0& 0 \\  0& 0& 1 \end{pmatrix},& \!\!\!\begin{pmatrix}
  0&-1& 0 \\  1& 0& 0 \\  0& 1& 1 \end{pmatrix},& \!\!\!\begin{pmatrix}
  0&-1& 0 \\  1& 0& 0 \\  0&-1& 1 \end{pmatrix},\\ \!\!\!\begin{pmatrix}
  0&-1& 0 \\  1& 0& 0 \\  1& 0& 1 \end{pmatrix},& \!\!\!\begin{pmatrix}
  0&-1& 0 \\  1& 0& 0 \\  1& 1& 1 \end{pmatrix},& \!\!\!\begin{pmatrix}
  0&-1& 0 \\  1& 0& 0 \\  1&-1& 1 \end{pmatrix},& \!\!\!\begin{pmatrix}
  0&-1& 0 \\  1& 0& 0 \\ -1& 0& 1 \end{pmatrix},\\ \!\!\!\begin{pmatrix}
  0&-1& 0 \\  1& 0& 0 \\ -1& 1& 1 \end{pmatrix},& \!\!\!\begin{pmatrix}
  0&-1& 0 \\  1& 0& 0 \\ -1&-1& 1 \end{pmatrix}.& &
\end{array}
\]
\end{itemize}

\section{Proof of Proposition \ref{3dregrees}}\label{degreeP3}
We need to solve $\sum_{\alpha\mid d_\alpha>1}d_\alpha^2=68$ for integers $d_\alpha$ divisors of $72$ greater than $1$ 
up to permutation, $\lvert\{\alpha \mid d_\alpha>1\}\rvert=5$. Without loss of generality, we can suppose the $d_\alpha$ in non-decreasing order. We will study case by case, starting from the largest value, $d_9$, with decreasing values. 
 
$d_9$ cannot be larger than 8, because its square would exceed 68, which is the sum of positive numbers. 

\textit{Case $d_9=8$}: this would mean that $d_5^2+d_6^2+d_7^2+d_8^2=68-64=4$, but that is not possible as $d_\alpha>1$ for any $\alpha=5,6,7,8$.

\textit{Case $d_9=6$}: assuming this, we have $d_5^2+d_6^2+d_7^2+d_8^2=32$. What can we say about $d_8$?
  \begin{itemize}
   \item $d_8=4$: then, $d_5^2+d_6^2+d_7^2=16$.
   \begin{itemize}
    \item $d_7=3$: $d_5^2+d_6^2=7$, with only the contradictory chance $d_5=d_6=2$.
    \item $d_7=2$: then, $d_5=d_6=2$, but $d_5^2+d_6^2+d_7^2$ results as 12 and not 16.
   \end{itemize}
   \item $d_8=3$: $d_5^2+d_6^2+d_7^2=23$.
   \begin{itemize}
    \item $d_7=3$: $d_5^2+d_6^2=14$.
    \begin{itemize}
     \item $d_6=3$: $d_5^2=5$, impossible.
     \item $d_6=2$: it would be $d_5^2+d_6^2=8=14$, absurd.
    \end{itemize}
    \item $d_7=2$: then, $d_5^2+d_6^2+d_7^2=12$, while it should be 23.
   \end{itemize}
   \item $d_8=2$: in this case, $d_5^2+d_6^2+d_7^2+d_8^2=16$ instead of 32.
  \end{itemize}

\textit{Case $d_9=4$}: consequently, $d_5^2+d_6^2+d_7^2+d_8^2=52$.
  \begin{itemize}
   \item $d_8=4$: $d_5^2+d_6^2+d_7^2=36$.
   \begin{itemize}
    \item $d_7=4$: $d_5^2+d_6^2=20$.
    \begin{itemize}
     \item $d_6=4$: then, $d_5^2=4\Rightarrow d_5=2$,  \underline{solution}.
     \item $d_6=3$: then, $d_5^2=11$, impossible.
     \item $d_6=2$: then, $d_5^2+d_6^2=8<16$.
    \end{itemize}
    \item $d_7=3$: then, $d_5^2+d_6^2=27$. The odd sum forces $d_6=3\Rightarrow d_5^2=18$, impossible.
    \item $d_7=2$: then, $d_5^2+d_6^2+d_7^2=12<36$.
   \end{itemize}
   \item $d_8=3$: $d_5^2+d_6^2+d_7^2$ should be 43, but $d_5^2+d_6^2+d_7^2\leq27$.
   \item $d_8=2$: then, $d_5^2+d_6^2+d_7^2+d_8^2=16<52$.
  \end{itemize}

\textit{Case $d_9=3$}: if the larger value is 3, $d_5^2+d_6^2+d_7^2+d_8^2+d_9^2$ can be at most $5\cdot9=45<68$.

\textit{Case $d_9=2$}: it is ruled out for the same reason as $d_9=3$.
\\
\\ The only solution is $d_5=2,\ d_6=d_7=d_8=d_9=4$.

\section{Conjugacy classes of $G_5$}
\label{conjclg5}
In Proposition \ref{conjclg5prop}, we need to know the number of conjugacy classes of
\[
 G_5=\pi_1(SO(3)_5)=SO(3)_5\mod5.
\]
$G_5$ is partitioned in the following $14$ conjugacy classes [referring to the form in \eqref{G5param}]:
\begin{itemize}
    \item one class of order $1$, say $C_1$, composed of
    \[ \begin{pmatrix}
           1&0&0\\0&1&0\\0&0&1
    \end{pmatrix}.
    \]
    \item Four classes of order $6$,
    \[C_2=\big\{M(1,0,c,d,1)\,:\,(c,d)\in\{(1,\pm1),(-1,\pm1),(\pm2,0)\}\big\}\]
    composed of\\
    $\begin{pmatrix}1 & 0 & 0\\0 & 1 & 0\\1 & 1 & 1\end{pmatrix}$,  $\begin{pmatrix}1 & 0 & 0\\0 & 1 & 0\\1 & -1 & 1\end{pmatrix}$,   $\begin{pmatrix}1 & 0 & 0\\0 & 1 & 0\\-1 & 1 & 1\end{pmatrix}$, $\begin{pmatrix}1 & 0 & 0\\0 & 1 & 0\\-1 & -1 & 1\end{pmatrix}$,     $\begin{pmatrix}1 & 0 & 0\\0 & 1 & 0\\2 & 0 & 1\end{pmatrix}$, $\begin{pmatrix}1 & 0 & 0\\0 & 1 & 0\\-2 & 0 & 1\end{pmatrix}$,
    \vspace{.7cm}\\
    \[C_3=\big\{M(1,0,c,d,1)\,:\,(c,d)\in\{(\pm1,0),(2,\pm2),(-2,\pm2)\}\big\}\]
    composed of\\
    $\begin{pmatrix}1 & 0 & 0\\0 & 1 & 0\\1 & 0 & 1\end{pmatrix}$, $\begin{pmatrix}1 & 0 & 0\\0 & 1 & 0\\-1 & 0 & 1\end{pmatrix}$, $\begin{pmatrix}1 & 0 & 0\\0 & 1 & 0\\2 & 2 & 1\end{pmatrix}$, $\begin{pmatrix}1 & 0 & 0\\0 & 1 & 0\\2 & -2 & 1\end{pmatrix}$, $\begin{pmatrix}1 & 0 & 0\\0 & 1 & 0\\-2 & 2 & 1\end{pmatrix}$, $\begin{pmatrix}1 & 0 & 0\\0 & 1 & 0\\-2 & -2 & 1\end{pmatrix}$,
    \vspace{.7cm}\\
    \[C_4=\big\{M(1,0,c,d,1)\,:\,(c,d)\in\{(0,\pm2),(2,\pm1),(-2,\pm1)\}\big\}\]
    composed of\\
    $\begin{pmatrix}1 & 0 & 0\\0 & 1 & 0\\0 & 2 & 1\end{pmatrix}$, $\begin{pmatrix}1 & 0 & 0\\0 & 1 & 0\\0 & -2 & 1\end{pmatrix}$, $\begin{pmatrix}1 & 0 & 0\\0 & 1 & 0\\2 & 1 & 1\end{pmatrix}$, $\begin{pmatrix}1 & 0 & 0\\0 & 1 & 0\\2 & -1 & 1\end{pmatrix}$, $\begin{pmatrix}1 & 0 & 0\\0 & 1 & 0\\-2 & 1 & 1\end{pmatrix}$, $\begin{pmatrix}1 & 0 & 0\\0 & 1 & 0\\-2 & -1 & 1\end{pmatrix}$,
    \vspace{.7cm}\\
    \[C_5=\big\{M(1,0,c,d,1)\,:\,(c,d)\in\{(0,\pm1),(1,\pm2),(-1,\pm2)\}\big\}\]
    composed of\\
    $\begin{pmatrix}1 & 0 & 0\\0 & 1 & 0\\0 & 1 & 1\end{pmatrix}$,  $\begin{pmatrix}1 & 0 & 0\\0 & 1 & 0\\0 & -1 & 1\end{pmatrix}$, $\begin{pmatrix}1 & 0 & 0\\0 & 1 & 0\\1 & 2 & 1\end{pmatrix}$, $\begin{pmatrix}1 & 0 & 0\\0 & 1 & 0\\1 & -2 & 1\end{pmatrix}$, $\begin{pmatrix}1 & 0 & 0\\0 & 1 & 0\\-1 & 2 & 1\end{pmatrix}$,  $\begin{pmatrix}1 & 0 & 0\\0 & 1 & 0\\-1 & -2 & 1\end{pmatrix}$.
    \item Two classes of order $15$,
    \[C_6=\big\{M(a,b,c,sbc,-1)\,:\,(a,b)\in\{(1,0),(2,\pm1),c\in\ZZ/5\}\big\}\]
    composed of\\
    $\begin{pmatrix}1 & 0 & 0\\0 & -1 & 0\\0 & 0 & -1\end{pmatrix}$, $\begin{pmatrix}1 & 0 & 0\\0 & -1 & 0\\2 & 0 & -1\end{pmatrix}$, $\begin{pmatrix}1 & 0 & 0\\0 & -1 & 0\\-1 & 0 & -1\end{pmatrix}$, $\begin{pmatrix}1 & 0 & 0\\0 & -1 & 0\\1 & 0 & -1\end{pmatrix}$, $\begin{pmatrix}1 & 0 & 0\\0 & -1 & 0\\-2 & 0 & -1\end{pmatrix}$,\\
    $\begin{pmatrix}2 & 2 & 0\\1 & -2 & 0\\0 & 0 & -1\end{pmatrix}$, $\begin{pmatrix}2 & 2 & 0\\1 & -2 & 0\\-2 & 2 & -1\end{pmatrix}$, $\begin{pmatrix}2 & 2 & 0\\1 & -2 & 0\\1 & -1 & -1\end{pmatrix}$,  $\begin{pmatrix}2 & 2 & 0\\1 & -2 & 0\\2 & -2 & -1\end{pmatrix}$, $\begin{pmatrix}2 & 2 & 0\\1 & -2 & 0\\-1 & 1 & -1\end{pmatrix}$,\\
    $\begin{pmatrix}2 & -2 & 0\\-1 & -2 & 0\\0 & 0 & -1\end{pmatrix}$, $\begin{pmatrix}2 & -2 & 0\\-1 & -2 & 0\\2 & 2 & -1\end{pmatrix}$, $\begin{pmatrix}2 & -2 & 0\\-1 & -2 & 0\\-1 & -1 & -1\end{pmatrix}$, $\begin{pmatrix}2 & -2 & 0\\-1 & -2 & 0\\1 & 1 & -1\end{pmatrix}$, $\begin{pmatrix}2 & -2 & 0\\-1 & -2 & 0\\-2 & -2 & -1\end{pmatrix}$,
    \vspace{.7cm}\\
    \[C_7=\big\{M(-1,0,0,d,-1)\,:\,d\in\ZZ/5\big\}\cup\big\{M(a,b,c,abc,-1)\,:\,(a,b)\in\{(-2,\pm1),c\in\ZZ/5\}\big\}\]
    composed of\\
    $\begin{pmatrix}-1 & 0 & 0\\0 & 1 & 0\\0 & 0 & -1\end{pmatrix}$, $\begin{pmatrix}-1 & 0 & 0\\0 & 1 & 0\\0 & 2 & -1\end{pmatrix}$, $\begin{pmatrix}-1 & 0 & 0\\0 & 1 & 0\\0 & -1 & -1\end{pmatrix}$, $\begin{pmatrix}-1 & 0 & 0\\0 & 1 & 0\\0 & 1 & -1\end{pmatrix}$, $\begin{pmatrix}-1 & 0 & 0\\0 & 1 & 0\\0 & -2 & -1\end{pmatrix}$, \\
    $\begin{pmatrix}-2 & 2 & 0\\1 & 2 & 0\\0 & 0 & -1\end{pmatrix}$, $\begin{pmatrix}-2 & 2 & 0\\1 & 2 & 0\\-1 & 2 & -1\end{pmatrix}$, $\begin{pmatrix}-2 & 2 & 0\\1 & 2 & 0\\-2 & -1 & -1\end{pmatrix}$, $\begin{pmatrix}-2 & 2 & 0\\1 & 2 & 0\\2 & 1 & -1\end{pmatrix}$, $\begin{pmatrix}-2 & 2 & 0\\1 & 2 & 0\\1 & -2 & -1\end{pmatrix}$\\
    $\begin{pmatrix}-2 & -2 & 0\\-1 & 2 & 0\\0 & 0 & -1\end{pmatrix}$, $\begin{pmatrix}-2 & -2 & 0\\-1 & 2 & 0\\1 & 2 & -1\end{pmatrix}$,  $\begin{pmatrix}-2 & -2 & 0\\-1 & 2 & 0\\2 & -1 & -1\end{pmatrix}$,     $\begin{pmatrix}-2 & -2 & 0\\-1 & 2 & 0\\-2 & 1 & -1\end{pmatrix}$, $\begin{pmatrix}-2 & -2 & 0\\-1 & 2 & 0\\-1 & -2 & -1\end{pmatrix}$.
    \item One class of order $25$,
    \[C_8=\big\{M(-1,0,c,d,1)\,:\,c,d\in\ZZ/5\big\}\]
    composed of\\
    $\begin{pmatrix}-1 & 0 & 0\\0 & -1 & 0\\0 & 0 & 1\end{pmatrix}$, $\begin{pmatrix}-1 & 0 & 0\\0 & -1 & 0\\-2 & 0 & 1\end{pmatrix}$, $\begin{pmatrix}-1 & 0 & 0\\0 & -1 & 0\\1 & 0 & 1\end{pmatrix}$, $\begin{pmatrix}-1 & 0 & 0\\0 & -1 & 0\\-1 & 0 & 1\end{pmatrix}$, $\begin{pmatrix}-1 & 0 & 0\\0 & -1 & 0\\2 & 0 & 1\end{pmatrix}$, $\begin{pmatrix}-1 & 0 & 0\\0 & -1 & 0\\0 & -2 & 1\end{pmatrix}$, $\begin{pmatrix}-1 & 0 & 0\\0 & -1 & 0\\-2 & -2 & 1\end{pmatrix}$, $\begin{pmatrix}-1 & 0 & 0\\0 & -1 & 0\\1 & -2 & 1\end{pmatrix}$, $\begin{pmatrix}-1 & 0 & 0\\0 & -1 & 0\\-1 & -2 & 1\end{pmatrix}$, $\begin{pmatrix}-1 & 0 & 0\\0 & -1 & 0\\2 & -2 & 1\end{pmatrix}$, $\begin{pmatrix}-1 & 0 & 0\\0 & -1 & 0\\0 & 1 & 1\end{pmatrix}$, $\begin{pmatrix}-1 & 0 & 0\\0 & -1 & 0\\-2 & 1 & 1\end{pmatrix}$, $\begin{pmatrix}-1 & 0 & 0\\0 & -1 & 0\\1 & 1 & 1\end{pmatrix}$, $\begin{pmatrix}-1 & 0 & 0\\0 & -1 & 0\\-1 & 1 & 1\end{pmatrix}$, $\begin{pmatrix}-1 & 0 & 0\\0 & -1 & 0\\2 & 1 & 1\end{pmatrix}$, $\begin{pmatrix}-1 & 0 & 0\\0 & -1 & 0\\0 & -1 & 1\end{pmatrix}$, $\begin{pmatrix}-1 & 0 & 0\\0 & -1 & 0\\-2 & -1 & 1\end{pmatrix}$, $\begin{pmatrix}-1 & 0 & 0\\0 & -1 & 0\\1 & -1 & 1\end{pmatrix}$, $\begin{pmatrix}-1 & 0 & 0\\0 & -1 & 0\\-1 & -1 & 1\end{pmatrix}$, $\begin{pmatrix}-1 & 0 & 0\\0 & -1 & 0\\2 & -1 & 1\end{pmatrix}$, $\begin{pmatrix}-1 & 0 & 0\\0 & -1 & 0\\0 & 2 & 1\end{pmatrix}$, $\begin{pmatrix}-1 & 0 & 0\\0 & -1 & 0\\-2 & 2 & 1\end{pmatrix}$, $\begin{pmatrix}-1 & 0 & 0\\0 & -1 & 0\\1 & 2 & 1\end{pmatrix}$, $\begin{pmatrix}-1 & 0 & 0\\0 & -1 & 0\\-1 & 2 & 1\end{pmatrix}$, $\begin{pmatrix}-1 & 0 & 0\\0 & -1 & 0\\2 & 2 & 1\end{pmatrix}$.
    \item Four classes of order $30$,
    \begin{align*}
    C_9=&\big\{M(-1,0,\pm2,d,-1)\,:\,d\in\ZZ/5\big\}\\&\cup\big\{M(-2,1,c,d,-1)\,:\,(c,d)\in\{(0\pm2),(\pm1,0),(\pm1,\pm1),(\pm2,\mp1),(\pm2,\mp2)\}\big\}\\&\cup\big\{M(-2,-1,c,d,-1)\,:\,(c,d)\in\{(0,\pm2),(\pm1,0),(\pm1,\mp1),(\pm2,\pm1),(\pm2,\pm2)\}\big\}\end{align*}
    composed of\\
    $\begin{pmatrix}-1 & 0 & 0\\0 & 1 & 0\\-2 & 0 & -1\end{pmatrix}$, $\begin{pmatrix}-1 & 0 & 0\\0 & 1 & 0\\2 & 0 & -1\end{pmatrix}$, $\begin{pmatrix}-1 & 0 & 0\\0 & 1 & 0\\-2 & 2 & -1\end{pmatrix}$, $\begin{pmatrix}-1 & 0 & 0\\0 & 1 & 0\\2 & 2 & -1\end{pmatrix}$, $\begin{pmatrix}-1 & 0 & 0\\0 & 1 & 0\\-2 & -1 & -1\end{pmatrix}$, $\begin{pmatrix}-1 & 0 & 0\\0 & 1 & 0\\2 & -1 & -1\end{pmatrix}$, $\begin{pmatrix}-1 & 0 & 0\\0 & 1 & 0\\-2 & 1 & -1\end{pmatrix}$, $\begin{pmatrix}-1 & 0 & 0\\0 & 1 & 0\\2 & 1 & -1\end{pmatrix}$, $\begin{pmatrix}-1 & 0 & 0\\0 & 1 & 0\\-2 & -2 & -1\end{pmatrix}$, $\begin{pmatrix}-1 & 0 & 0\\0 & 1 & 0\\2 & -2 & -1\end{pmatrix}$, $\begin{pmatrix}-2 & 2 & 0\\1 & 2 & 0\\2 & -2 & -1\end{pmatrix}$, $\begin{pmatrix}-2 & 2 & 0\\1 & 2 & 0\\-2 & 2 & -1\end{pmatrix}$, $\begin{pmatrix}-2 & 2 & 0\\1 & 2 & 0\\2 & -1 & -1\end{pmatrix}$, $\begin{pmatrix}-2 & 2 & 0\\1 & 2 & 0\\1 & 0 & -1\end{pmatrix}$, $\begin{pmatrix}-2 & 2 & 0\\1 & 2 & 0\\0 & 2 & -1\end{pmatrix}$, $\begin{pmatrix}-2 & 2 & 0\\1 & 2 & 0\\1 & 1 & -1\end{pmatrix}$, $\begin{pmatrix}-2 & 2 & 0\\1 & 2 & 0\\-1 & -1 & -1\end{pmatrix}$, $\begin{pmatrix}-2 & 2 & 0\\1 & 2 & 0\\0 & -2 & -1\end{pmatrix}$, $\begin{pmatrix}-2 & 2 & 0\\1 & 2 & 0\\-2 & 1 & -1\end{pmatrix}$, $\begin{pmatrix}-2 & 2 & 0\\1 & 2 & 0\\-1 & 0 & -1\end{pmatrix}$, $\begin{pmatrix}-2 & -2 & 0\\-1 & 2 & 0\\2 & 2 & -1\end{pmatrix}$, $\begin{pmatrix}-2 & -2 & 0\\-1 & 2 & 0\\-1 & 0 & -1\end{pmatrix}$, $\begin{pmatrix}-2 & -2 & 0\\-1 & 2 & 0\\-2 & -2 & -1\end{pmatrix}$, $\begin{pmatrix}-2 & -2 & 0\\-1 & 2 & 0\\-2 & -1 & -1\end{pmatrix}$, $\begin{pmatrix}-2 & -2 & 0\\-1 & 2 & 0\\-1 & 1 & -1\end{pmatrix}$, $\begin{pmatrix}-2 & -2 & 0\\-1 & 2 & 0\\0 & 2 & -1\end{pmatrix}$, $\begin{pmatrix}-2 & -2 & 0\\-1 & 2 & 0\\0 & -2 & -1\end{pmatrix}$, $\begin{pmatrix}-2 & -2 & 0\\-1 & 2 & 0\\1 & -1 & -1\end{pmatrix}$, $\begin{pmatrix}-2 & -2 & 0\\-1 & 2 & 0\\2 & 1 & -1\end{pmatrix}$, $\begin{pmatrix}-2 & -2 & 0\\-1 & 2 & 0\\1 & 0 & -1\end{pmatrix}$,
    \vspace{.7cm}\\
    \begin{align*}
    C_{10}=&\big\{M(1,0,c,\pm1,-1)\,:\,c\in\ZZ/5\big\}\\&\cup\big\{M(2,1,c,d,-1)\,:\,(c,d)\in\{(0\pm2),(\pm1,\pm1),(\pm1,\pm2),(\pm2,0),(\pm2,\pm1)\}\big\}&\\&\cup\big\{M(2,-1,c,d,-1)\,:\,(c,d)\in\{(0,\pm2),(\pm1,\mp1),(\pm1,\mp2),(\pm2,0),(\pm2,\mp1)\}\big\}\end{align*}composed of\\
    $\begin{pmatrix}1 & 0 & 0\\0 & -1 & 0\\0 & 1 & -1\end{pmatrix}$, $\begin{pmatrix}1 & 0 & 0\\0 & -1 & 0\\2 & 1 & -1\end{pmatrix}$, $\begin{pmatrix}1 & 0 & 0\\0 & -1 & 0\\-1 & 1 & -1\end{pmatrix}$, $\begin{pmatrix}1 & 0 & 0\\0 & -1 & 0\\1 & 1 & -1\end{pmatrix}$, $\begin{pmatrix}1 & 0 & 0\\0 & -1 & 0\\-2 & 1 & -1\end{pmatrix}$, $\begin{pmatrix}1 & 0 & 0\\0 & -1 & 0\\0 & -1 & -1\end{pmatrix}$, $\begin{pmatrix}1 & 0 & 0\\0 & -1 & 0\\2 & -1 & -1\end{pmatrix}$, $\begin{pmatrix}1 & 0 & 0\\0 & -1 & 0\\-1 & -1 & -1\end{pmatrix}$, $\begin{pmatrix}1 & 0 & 0\\0 & -1 & 0\\1 & -1 & -1\end{pmatrix}$, $\begin{pmatrix}1 & 0 & 0\\0 & -1 & 0\\-2 & -1 & -1\end{pmatrix}$, $\begin{pmatrix}2 & 2 & 0\\1 & -2 & 0\\-1 & -1 & -1\end{pmatrix}$, $\begin{pmatrix}2 & 2 & 0\\1 & -2 & 0\\1 & 1 & -1\end{pmatrix}$, $\begin{pmatrix}2 & 2 & 0\\1 & -2 & 0\\2 & 1 & -1\end{pmatrix}$, $\begin{pmatrix}2 & 2 & 0\\1 & -2 & 0\\-1 & -2 & -1\end{pmatrix}$, $\begin{pmatrix}2 & 2 & 0\\1 & -2 & 0\\2 & 0 & -1\end{pmatrix}$, $\begin{pmatrix}2 & 2 & 0\\1 & -2 & 0\\0 & -2 & -1\end{pmatrix}$, $\begin{pmatrix}2 & 2 & 0\\1 & -2 & 0\\0 & 2 & -1\end{pmatrix}$, $\begin{pmatrix}2 & 2 & 0\\1 & -2 & 0\\-2 & 0 & -1\end{pmatrix}$, $\begin{pmatrix}2 & 2 & 0\\1 & -2 & 0\\-2 & -1 & -1\end{pmatrix}$, $\begin{pmatrix}2 & 2 & 0\\1 & -2 & 0\\1 & 2 & -1\end{pmatrix}$, $\begin{pmatrix}2 & -2 & 0\\-1 & -2 & 0\\-1 & 1 & -1\end{pmatrix}$, $\begin{pmatrix}2 & -2 & 0\\-1 & -2 & 0\\1 & -1 & -1\end{pmatrix}$, $\begin{pmatrix}2 & -2 & 0\\-1 & -2 & 0\\-2 & 1 & -1\end{pmatrix}$, $\begin{pmatrix}2 & -2 & 0\\-1 & -2 & 0\\1 & -2 & -1\end{pmatrix}$, $\begin{pmatrix}2 & -2 & 0\\-1 & -2 & 0\\0 & -2 & -1\end{pmatrix}$, $\begin{pmatrix}2 & -2 & 0\\-1 & -2 & 0\\-2 & 0 & -1\end{pmatrix}$, $\begin{pmatrix}2 & -2 & 0\\-1 & -2 & 0\\0 & 2 & -1\end{pmatrix}$, $\begin{pmatrix}2 & -2 & 0\\-1 & -2 & 0\\2 & -1 & -1\end{pmatrix}$, $\begin{pmatrix}2 & -2 & 0\\-1 & -2 & 0\\-1 & 2 & -1\end{pmatrix}$, $\begin{pmatrix}2 & -2 & 0\\-1 & -2 & 0\\2 & 0 & -1\end{pmatrix}$,
    \vspace{.7cm}\\
    \begin{align*}
    C_{11}=&\big\{M(-1,0,\pm1,d,-1)\,:\, d\in\ZZ/5\big\}\\&\cup\big\{M(-2,1,c,d,-1)\,:\,(c,d)\in\{(0,\pm1),(\pm1,\mp1)(\pm1,\pm2),(\pm2,0),(\pm2,\pm2)\}\big\}\\&\cup\big\{M(-2,-1,c,d,-1)\,:\,(c,d)\in\{(0,\pm1),(\pm1,\pm1),(\pm1,\mp2),(\pm2,0),(\pm2,\mp2)\}\big\}\end{align*} composed of\\
    $\begin{pmatrix}-1 & 0 & 0\\0 & 1 & 0\\1 & 0 & -1\end{pmatrix}$, $\begin{pmatrix}-1 & 0 & 0\\0 & 1 & 0\\-1 & 0 & -1\end{pmatrix}$, $\begin{pmatrix}-1 & 0 & 0\\0 & 1 & 0\\1 & 2 & -1\end{pmatrix}$, $\begin{pmatrix}-1 & 0 & 0\\0 & 1 & 0\\-1 & 2 & -1\end{pmatrix}$, $\begin{pmatrix}-1 & 0 & 0\\0 & 1 & 0\\1 & -1 & -1\end{pmatrix}$, $\begin{pmatrix}-1 & 0 & 0\\0 & 1 & 0\\-1 & -1 & -1\end{pmatrix}$, $\begin{pmatrix}-1 & 0 & 0\\0 & 1 & 0\\1 & 1 & -1\end{pmatrix}$, $\begin{pmatrix}-1 & 0 & 0\\0 & 1 & 0\\-1 & 1 & -1\end{pmatrix}$, $\begin{pmatrix}-1 & 0 & 0\\0 & 1 & 0\\1 & -2 & -1\end{pmatrix}$, $\begin{pmatrix}-1 & 0 & 0\\0 & 1 & 0\\-1 & -2 & -1\end{pmatrix}$, $\begin{pmatrix}-2 & 2 & 0\\1 & 2 & 0\\1 & -1 & -1\end{pmatrix}$, $\begin{pmatrix}-2 & 2 & 0\\1 & 2 & 0\\-2 & -2 & -1\end{pmatrix}$, $\begin{pmatrix}-2 & 2 & 0\\1 & 2 & 0\\-1 & 1 & -1\end{pmatrix}$, $\begin{pmatrix}-2 & 2 & 0\\1 & 2 & 0\\0 & 1 & -1\end{pmatrix}$, $\begin{pmatrix}-2 & 2 & 0\\1 & 2 & 0\\-1 & -2 & -1\end{pmatrix}$, $\begin{pmatrix}-2 & 2 & 0\\1 & 2 & 0\\2 & 0 & -1\end{pmatrix}$, $\begin{pmatrix}-2 & 2 & 0\\1 & 2 & 0\\1 & 2 & -1\end{pmatrix}$, $\begin{pmatrix}-2 & 2 & 0\\1 & 2 & 0\\-2 & 0 & -1\end{pmatrix}$, $\begin{pmatrix}-2 & 2 & 0\\1 & 2 & 0\\0 & -1 & -1\end{pmatrix}$, $\begin{pmatrix}-2 & 2 & 0\\1 & 2 & 0\\2 & 2 & -1\end{pmatrix}$, $\begin{pmatrix}-2 & -2 & 0\\-1 & 2 & 0\\1 & 1 & -1\end{pmatrix}$, $\begin{pmatrix}-2 & -2 & 0\\-1 & 2 & 0\\-1 & -1 & -1\end{pmatrix}$, $\begin{pmatrix}-2 & -2 & 0\\-1 & 2 & 0\\0 & 1 & -1\end{pmatrix}$, $\begin{pmatrix}-2 & -2 & 0\\-1 & 2 & 0\\2 & -2 & -1\end{pmatrix}$, $\begin{pmatrix}-2 & -2 & 0\\-1 & 2 & 0\\1 & -2 & -1\end{pmatrix}$, $\begin{pmatrix}-2 & -2 & 0\\-1 & 2 & 0\\-2 & 0 & -1\end{pmatrix}$, $\begin{pmatrix}-2 & -2 & 0\\-1 & 2 & 0\\-1 & 2 & -1\end{pmatrix}$, $\begin{pmatrix}-2 & -2 & 0\\-1 & 2 & 0\\2 & 0 & -1\end{pmatrix}$, $\begin{pmatrix}-2 & -2 & 0\\-1 & 2 & 0\\-2 & 2 & -1\end{pmatrix}$, $\begin{pmatrix}-2 & -2 & 0\\-1 & 2 & 0\\0 & -1 & -1\end{pmatrix}$,
    \vspace{.7cm}\\
    \begin{align*}
    C_{12}=&\big\{M(1,0,c,d,-1)\,:\,(c,d)\in\{(0,\pm2),(\pm1,\pm2),(\pm1,\mp2),(\pm2,\pm2),(\pm2,\mp2)\}\big\}\\
    &\cup\big\{M(2,1,c,d,-1)\,:\,(c,d)\in\{(0,\pm1),(\pm1,0)(\pm1,\mp2),(\pm2,\mp1),(\pm2,\pm2)\}\big\}\\
    &\cup\big\{M(2,-1,c,d,-1)\,:\,(c,d)\in\{(0,\pm1),(\pm1,0),(\pm1,\pm2),(\pm2,\pm1),(\pm2,\mp2)\}\big\}
    \end{align*}composed of\\
    $\begin{pmatrix}1 & 0 & 0\\0 & -1 & 0\\0 & -2 & -1\end{pmatrix}$, $\begin{pmatrix}1 & 0 & 0\\0 & -1 & 0\\2 & -2 & -1\end{pmatrix}$, $\begin{pmatrix}1 & 0 & 0\\0 & -1 & 0\\-1 & -2 & -1\end{pmatrix}$, $\begin{pmatrix}1 & 0 & 0\\0 & -1 & 0\\1 & -2 & -1\end{pmatrix}$, $\begin{pmatrix}1 & 0 & 0\\0 & -1 & 0\\-2 & -2 & -1\end{pmatrix}$, $\begin{pmatrix}1 & 0 & 0\\0 & -1 & 0\\0 & 2 & -1\end{pmatrix}$, $\begin{pmatrix}1 & 0 & 0\\0 & -1 & 0\\2 & 2 & -1\end{pmatrix}$, $\begin{pmatrix}1 & 0 & 0\\0 & -1 & 0\\-1 & 2 & -1\end{pmatrix}$, $\begin{pmatrix}1 & 0 & 0\\0 & -1 & 0\\1 & 2 & -1\end{pmatrix}$, $\begin{pmatrix}1 & 0 & 0\\0 & -1 & 0\\-2 & 2 & -1\end{pmatrix}$, $\begin{pmatrix}2 & 2 & 0\\1 & -2 & 0\\-2 & -2 & -1\end{pmatrix}$, $\begin{pmatrix}2 & 2 & 0\\1 & -2 & 0\\2 & 2 & -1\end{pmatrix}$, $\begin{pmatrix}2 & 2 & 0\\1 & -2 & 0\\1 & 0 & -1\end{pmatrix}$, $\begin{pmatrix}2 & 2 & 0\\1 & -2 & 0\\0 & -1 & -1\end{pmatrix}$, $\begin{pmatrix}2 & 2 & 0\\1 & -2 & 0\\-1 & 2 & -1\end{pmatrix}$, $\begin{pmatrix}2 & 2 & 0\\1 & -2 & 0\\-2 & 1 & -1\end{pmatrix}$, $\begin{pmatrix}2 & 2 & 0\\1 & -2 & 0\\1 & -2 & -1\end{pmatrix}$, $\begin{pmatrix}2 & 2 & 0\\1 & -2 & 0\\2 & -1 & -1\end{pmatrix}$, $\begin{pmatrix}2 & 2 & 0\\1 & -2 & 0\\0 & 1 & -1\end{pmatrix}$, $\begin{pmatrix}2 & 2 & 0\\1 & -2 & 0\\-1 & 0 & -1\end{pmatrix}$, $\begin{pmatrix}2 & -2 & 0\\-1 & -2 & 0\\-2 & 2 & -1\end{pmatrix}$, $\begin{pmatrix}2 & -2 & 0\\-1 & -2 & 0\\2 & -2 & -1\end{pmatrix}$, $\begin{pmatrix}2 & -2 & 0\\-1 & -2 & 0\\0 & -1 & -1\end{pmatrix}$, $\begin{pmatrix}2 & -2 & 0\\-1 & -2 & 0\\-1 & 0 & -1\end{pmatrix}$, $\begin{pmatrix}2 & -2 & 0\\-1 & -2 & 0\\1 & 2 & -1\end{pmatrix}$, $\begin{pmatrix}2 & -2 & 0\\-1 & -2 & 0\\2 & 1 & -1\end{pmatrix}$, $\begin{pmatrix}2 & -2 & 0\\-1 & -2 & 0\\-1 & -2 & -1\end{pmatrix}$, $\begin{pmatrix}2 & -2 & 0\\-1 & -2 & 0\\-2 & -1 & -1\end{pmatrix}$, $\begin{pmatrix}2 & -2 & 0\\-1 & -2 & 0\\1 & 0 & -1\end{pmatrix}$, $\begin{pmatrix}2 & -2 & 0\\-1 & -2 & 0\\0 & 1 & -1\end{pmatrix}$.
    \item Two classes of order $50$,
    \[C_{13}=\big\{M(-2,\pm1,c,d,1)\,:\,c,d\in\ZZ/5\big\}\]composed of\\
    $\begin{pmatrix}-2 & 2 & 0\\-1 & -2 & 0\\0 & 0 & 1\end{pmatrix}$, $\begin{pmatrix}-2 & -2 & 0\\1 & -2 & 0\\0 & 0 & 1\end{pmatrix}$, $\begin{pmatrix}-2 & 2 & 0\\-1 & -2 & 0\\1 & -1 & 1\end{pmatrix}$, $\begin{pmatrix}-2 & -2 & 0\\1 & -2 & 0\\1 & 1 & 1\end{pmatrix}$, $\begin{pmatrix}-2 & 2 & 0\\-1 & -2 & 0\\2 & -2 & 1\end{pmatrix}$, $\begin{pmatrix}-2 & -2 & 0\\1 & -2 & 0\\2 & 2 & 1\end{pmatrix}$, $\begin{pmatrix}-2 & 2 & 0\\-1 & -2 & 0\\-2 & 2 & 1\end{pmatrix}$, $\begin{pmatrix}-2 & -2 & 0\\1 & -2 & 0\\-2 & -2 & 1\end{pmatrix}$, $\begin{pmatrix}-2 & 2 & 0\\-1 & -2 & 0\\-1 & 1 & 1\end{pmatrix}$, $\begin{pmatrix}-2 & -2 & 0\\1 & -2 & 0\\-1 & -1 & 1\end{pmatrix}$, $\begin{pmatrix}-2 & 2 & 0\\-1 & -2 & 0\\-2 & 1 & 1\end{pmatrix}$, $\begin{pmatrix}-2 & -2 & 0\\1 & -2 & 0\\2 & 1 & 1\end{pmatrix}$, $\begin{pmatrix}-2 & 2 & 0\\-1 & -2 & 0\\-1 & 0 & 1\end{pmatrix}$, $\begin{pmatrix}-2 & -2 & 0\\1 & -2 & 0\\-2 & 2 & 1\end{pmatrix}$, $\begin{pmatrix}-2 & 2 & 0\\-1 & -2 & 0\\0 & -1 & 1\end{pmatrix}$, $\begin{pmatrix}-2 & -2 & 0\\1 & -2 & 0\\-1 & -2 & 1\end{pmatrix}$, $\begin{pmatrix}-2 & 2 & 0\\-1 & -2 & 0\\1 & -2 & 1\end{pmatrix}$, $\begin{pmatrix}-2 & -2 & 0\\1 & -2 & 0\\0 & -1 & 1\end{pmatrix}$, $\begin{pmatrix}-2 & 2 & 0\\-1 & -2 & 0\\2 & 2 & 1\end{pmatrix}$, $\begin{pmatrix}-2 & -2 & 0\\1 & -2 & 0\\1 & 0 & 1\end{pmatrix}$, $\begin{pmatrix}-2 & 2 & 0\\-1 & -2 & 0\\1 & 2 & 1\end{pmatrix}$, $\begin{pmatrix}-2 & -2 & 0\\1 & -2 & 0\\-1 & 2 & 1\end{pmatrix}$, $\begin{pmatrix}-2 & 2 & 0\\-1 & -2 & 0\\2 & 1 & 1\end{pmatrix}$, $\begin{pmatrix}-2 & -2 & 0\\1 & -2 & 0\\0 & -2 & 1\end{pmatrix}$, $\begin{pmatrix}-2 & 2 & 0\\-1 & -2 & 0\\-2 & 0 & 1\end{pmatrix}$, $\begin{pmatrix}-2 & -2 & 0\\1 & -2 & 0\\1 & -1 & 1\end{pmatrix}$, $\begin{pmatrix}-2 & 2 & 0\\-1 & -2 & 0\\-1 & -1 & 1\end{pmatrix}$, $\begin{pmatrix}-2 & -2 & 0\\1 & -2 & 0\\2 & 0 & 1\end{pmatrix}$, $\begin{pmatrix}-2 & 2 & 0\\-1 & -2 & 0\\0 & -2 & 1\end{pmatrix}$, $\begin{pmatrix}-2 & -2 & 0\\1 & -2 & 0\\-2 & 1 & 1\end{pmatrix}$, $\begin{pmatrix}-2 & 2 & 0\\-1 & -2 & 0\\-1 & -2 & 1\end{pmatrix}$, $\begin{pmatrix}-2 & -2 & 0\\1 & -2 & 0\\1 & -2 & 1\end{pmatrix}$, $\begin{pmatrix}-2 & 2 & 0\\-1 & -2 & 0\\0 & 2 & 1\end{pmatrix}$, $\begin{pmatrix}-2 & -2 & 0\\1 & -2 & 0\\2 & -1 & 1\end{pmatrix}$, $\begin{pmatrix}-2 & 2 & 0\\-1 & -2 & 0\\1 & 1 & 1\end{pmatrix}$, $\begin{pmatrix}-2 & -2 & 0\\1 & -2 & 0\\-2 & 0 & 1\end{pmatrix}$, $\begin{pmatrix}-2 & 2 & 0\\-1 & -2 & 0\\2 & 0 & 1\end{pmatrix}$, $\begin{pmatrix}-2 & -2 & 0\\1 & -2 & 0\\-1 & 1 & 1\end{pmatrix}$, $\begin{pmatrix}-2 & 2 & 0\\-1 & -2 & 0\\-2 & -1 & 1\end{pmatrix}$, $\begin{pmatrix}-2 & -2 & 0\\1 & -2 & 0\\0 & 2 & 1\end{pmatrix}$, $\begin{pmatrix}-2 & 2 & 0\\-1 & -2 & 0\\2 & -1 & 1\end{pmatrix}$, $\begin{pmatrix}-2 & -2 & 0\\1 & -2 & 0\\-2 & -1 & 1\end{pmatrix}$, $\begin{pmatrix}-2 & 2 & 0\\-1 & -2 & 0\\-2 & -2 & 1\end{pmatrix}$, $\begin{pmatrix}-2 & -2 & 0\\1 & -2 & 0\\-1 & 0 & 1\end{pmatrix}$, $\begin{pmatrix}-2 & 2 & 0\\-1 & -2 & 0\\-1 & 2 & 1\end{pmatrix}$, $\begin{pmatrix}-2 & -2 & 0\\1 & -2 & 0\\0 & 1 & 1\end{pmatrix}$, $\begin{pmatrix}-2 & 2 & 0\\-1 & -2 & 0\\0 & 1 & 1\end{pmatrix}$, $\begin{pmatrix}-2 & -2 & 0\\1 & -2 & 0\\1 & 2 & 1\end{pmatrix}$, $\begin{pmatrix}-2 & 2 & 0\\-1 & -2 & 0\\1 & 0 & 1\end{pmatrix}$, $\begin{pmatrix}-2 & -2 & 0\\1 & -2 & 0\\2 & -2 & 1\end{pmatrix}$,
    \vspace{.7cm}\\
    \[C_{14}=\big\{M(2,\pm1,c,d,1)\,:\,c,d\in\ZZ/5\big\}\]composed of\\
    $\begin{pmatrix}2 & -2 & 0\\1 & 2 & 0\\0 & 0 & 1\end{pmatrix}$, $\begin{pmatrix}2 & 2 & 0\\-1 & 2 & 0\\0 & 0 & 1\end{pmatrix}$, $\begin{pmatrix}2 & -2 & 0\\1 & 2 & 0\\-1 & 1 & 1\end{pmatrix}$, $\begin{pmatrix}2 & 2 & 0\\-1 & 2 & 0\\-1 & -1 & 1\end{pmatrix}$, $\begin{pmatrix}2 & -2 & 0\\1 & 2 & 0\\-2 & 2 & 1\end{pmatrix}$, $\begin{pmatrix}2 & 2 & 0\\-1 & 2 & 0\\-2 & -2 & 1\end{pmatrix}$, $\begin{pmatrix}2 & -2 & 0\\1 & 2 & 0\\2 & -2 & 1\end{pmatrix}$, $\begin{pmatrix}2 & 2 & 0\\-1 & 2 & 0\\2 & 2 & 1\end{pmatrix}$, $\begin{pmatrix}2 & -2 & 0\\1 & 2 & 0\\1 & -1 & 1\end{pmatrix}$, $\begin{pmatrix}2 & 2 & 0\\-1 & 2 & 0\\1 & 1 & 1\end{pmatrix}$, $\begin{pmatrix}2 & -2 & 0\\1 & 2 & 0\\2 & -1 & 1\end{pmatrix}$, $\begin{pmatrix}2 & 2 & 0\\-1 & 2 & 0\\-2 & -1 & 1\end{pmatrix}$, $\begin{pmatrix}2 & -2 & 0\\1 & 2 & 0\\1 & 0 & 1\end{pmatrix}$, $\begin{pmatrix}2 & 2 & 0\\-1 & 2 & 0\\2 & -2 & 1\end{pmatrix}$, $\begin{pmatrix}2 & -2 & 0\\1 & 2 & 0\\0 & 1 & 1\end{pmatrix}$, $\begin{pmatrix}2 & 2 & 0\\-1 & 2 & 0\\1 & 2 & 1\end{pmatrix}$, $\begin{pmatrix}2 & -2 & 0\\1 & 2 & 0\\-1 & 2 & 1\end{pmatrix}$, $\begin{pmatrix}2 & 2 & 0\\-1 & 2 & 0\\0 & 1 & 1\end{pmatrix}$, $\begin{pmatrix}2 & -2 & 0\\1 & 2 & 0\\-2 & -2 & 1\end{pmatrix}$, $\begin{pmatrix}2 & 2 & 0\\-1 & 2 & 0\\-1 & 0 & 1\end{pmatrix}$, $\begin{pmatrix}2 & -2 & 0\\1 & 2 & 0\\-1 & -2 & 1\end{pmatrix}$, $\begin{pmatrix}2 & 2 & 0\\-1 & 2 & 0\\1 & -2 & 1\end{pmatrix}$, $\begin{pmatrix}2 & -2 & 0\\1 & 2 & 0\\-2 & -1 & 1\end{pmatrix}$, $\begin{pmatrix}2 & 2 & 0\\-1 & 2 & 0\\0 & 2 & 1\end{pmatrix}$, $\begin{pmatrix}2 & -2 & 0\\1 & 2 & 0\\2 & 0 & 1\end{pmatrix}$, $\begin{pmatrix}2 & 2 & 0\\-1 & 2 & 0\\-1 & 1 & 1\end{pmatrix}$, $\begin{pmatrix}2 & -2 & 0\\1 & 2 & 0\\1 & 1 & 1\end{pmatrix}$, $\begin{pmatrix}2 & 2 & 0\\-1 & 2 & 0\\-2 & 0 & 1\end{pmatrix}$, $\begin{pmatrix}2 & -2 & 0\\1 & 2 & 0\\0 & 2 & 1\end{pmatrix}$, $\begin{pmatrix}2 & 2 & 0\\-1 & 2 & 0\\2 & -1 & 1\end{pmatrix}$, $\begin{pmatrix}2 & -2 & 0\\1 & 2 & 0\\1 & 2 & 1\end{pmatrix}$, $\begin{pmatrix}2 & 2 & 0\\-1 & 2 & 0\\-1 & 2 & 1\end{pmatrix}$, $\begin{pmatrix}2 & -2 & 0\\1 & 2 & 0\\0 & -2 & 1\end{pmatrix}$, $\begin{pmatrix}2 & 2 & 0\\-1 & 2 & 0\\-2 & 1 & 1\end{pmatrix}$, $\begin{pmatrix}2 & -2 & 0\\1 & 2 & 0\\-1 & -1 & 1\end{pmatrix}$, $\begin{pmatrix}2 & 2 & 0\\-1 & 2 & 0\\2 & 0 & 1\end{pmatrix}$, $\begin{pmatrix}2 & -2 & 0\\1 & 2 & 0\\-2 & 0 & 1\end{pmatrix}$, $\begin{pmatrix}2 & 2 & 0\\-1 & 2 & 0\\1 & -1 & 1\end{pmatrix}$, $\begin{pmatrix}2 & -2 & 0\\1 & 2 & 0\\2 & 1 & 1\end{pmatrix}$, $\begin{pmatrix}2 & 2 & 0\\-1 & 2 & 0\\0 & -2 & 1\end{pmatrix}$, $\begin{pmatrix}2 & -2 & 0\\1 & 2 & 0\\-2 & 1 & 1\end{pmatrix}$, $\begin{pmatrix}2 & 2 & 0\\-1 & 2 & 0\\2 & 1 & 1\end{pmatrix}$, $\begin{pmatrix}2 & -2 & 0\\1 & 2 & 0\\2 & 2 & 1\end{pmatrix}$, $\begin{pmatrix}2 & 2 & 0\\-1 & 2 & 0\\1 & 0 & 1\end{pmatrix}$, $\begin{pmatrix}2 & -2 & 0\\1 & 2 & 0\\1 & -2 & 1\end{pmatrix}$, $\begin{pmatrix}2 & 2 & 0\\-1 & 2 & 0\\0 & -1 & 1\end{pmatrix}$, $\begin{pmatrix}2 & -2 & 0\\1 & 2 & 0\\0 & -1 & 1\end{pmatrix}$, $\begin{pmatrix}2 & 2 & 0\\-1 & 2 & 0\\-1 & -2 & 1\end{pmatrix}$, $\begin{pmatrix}2 & -2 & 0\\1 & 2 & 0\\-1 & 0 & 1\end{pmatrix}$, $\begin{pmatrix}2 & 2 & 0\\-1 & 2 & 0\\-2 & 2 & 1\end{pmatrix}$.
\end{itemize}

\section{Proof of Proposition \ref{prop:sum296}}\label{degreeP5}
To solve $\sum_{\alpha=5}^{14}d_\alpha^2=296$ for $d_\alpha\in\{2,3,4,6,12\}$ up to permutations, we proceed analogously to the Proof of Proposition \ref{3dregrees} (Appendix \ref{degreeP3}).

\par \textit{Case $d_{14}=12$}: this gives $\sum_{\alpha=5}^{13}d_\alpha^2=296-12^2=152$.
\begin{itemize}
    \item $d_{13}=12:$ $\sum_{\alpha=5}^{12}d_\alpha^2=152-12^2=8<8\cdot2^2=\min\left(\sum_{\alpha=5}^{12}d_\alpha^2\right)$, impossible.
    \item $d_{13}=6$: $\sum_{\alpha=5}^{12}d_\alpha^2=152-6^2=116$.
    \begin{itemize}
        \item $d_{12}=6$: $\sum_{\alpha=5}^{11}d_\alpha^2=116-6^2=80$.
        \begin{itemize}
            \item $d_{11}=6$: $\sum_{\alpha=5}^{10}d_\alpha^2=80-6^2=44$.
            \begin{itemize}
                \item $d_{10}=6$: $\sum_{\alpha=5}^9d_\alpha^2=44-6^2=8<5\cdot2^2$ impossible.
                \item $d_{10}=4$: $\sum_{\alpha=5}^9d_\alpha^2=44-4^2=28$.
                \begin{enumerate}
                    \item $d_9=4$: $\sum_{\alpha=5}^8d_\alpha^2=28-4^2=12<4\cdot2^2$ impossible.
                    \item $d_9=3$: $\sum_{\alpha=5}^8d_\alpha^2=28-3^2=19$.
                    \begin{enumerate}
                        \item $d_8=3$: $\sum_{\alpha=5}^7d_\alpha^2=19-3^2=10<3\cdot2^2$, impossible.
                        \item $d_8=2$: this forces $\sum_{\alpha=5}^8d_\alpha^2=4\cdot2^2=16=19$, absurd.
                    \end{enumerate}
                    \item $d_9=2$: this forces $\sum_{\alpha=5}^9d_\alpha^2=5\cdot2^2=20=28$, absurd.
                \end{enumerate}
                \item $d_{10}=3$: $\sum_{\alpha=5}^9d_\alpha^2=44-3^2=35$.
                \begin{enumerate}
                    \item $d_9=3$: $\sum_{\alpha=5}^8d_\alpha^2=35-3^2=26$.
                    \begin{enumerate}
                        \item $d_8=3$:  $\sum_{\alpha=5}^7d_\alpha^2=26-3^2=17$.
                        \\$d_7=3$ gives $d_5^2+d_6^2=17-3^2=8\Rightarrow d_5=d_6=2$, \underline{solution}.\\
                        $d_7=2$ forces $\sum_{\alpha=5}^7d_\alpha^2=3\cdot2^2=12=17$, absurd.
                        \item $d_8=2$: this forces $\sum_{\alpha=5}^8d_\alpha^2=4\cdot2^2=16=26$, absurd.
                    \end{enumerate}
                    \item $d_9=2$: this forces $\sum_{\alpha=5}^9d_\alpha^2=5\cdot2^2=20=35$.
                \end{enumerate}
                \item $d_{10}=2$: this forces $\sum_{\alpha=5}^{10}d_\alpha^2=6\cdot2^2=24=44$, absurd.
            \end{itemize}
            \item $d_{11}=4$: $\sum_{\alpha=5}^{10}d_\alpha^2=80-4^2=64$.
            \begin{itemize}
                \item $d_{10}=4$: $\sum_{\alpha=5}^9d_\alpha^2=64-4^2=48$.
                \begin{enumerate}
                    \item $d_9=4$: $\sum_{\alpha=5}^8d_\alpha^2=48-4^2=32$.
                    \begin{enumerate}
                        \item $d_8=4$: $\sum_{\alpha=5}^7d_\alpha^2=32-4^2=16$.\\
                        If $d_7=3,4$, then $d_5^2+d_6^2<8$ is impossible.\\
                        $d_7=2$ forces $\sum_{\alpha=5}^7d_\alpha^2=3\cdot2^2=12=16$, absurd.
                        \item $d_8=3$: $\sum_{\alpha=5}^7d_\alpha^2=32-3^2=23$.\\
                        If $d_7=3$, then $d_5^2+d_6^2=23-3^2=14$: this is impossible both for $d_6=3$ (for which $d_5^2=14-3^2=5\Rightarrow d_5\not\in\ZZ$) and for $d_6=2$ (which forces $d_5^2+d_6^2=2\cdot 2^2=8=14$).\\
                        $d_7=2$ forces $\sum_{\alpha=5}^7d_\alpha^2=3\cdot2^2=12=23$, absurd.
                        \item $d_8=2$: this forces $\sum_{\alpha=5}^8d_\alpha^2=4\cdot2^2=16=32$, absurd.
                    \end{enumerate}
                    \item $d_9=3$: $\sum_{\alpha=5}^8d_\alpha^2=48-3^2=39$.
                    \begin{enumerate}
                        \item $d_8=3$: $\sum_{\alpha=5}^7d_\alpha^2=39-3^2=30>3\cdot3^2=\max\left(\sum_{\alpha=5}^7d_\alpha^2\right)$, impossible.
                        \item $d_8=3$: this forces $\sum_{\alpha=5}^8d_\alpha^2=4\cdot2^2=16=39$, absurd.
                    \end{enumerate}
                    \item $d_9=2$: this forces $\sum_{\alpha=5}^9d_\alpha^2=5\cdot2^2=20=48$, absurd.
                \end{enumerate}
                \item $d_{10}=3$: $\sum_{\alpha=5}^9d_\alpha^2=64-9=55>5\cdot3^2=\max\left(\sum_{\alpha=5}^9d_\alpha^2\right)$, impossible.
                \item $d_{10}=2$: this forces $\sum_{\alpha=5}^{10}d_\alpha^2=6\cdot2^2=24=64$, absurd.
            \end{itemize}
            \item $d_{11}=3$: $\sum_{\alpha=5}^{10}d_\alpha^2=80-3^2=71>6\cdot3^2=\max\left(\sum_{\alpha=5}^{10}d_\alpha^2\right)$, impossible.
            \item $d_{11}=2$: this forces $\sum_{\alpha=5}^{11}d_\alpha^2=7\cdot2^2=28=80$, absurd.
        \end{itemize}
        \item $d_{12}=4$: $\sum_{\alpha=5}^{11}d_\alpha^2=116-4^2=100$.
        \begin{itemize}
            \item $d_{11}=4$: $\sum_{\alpha=5}^{10}d_\alpha^2=100-4^2=84$.
            \begin{itemize}
                \item $d_{10}=4$: $\sum_{\alpha=5}^9d_\alpha^2=84-4^2=68$.
                \begin{enumerate}
                    \item $d_9=4$: $\sum_{\alpha=5}^8d_\alpha^2=68-4^2=52$.
                    \begin{enumerate}
                        \item $d_8=4$: $\sum_{\alpha=5}^7d_\alpha^2=52-4^2=36$. It must be $d_7=4$ otherwise at most $\sum_{\alpha=5}^7d_\alpha^2=3\cdot3^2=27<36$. Then, $d_5^2+d_6^2=36-4^2=20$, which implies $d_6=4,d_5=2$, \underline{solution}.
                        \item $d_8=3$: $\sum_{\alpha=5}^7d_\alpha^2=52-3^2=43>3\cdot3^2=\max\left(\sum_{\alpha=5}^7d_\alpha^2\right)$, impossible.
                        \item $d_8=2$: this forces $\sum_{\alpha=5}^8d_\alpha^2=4\cdot2^2=16=52$, absurd.
                    \end{enumerate}
                    \item $d_9=3$: $\sum_{\alpha=5}^8d_\alpha^2=68-3^2=59>4\cdot3^2=\max\left(\sum_{\alpha=5}^8d_\alpha^2\right)$, impossible.
                    \item $d_9=2$: this forces $\sum_{\alpha=5}^9d_\alpha^2=5\cdot2^2=20=52$, absurd.
                \end{enumerate}
                \item $d_{10}=3$: $\sum_{\alpha=5}^9d_\alpha^2=84-3^2=75>5\cdot3^2=\max\left(\sum_{\alpha=5}^9d_\alpha^2\right)$, impossible.
                \item $d_{10}=2$: this forces $\sum_{\alpha=5}^{10}d_\alpha^2=24=84$, absurd.
            \end{itemize}
            \item $d_{11}=3$: $\sum_{\alpha=5}^{10}d_\alpha^2=100-3^2=91>6\cdot3^2$, impossible.
            \item $d_{11}=2$: this forces $\sum_{\alpha=5}^{11}d_\alpha^2=28=100$, absurd.
        \end{itemize}
        \item $d_{12}=3$: $\sum_{\alpha=5}^{12}d_\alpha^2=116-3^2=107>8\cdot3^2$, impossible.
        \item $d_{12}=2$: this forces $\sum_{\alpha=5}^{12}d_\alpha^2=32=112$, absurd.
    \end{itemize}
    \item $d_{13}=4$: $\sum_{\alpha=5}^{12}d_\alpha^2=152-4^2=136>8\cdot4^2$, impossible.
    \item $d_{13}=3$: $\sum_{\alpha=5}^{12}d_\alpha^2=152-3^2=143>8\cdot3^2$, impossible.
    \item $d_{13}=2$: this forces $\sum_{\alpha=5}^{13}d_\alpha^2=36=152$, absurd.
\end{itemize}

{\it Case $d_{14}=6$}: this gives $\sum_{\alpha=5}^{13}d_\alpha^2=296-6^2=260$.
\begin{itemize}
    \item $d_{13}=6$: $\sum_{\alpha=5}^{12}d_\alpha^2=260-6^2=224$.
\begin{itemize}
    \item $d_{12}=6$: $\sum_{\alpha=5}^{11}d_\alpha^2=224-6^2=188$.
    \begin{itemize}
        \item $d_{11}=6$: $\sum_{\alpha=5}^{10}d_\alpha^2=188-6^2=152$.
        \begin{itemize}
            \item $d_{10}=6$: $\sum_{\alpha=5}^9d_\alpha^2=152-6^2=116$.
            \begin{enumerate}
                \item $d_9=6$: $\sum_{\alpha=5}^8d_\alpha^2=116-6^2=80$.
                \begin{enumerate}
                    \item $d_8=6$: $\sum_{\alpha=5}^7d_\alpha^2=80-6^2=44$. \\ If $d_7=6$ then, $d_5^2+d_6^2=44-6^2=8$, which implies $d_5=d_6=2$, \underline{solution}.\\
                    If $d_7=4$ then, $d_5^2+d_6^2=44-4^2=28$ is impossible: $28<2\cdot4^2$ and $28>3^2+4^2$.\\
                    If $d_7=3$ then, $d_5^2+d_6^2=44-3^2=35>2\cdot3^2$, impossible.\\
                    $d_7=2$ forces $\sum_{\alpha=5}^7d_\alpha^2=12=44$, absurd.
                    \item $d_8=3,4$:
                    $\sum_{\alpha=5}^7d_\alpha^2 
                    >3\cdot4^2 
                    $, impossible.
                    \item $d_8=2$: this forces $\sum_{\alpha=5}^8d_\alpha^2=16=80$, absurd.
                \end{enumerate}
                \item $d_9=3,4$: $\sum_{\alpha=5}^8d_\alpha^2 
                >4\cdot4^2
                $, impossible.
                \item $d_9=2$: this forces $\sum_{\alpha=5}^9d_\alpha^2=20=116$, absurd.
            \end{enumerate}
            \item $d_{10}=3,4$: $\sum_{\alpha=5}^9d_\alpha^2>5\cdot4^2$, impossible.
            \item $d_{10}=2$: this forces $\sum_{\alpha=5}^{10}d_\alpha^2=24=152$, absurd.
        \end{itemize}
        \item $d_{11}=3,4$: $\sum_{\alpha=5}^{10}d_\alpha^2>6\cdot4^2$, impossible.
        \item $d_{11}=2$: this forces $\sum_{\alpha=5}^{11}d_\alpha^2=28=188$, absurd.
    \end{itemize}
    \item $d_{12}=3,4$: $\sum_{\alpha=5}^{11}d_\alpha^2>7\cdot4^2$, impossible.
    \item $d_{12}=2$: this forces $\sum_{\alpha=5}^{12}d_\alpha^2=32=224$, absurd.
\end{itemize}
\item $d_{13}=3,4$: $\sum_{\alpha=5}^{12}d_\alpha^2>8\cdot4^2$, impossible.
\item $d_{13}=2$: this forces $\sum_{\alpha=5}^{13}d_\alpha^2=36=260$, absurd.
\end{itemize}

{\it Cases $d_{14}=3,4$:} $\sum_{\alpha=5}^{13}d_\alpha^2>9\cdot4^2$, impossible.

{\it Case $d_{12}=2$:} this forces $\sum_{\alpha=5}^{14}d_\alpha^2=40=296$, absurd.
\\
\\ We found three distinct solutions:\\
$d_5=d_6=2,\,d_7=d_8=d_9=d_{10}=3,\,d_{11}=d_{12}=d_{13}=6,\,d_{14}=12$,\\
$d_5=d_6=2,\,d_7=d_8=d_9=d_{10}=d_{11}=d_{12}=d_{13}=d_{14}=6$, and\\
$d_5=2,\,d_6=d_7=d_8=d_9=d_{10}=d_{11}=d_{12}=4,\,d_{13}=6,\,d_{14}=12$.


\bibliographystyle{unsrt}


\end{document}